\documentclass[runningheads]{llncs}

\usepackage{subcaption}
\usepackage[T1]{fontenc}
\usepackage{graphicx}
\usepackage{amssymb,amsmath}
\usepackage{xcolor}
\usepackage{mathtools}
\usepackage{booktabs}
\usepackage{array,ragged2e}
\usepackage[skins,breakable]{tcolorbox}
\newtcolorbox{mybox}[2][]{%
    toprule=0pt,
    bottomrule=0pt,
    rightrule=0pt,
    leftrule=0pt,
    #1
}
\usepackage{thmtools}

\usepackage{tikz}
\usetikzlibrary{positioning,automata}
\usepackage{wrapfig}
\allowdisplaybreaks

\spnewtheorem{assumption}{Assumption}{\bfseries}{\itshape}

\renewcommand{\SS}[1]{{\color{cyan}\textbf{SS:}#1}}

\newenvironment{claimproof}[1]{\par\noindent\textit{Proof of claim:}\space#1}{\hfill $\blacksquare$}
\newcommand{\set}[1]{\left\lbrace #1\right\rbrace}
\newcommand{\modelsAS}{\models_{\mathsf{a.s.\!}}}

\renewcommand{\Pr}{\mathrm{Pr}}
\newcommand{\supp}{\mathsf{Supp}}
\newcommand{\pref}{\mathit{Pref}}

\newcommand{\zug}[1]{\langle #1\rangle}
\newcommand{\stam}[1]{}

\newcommand{\G}{\mathcal{G}}
\newcommand{\V}{V}
\newcommand{\E}{E}
\newcommand{\pathsfin}{\mathit{Paths}_{\mathsf{fin}}}
\newcommand{\pathsinf}{\mathit{Paths}_{\mathsf{inf}}}
\newcommand{\pol}{\pi}
\newcommand{\id}{\mathsf{id}}

\newcommand{\Pol}{\Pi}
\newcommand{\SAP}{\widetilde{\Pi}}
\newcommand{\PolFH}[1]{\Pi^{\mathrm{FH}}_N}
\newcommand{\PolS}[1]{\Pi^{\mathrm{S}}}
\newcommand{\PolP}{\Pi^{\mathrm{Path}}}

\newcommand{\SAPolFH}[1]{\widetilde{\Pi}^{\mathrm{FH}}_N}
\newcommand{\SAPolS}[1]{\widetilde{\Pi}^{\mathrm{S}}}
\newcommand{\SAPolP}{\widetilde{\Pi}^{\mathrm{Path}}}

\newcommand{\path}{\rho}
\newcommand{\vinit}{v_{\mathsf{init}}}
\newcommand{\Cyl}{\mathrm{Cyl}}

\newcommand{\spec}{\varphi}

\newcommand{\reach}{\mathit{Reach}}
\newcommand{\safe}{\mathit{Safe}}

\newcommand{\always}{\mathrm{G}}
\newcommand{\eventually}{\mathrm{F}}

\newcommand{\Distribution}{\mathcal{D}}

\newcommand{\implementation}{\mathcal{I}}
\newcommand{\specsafe}{\spec^\mathsf{safe}}
\newcommand{\speclive}{\spec^\mathsf{live}}
\newcommand{\Conv}{\text{Conv}}

\usepackage{orcidlink}
\newcommand{\sched}{\sigma}

\newcommand{\sh}{\mathsf{shld}}

\newcommand{\thetatwo}[2]{\theta_{#1, #2}}

\renewcommand{\L}{\mathcal{L}}

\makeatletter
\renewcommand\p@subfigure{}
\makeatother
\usepackage[capitalise]{cleveref}
\begin{document}
%
%\title{Distributed Control for Multi-Parity Objectives}
\title{Decoupled Planning for\\ Multiple Omega-Regular Objectives\thanks{
This work is funded by the following grants:
European Research Council under Grant No.:
ERC-2020-AdG 101020093, 
ISF grant no. 1679/21,
grant RYC2024-049116, 
MICIU/AEI/10.13039/501100011033, the ESF+, and Volkswagen Foundation within its Momentum framework under
project no. 9C283.\\
The last author was employed at Institute of Science and Technology Austria for part of this work.}}
%
%\titlerunning{Abbreviated paper title}
% If the paper title is too long for the running head, you can set
% an abbreviated paper title here
%
\author{
Guy Avni\inst{1}\orcidlink{0000-0001-5588-8287}\and
Thomas A. Henzinger\inst{2}\orcidlink{0000-0001-6077-7514} \and
Kaushik Mallik\inst{3} \orcidlink{0000-0001-9864-7475}\and
Suman Sadhukhan\inst{4}\orcidlink{0000-0002-4802-6803} \and
K. S. Thejaswini\inst{5}\orcidlink{0000-0001-6077-7514}}
 \institute{Department of Computer Science, University of Haifa, Israel\\
 \email{gavni@cs.haifa.ac.il}\and
 Institute of Science and Technology Austria, Klosterneuberg, Austria
\\
 \email{tah@ista.ac.at}\and 
 IMDEA Software Institute, Spain
 \\
 \email{kaushik.mallik@imdea.org}\and
Clausthal University of Technology, Germany
 \\
 \email{suman.sadhukhan@tu-clausthal.de}\and
Université Libre de Bruxelles, Belgium
\\
\email{thejaswini.raghavan@ulb.be}}

\authorrunning{G. Avni, T. A. Henzinger, K. Mallik, S. Sadhukhan, and K. S. Thejaswini}
%% First names are abbreviated in the running head.
%% If there are more than two authors, 'et al.' is used.
%%
%\institute{Princeton University, Princeton NJ 08544, USA \and
%Springer Heidelberg, Tiergartenstr. 17, 69121 Heidelberg, Germany
%\email{lncs@springer.com}\\
%\url{http://www.springer.com/gp/computer-science/lncs} \and
%ABC Institute, Rupert-Karls-University Heidelberg, Heidelberg, Germany\\
%\email{\{abc,lncs\}@uni-heidelberg.de}}
%
\maketitle              % typeset the header of the contribution
\begin{abstract}
We study the problem of generating paths on a graph that satisfy a collection of $\omega$-regular objectives. We propose a decoupled framework in which each objective is assigned to an independent agent that selects a \emph{local} policy, while a scheduler---oblivious to the graph and objective---dynamically composes these policies into a single path. We ask when such a composition satisfies all objectives, assuming their conjunction is realizable.
The framework enables modular policy design but raises fundamental compositional challenges. We show that even extremely fair deterministic schedulers do not ensure correctness, and that \emph{stochastic} schedulers, while necessary, are insufficient without coordination. For safety objectives, we demonstrate that fully decentralized implementations are impossible, and we introduce a protocol for synchronizing on maximal safe actions.
For non-safety objectives, we introduce \emph{conventions}---simple, a priori restrictions agreed upon before the graph or objectives are revealed---that guarantee satisfaction of all objectives when followed by all agents. We characterize minimally restrictive conventions for major subclasses of $\omega$-regular objectives. In particular, B\"uchi objectives admit universal composition of finite-memory policies without scheduler communication; co-B\"uchi objectives require only knowledge of whether the agent was scheduled; and parity objectives additionally require knowledge of which agent was scheduled.

%\todo{To recycle the following:}
%Advantages of our framework include a ``separation of concerns'' when designing policies as well as modularity:  each local policy can be replaced or extended without complete recomputation. 
%We show that universal composition is not possible, and we identify required structural restrictions on the objectives and the information available to the local policies. Interestingly, for B\"uchi objectives, we prove that policies requiring no communication with the scheduler can be composed universally as long as they are finite-memory policies. For co-B\"uchi objectives, policies must only know whether they were scheduled at each step, and for parity objectives, the policies additionally need to know \emph{which} agent was scheduled in order to adapt their behaviour.

\keywords{$\omega$-regular \and Modular Synthesis \and Runtime Composition.}
\end{abstract}
\newpage
\section{Introduction}
\label{sec:intro}

The graph traversal problem takes as input a finite directed graph and a temporal specification over sequences of vertices, and the goal is to compute a traversal \emph{policy} for selecting the next vertex based on the history of the past vertices, such that the resulting infinite path fulfills the given specification.
A primary application of graph traversal is robotic path planning~\cite{guruji2016time}, and other applications include parsing natural languages using stochastic grammars~\cite{klein2003parsing}, and informational search with online learning~\cite{kagan2014group}. 
Often, the specification requires simultaneous fulfillment of multiple, competing objectives; for example, a robot may need to simultaneously patrol a given location \textit{and} visit the charging station intermittently. The traditional approach to solve such problems is to compute a single, monolithic policy for the conjunction of all objectives. 

In this paper, we introduce a {\em decoupled} framework for graph traversals. 
Consider a collection of objectives $\varphi_1,\ldots, \varphi_N$ that a path needs to satisfy. 
We describe the framework as an interaction among $N$ agents. For $1 \leq i \leq N$, Agent~$i$ is only responsible for satisfying objective $\varphi_i$ and does so by selecting a {\em policy} that achieves $\varphi_i$. 
At runtime, a fixed {\em scheduler} composes the policies by scheduling one of the agents to choose the next vertex at each step. 
Our goal is to ensure that the path generated by the composition satisfies all objectives (either deterministically or {\em almost surely}) while achieving maximal decoupling by keeping the agents' choices and scheduler as independent as possible. Before we make these goals more concrete and discuss the features that they lead to, we first illustrate the framework and its challenges. % in the following example.
\begin{figure}[ht]
\centering
\begin{subfigure}{.45\linewidth}
  \centering
\begin{tikzpicture}[>=stealth,thick,scale=0.8,
    every state/.style={circle,draw,minimum size=8mm,inner sep=1pt,font=\small}]
  % Nodes
  \node[state] (l) at (0,0) {$\ell$};
  \node[state] (t1) at (1.2,1.2) {$t_1$};
  \node[state,initial right] (t2) at (3.2,1.2) {$t_2$};
  \node[state] (r) at (4.4,0) {$r$};
  \node[state] (b2) at (3.2,-1.2) {$b_2$};
  \node[state] (b1) at (1.2,-1.2) {$b_1$};

  % Black bidirectional edges
  \draw[<->,black] (l) -- (t1);
  \draw[<->,black] (t1) -- (t2);
  \draw[<->,black] (t2) -- (r);
  \draw[<->,black] (r) -- (b2);
  \draw[<->,black] (b2) -- (b1);
  \draw[<->,black] (b1) -- (l);
  \draw[->,black,looseness=5,out=155,in=210] (l) to (l);
  \draw[->,black,looseness=5,out=25,in=-25] (r) to (r);

  \draw[->,red,dashed,thick,bend left=25] (t1) to (l);
  \draw[->,red,dashed,thick,bend left=25] (t2) to (t1);
  \draw[->,red,dashed,thick,bend left=25] (r) to (t2);
  \draw[->,red,thick,dashed,bend left=-25] (b1) to (l);
  \draw[->,red,thick,dashed,bend left=-25] (b2) to (b1);
  \draw[->,red,thick,dashed,looseness=6,out=140,in=220] (l) to (l);
  % \draw[->,red,thick,bend left=25] (r) to[out=200,in=340,looseness=6] (r); % self-loop red
  % \draw[->,red,thick,bend left=25] (l) to[out=140,in=220,looseness=60] (l); % self-loop red
  %\draw[->,red,thick] (l) edge[loop above] node {} (l);

  \draw[->,blue,dotted,thick,bend right=-25] (t1) to (t2);
  \draw[->,blue,dotted,thick,bend right=-25] (l) to (t1);
  \draw[->,blue,thick,dotted,bend right=-25] (t2) to (r);
  \draw[->,blue,thick,dotted,bend right=25] (b2) to (r);
  \draw[->,blue,thick,dotted,bend right=25] (b1) to (b2);
  \draw[->,blue,thick,dotted,looseness=6,out=35,in=-35] (r) to (r);
%  \draw[->,blue,thick,bend right=25] (r) to[out=20,in=-20,looseness=6] (r); % self-loop blue
 % \draw[->,blue,thick,bend right=-25] (l) to[out=160,in=-160,looseness=6] (l); % self-loop blue
\end{tikzpicture}
\caption{Reachability objectives given by $T_1 =\set{\ell}$ and $T_2 = \set{r}$, and policies $\pi_1$  dashed {\color{red} red} and $\pi_2$ dotted {\color{blue} blue}.}\label{subfig:Leftintro}
\end{subfigure}
\hspace{0.5cm}
\begin{subfigure}{.45\linewidth}
  \centering
\begin{tikzpicture}[>=stealth,thick,node distance=1.2cm]

%\tikzstyle{state}=[circle,draw,minimum size=7mm,inner sep=0pt]
\tikzstyle{target1}=[circle,draw,minimum size=7mm,inner sep=0pt,fill=red!30]
\tikzstyle{target2}=[circle,draw,minimum size=7mm,inner sep=0pt,fill=blue!30]

\stam{
% Nodes
\node[state] (v) {$v$};
\node[state,above right of=v] (a1) {$a_1$};
\node[state,below right of=v] (a2) {$a_2$};
\node[target1,right of=a1] (b1) {$b_1$};
\node[target2,right of=a2] (b2) {$b_2$};
}
\node[state,initial above] (v) {$v$};
\node[state, right of=v] (a1) {$a_1$};
\node[target1,right of=a1] (b1) {$b_1$};
\node[state,left of=v] (a2) {$a_2$};
\node[target2,left of=a2] (b2) {$b_2$};

% Edges
\draw[->] (v) edge[bend right=15] (a1);
\draw[->] (v) edge[bend right=15] (a2);

\draw[->] (a1) edge[loop above] ()
          (a1) -- (b1)
          (a1) edge[bend right=15] (v);

\draw[->] (a2) edge[loop above] ()
          (a2) -- (b2)
          (a2) edge[bend right=15] (v);

\draw[->] (b1) edge[bend left=30] (v);
\draw[->] (b2) edge[bend right=30] (v);

\end{tikzpicture}
\caption{A graph with two B\"uchi objectives given by accepting vertices $T_1 = \set{b_1}$ and $T_2 = \set{b_2}$.}\label{subfig:Rightintro}
\end{subfigure}
%\caption{}
%\label{fig:intro}
\end{figure}
\begin{example}
\label{ex:deterministic versus stochastic scheduler}
Consider a coffee-serving robot in an office modeled as in ~\ref{subfig:Leftintro}.
The objective $\varphi_1$ requires ``reach and serve coffee at $\ell$'' and the objective $\varphi_2$ requires ``reach $r$.'' 
Suppose Agent~$1$ and Agent~$2$ choose policies $\pol_1$ and $\pol_2$ that always take the shortest paths towards their respective goal states.
Clearly, if the robot follows only $\pol_1$ or only $\pol_2$, respectively, $\varphi_1$ and $\varphi_2$ are satisfied. 
We describe two schedulers for composing the policies at runtime. 

First, consider a \emph{deterministic} scheduler $\sched_{\mathsf{d}}$ that alternates between $\pol_1$ and $\pol_2$: at odd time steps, the robot follows $\pol_1$, and at even time steps, it follows $\pol_2$. 
The composition does not satisfy either objective. 
For example, the path starting from $t_2$ forever alternates between $t_1$ and $t_2$.
Notably, $\sched_{\mathsf{d}}$ fulfills the \emph{extreme fairness} property~\cite{pnueli1983extremely}, because each agent is given control infinitely often. 
%This example shows that fairness is not enough when it comes to our decoupled framework.
The example shows that in order to guarantee that all objectives are satisfied, our framework requires schedulers with an even stronger form of fairness.

Now consider a \emph{stochastic} scheduler $\sched_{\mathsf{s}}$ that, at each point in time, chooses which policy to follow uniformly at random. 
This scheduler fulfills a stronger notion of fairness, called \emph{probabilistic fairness}~\cite{de1998formal} (also related to \emph{$\alpha$-fairness}~\cite{pnueli1993probabilistic}).
The composition of $\pi_1$ and $\pi_2$ using $\sched_{\mathsf{s}}$ now generates a random path that almost surely visits each of $\ell, t_1, t_2$, and $r$ infinitely often, thus both objectives are satisfied almost surely. 
\qed
\end{example}

We postpone the justification of our design choices of schedulers and policies until the end of the introduction.

%\paragraph{Features and design choices.} 
The advantages of our new framework stem from its modularity; it enables a separation of concerns in designing policies and allows changing, deleting, or adding any number of 
%policy modules %Guy: to save the line break.
policies
without any extra computational overhead, offering a lightweight, plug-and-play composition operation. 
We point out specific advantages over the monolithic approach:
(i)~{\em parallel design}: each policy is designed independently, which enables, for example, designed on separate CPUs to increase scalability, or by different vendors increasing flexible design; 
(ii)~{\em robustness}: when an objective changes, only the relevant module needs to change and the other modules can stay fixed; 
(iii)~{\em iterative design}: a realistic design procedure often requires addition of objectives to the original objectives, which can be simply added as a new module; 
(iv)~{\em dynamic changes}: our framework allows deleting or adding objectives to a system that has already been deployed (a ``patch'').

% -------- OLD ----------
%\begin{center}
%{\em Our goal is to design, for each family of objectives, decoupling schemes that maximize independence and minimize offline and runtime coordination.}
%\end{center}
% ---------------------

%Indeed, independence and minimal coordination lead to robust components, namely components that maximize the changes to other components in the system that they can withstand without change. We discuss our design choices at length later on.

The first step of our decoupled synthesis addresses the safety components of the individual objectives.
Given a graph $G$ and an objective $\varphi$, we can use standard procedure to decompose $\spec$ into safety and liveness components, writing $\spec = \specsafe \cap \speclive$, where $\specsafe$ requires some unsafe vertices to be avoided at all time, and $\speclive$ requires some good event to eventually occur.
We show that $\specsafe$ cannot, in general, be guaranteed under stochastic scheduling without additional communication. To this end, we propose a protocol in which agents communicate their sets of safe actions at each step, and the scheduled agent is required to select an action from the intersection of these sets. Under mild assumptions, we prove that this protocol suffices to ensure satisfaction of each safety objective at all times.
Once safety is enforced, we remove all unsafe edges from the graph and exclusively focus on the liveness objectives in the resulting subgraph. For clarity of exposition, we assume this graph is strongly connected, although this assumption is not essential and will be relaxed later.

We first study the existence of %a {\em universal composition framework}; namely, 
a stochastic scheduler that composes \emph{any} collection of policies in a way that the composition almost-surely satisfies all liveness  objectives. 
Somewhat surprisingly, we show that, already for B\"uchi objectives in strongly-connected graphs, such ``universal'' schedulers does not exist. We sketch the intuition in the following example, and details can be found in Thm.~\ref{thm:no univ buchi scheduler}.

%and two B\"uchi objectives, such universal frameworks do not exist.The following example sketches the intuition, the formal treatment is in 

\begin{example}
\label{ex:non-existence-universal-sched}
Consider the graph depicted in \ref{subfig:Rightintro}. We describe two policies $\pi_1$ and $\pi_2$ respectively for the B\"uchi objectives $\varphi_1$ and $\varphi_2$ given by target vertices $\{b_1\}$ and $\{b_2\}$. 
Consider an exponentially increasing sequence, e.g., $n_0 = 1$ and $n_j = 2 \cdot n_{j-1}$, for $j \geq 1$.
Define $\pi_1$ as follows, and $\pi_2$ is dual. When not at $a_1$, proceed towards $b_2$, and when at $a_1$ at time $t$, if $t = n_j$, for $j \geq 0$, proceed to $b_1$, otherwise stay at $a_1$. That is, $\pi_1$ spends exponentially increasing time at $a_1$; visits to $b_1$ become exponentially rare. Clearly, if $\pi_i$ operates alone, it satisfies $\varphi_i$, for $i \in \set{1,2}$. 
Consider a scheduler that at each point in time, chooses which policy acts uniformly at random.
Intuitively, the probability that a random walk visits $a_i$, for $i \in \set{1,2}$, precisely at a time $n_j$, for some $j \geq 1$, is exponentially decreasing, which implies that the probability of visiting $b_i$ infinitely often tends to a constant, thus $\varphi_1 \wedge \varphi_2$ is {\em not} satisfied almost surely. In Thm.~\ref{thm:no univ buchi scheduler}, we generalize the construction to any stochastic scheduler. \qed
\stam{OLD
    Consider the \emph{right} graph in Fig.~\ref{fig:intro}, which illustrates two B\"uchi objectives with target vertices $\{b_1\}$ and $\{b_2\}$, respectively, and an arbitrary (possibly randomised) scheduler---for instance, one that schedules the two modules uniformly at random. 
    Even under such schedulers, arbitrary B\"uchi policies cannot be universally composed to satisfy their objectives with probability~1. 
We construct policies $\pi_1$ and $\pi_2$ for the respective objectives such that, with positive probability, at least one B\"uchi vertex is eventually no longer reached. 
The key idea is that each policy $\pi_i$ can obstruct the other by spending progressively longer periods looping at vertex $a_i$ before proceeding to its target $b_i$. 
By choosing the dwell time $L(n)$ at the $n$-th visit to $a_i$ to grow rapidly based on the scheduler (e.g., exponentially for the simple random scheduler), the visits to one of the B\"uchi vertices become increasingly rare. 
Standard probabilistic arguments then imply that, with positive probability, one B\"uchi set is visited only finitely often.
}
	% We show (in \cref{thm:no univ buchi scheduler}) that one cannot compose arbitrary B\"uchi policies using any arbitrary scheduler even if it uses randomness by constructing a pair of B\"uchi policies $\pol_1$ and $\pol_2$ for these respective objectives, such that with positive probability, at least one of the B\"uchi states will not be reached anymore after a certain time point in the long run.
	% The insight is that the policy $\pol_i$ can ``obstruct'' the objective of $\pol_{j\neq i}$ by looping in the $a_i$ vertex on their way to visit $b_i$. 
	% They need to use loop in the vertex $a_i$ ``longer and longer each visit''. For the case of the uniform scheduler, this length is $2^n$ many steps at $a_i$ for the $n^\textsuperscript{th}$ visit to $a_i$. For other schedulers, one can construct a fast-growing function $L(n)$, such that the policy prescibes staying at the vertex $a_1$ for the $n^\textsuperscript{th}$ visit to $a_i$.
	% By carefully constructing  $L(n)$ to be a fast increasing function in $n$, we can show that the frequency of visits of one of the B\"uchi objectives will be increasingly rare over time.
	% Using results from stochastic processes, it will follow that, with positive probability, one of the B\"uchi objectives will be visited only finitely many times.
\end{example}

%The following example shows, however, that a deterministic universal scheduler does not exist. Somewhat surprisingly, we show that even a stochastic scheduler does not guarantee almost sure satisfaction for B\"uchi objectives. 

%\smallskip
%\noindent\textit{Coordinating policies via convention.}
Example~\ref{ex:non-existence-universal-sched} shows that if the agents are allowed to choose any unrestricted policies, there exist situations when the composition will violate the individual objectives.
We submit that if all the agents are made to choose policies according to some suitable \emph{convention}, the composition will be guaranteed to satisfy all objectives, almost surely.
For example, for B\"uchi objectives, we show that the convention of choosing only finite-memory policies guarantee the fulfillment of all individual objectives.
We formalize conventions as mappings from classes of objectives (B\"uchi, co-B\"uchi, etc.) to restrictions on policies (finite-memory, etc.).
Importantly, the restrictions need to be finalized \emph{before} seeing the specific graph and objectives at hand, setting our framework apart from the joint design of coordinated policies.
This way, the modularity requirement is achieved: agents can independently design their policies based on the respective conventions, and addition, modification, or removal of policies do not affect the rest of the policies.

Conventions can be intuitively explained using a simple metaphor, whose inspiration comes from the work of Lewis~\cite{lewis2008convention}.
Suppose two friends want to meet, but they are unable to talk to each other and arrange the location.
Convention dictates that each of them should go to the place where they most often meet, and if both parties follow the same convention, they will successfully meet.
In essence, convention is a way of inducing coordination when direct communication is not possible.
Coming back to our framework, conventions help agents achieve their objectives without directly synchronizing their policies with each other.

% the agents' choices of carefully crafted infinite-memory policies lead to a composition that violates the B\"uchi objectives. 
%In general, we will see that, depending on the class of objectives, the policies will need some form of coordination to guarantee that the composition achieves the desired outcome for all.
%However, from the modularity standpoint, we bar the agents from coordinating the design of their policies.
%
%
%This motivates restricting the agents' choice of policies, which we study in general as a {\em convention} for choosing policies. A convention formalizes offline coordination among the agents. We seek sufficient conventions that are fixed {\em before} the graph and the objectives are known, such that when all agents follow the convention, the composition is guaranteed to almost-surely satisfy all objectives. 
%For B\"uchi objectives, we show that a very weak convention suffices: if all agents choose finite-memory B\"uchi policies (e.g., an agent applies reinforcement learning to construct their policy), then all objectives are fulfilled almost surely when composing with any stochastic scheduler. 
%Note that conventions go hand in hand with the features we list above; namely, each agent can still choose (construct) a policy independently, preferably with no knowledge of the number of agents in the system nor their objectives, thus both changes to an objective and addition or removal of objectives affect only the relevant policies. 

% -------- NEW ----------
\begin{center}
{\em Our goal is to establish minimally restrictive conventions\\ for different classes of $\omega$-regular objectives.}
\end{center}
% ---------------------

%\paragraph{Conventions for co-B\"uchi objectives.}
While the convention for B\"uchi objectives was merely to use finite-memory policies, the same does not suffice anymore for co-B\"uchi objectives, as we illustrate using the following example.

\begin{example}
\label{ex:coBuchi}
Consider the graph that is depicted in~\ref{subfig:Leftintro} and two co-B\"uchi objectives: $\eventually\always\, \ell$, requiring eventually staying in $\ell$, and $\eventually\always\, (\ell \vee r)$, requiring eventually staying in either $\ell$ or $r$. Observe that for $i \in \set{1,2}$, following the finite-memory policy $\pol_i$ (depicted also in \ref{subfig:Leftintro} in dashed red and dotted blue) results in a path that satisfies the respective objective, but a random composition of the two visits both $t_1$ and $t_2$ infinitely often, almost surely, and thus violates both objectives. \qed
\end{example}

The challenge behind co-B\"uchi convention is that in order to satisfy a conjunction of co-B\"uchi objectives, all policies---that are oblivious to each other's target vertices---must eventually converge into a common cycle that is good for everyone. 
%The challenge in co-B\"uchi objectives stems from policies being chosen without knowing the other objectives. As a result, the policies need to figure out at runtime which edges are safe for all without direct communication among the policies. 
We describe a convention for co-B\"uchi objectives in strongly-connected graphs. Given a co-B\"uchi objective $\varphi_i$, Agent~$i$ chooses a policy $\pol_i$ that operates as follows. (1)~Randomly choose a lasso-shaped path $\rho^i = \rho_i \cdot C_i^\omega$, where $C_i$ is a cycle that satisfies $\varphi_i$. This is Agent~$i$'s guess of the common path everyone is trying to follow. (2)~If scheduled at vertex $v$: choose the successor $v'$ of $v$ on $\rho^i$. (3)~If not scheduled: observe the next vertex $u$, and if $u \neq v'$, a conflict is observed: return to Step~(1) and sample a new path.
\begin{example}
\label{ex:coBuchi}
Consider again the objectives over the graph depicted in \ref{subfig:Leftintro} $\varphi_1=\eventually\always\, \ell$ and $\varphi_2=\eventually\always\, (\ell \vee r)$, the initial position is $t_1$, and let the choice of lasso paths be $\rho^1 = t_1 \ell^\omega$ and $\rho^2 = t_1 t_2 r^\omega$. Observe that $\rho^i$ satisfies $\varphi_i$ for each $i$. Suppose that  $\pol_1$ is scheduled and proceeds to $\ell$. Thus, $\pol_2$ observes a conflict (it expects the next vertex to be $t_2$), and randomly chooses an alternative lasso path that satisfies $\varphi_2$. 
Correctness follows from a measure-theoretic argument: If there exists a common path that satisfies all co-B\"uchi objectives, then there also exists a lasso-shaped path that does the same. 
Random scheduling implies that eventually and almost surely the policies will stabilize in a lasso-shaped path where no policies will change their guesses anymore. 
\qed
\end{example}

%Finally, observe that runtime communication is minimized: the only information a policy receives is whether it is scheduled or not, and the next vertex on the path. Advantages include seamless changes, addition, or removal of objectives, as well as applicability for devices with limited computational capabilities. 

We develop sufficient conventions for major classes of $\omega$-regular objectives, including safety, B\"uchi, co-B\"uchi, and parity, and consider path planning problems on both strongly connected and general graphs. 
As the objectives become more complex, the conventions become more demanding, as expected.
Although our conventions are simple to describe, their correctness proofs involve intricate measure-theoretic arguments.
With this, we create the foundation of a first-of-its-kind modular design framework for $\omega$-regular decision-making.

% ------ Removed based on Tom's comment ----------
%\paragraph{A convention for parity objectives.} 
%We generalise the ideas of the co-B\"uchi convention. Note that satisfying a conjunction of parity objectives is no longer possible with a lasso path. Each policy now randomly chooses (guesses) an {\em internal} memoryless policy that all other policies follow. Once a policy is scheduled, it follows the policy it guessed for itself, and if not scheduled, it observes the move of the scheduled policy to see whether it matches the guess for that policy. If the guess is incorrect, new internal policies are chosen. This convention requires policies with increased scheduling knowledge: each policy knows which policy was scheduled at each point in time. 
% -----------------------------------------------

\paragraph{Design choices.}
We discuss the design choices of our framework. 
{\em (i)~Independent policy choice.} 
In all our conventions, an agent chooses a policy without knowing the objectives of the other agents. This enables robustness: when objective $\varphi_j$ changes, Agent~$i$ does not need to change their policy. 
{\em (ii)~Minimal runtime coordination.} 
For B\"uchi objectives, conventions choose traditional policies (later called {\em path-aware policies}) that use no runtime coordination (see $\pi_1$ and $\pi_2$ in Example~\ref{ex:deterministic versus stochastic scheduler}).
%that make choices only on the history of vertices traversed). 
This means that agents can apply off-the-shelf tools to design policies, and different agents can even apply different techniques, e.g., reinforcement learning or formal techniques. For co-B\"uchi objectives, policies know which time steps they are scheduled in.  Importantly, such policies are {\em not} aware of the number of agents, meaning that additional objectives can be added (offline or at runtime) without changing the policy. 
{\em (iii)~Graph-independent schedulers.} 
%Our schedulers are only aware of the number of agents, which is arguably a minimal requirement of schedulers. 
To motivate this, we return to Example~\ref{ex:deterministic versus stochastic scheduler}. Intuitively, composition using the deterministic scheduler $\sched_{\mathsf{d}}$ failed because none of the agents were getting enough time to reach their target. 
A na\"ive fix simply schedules each policy \emph{three} times in a row. While this does solve the problem at hand, it is an ad hoc solution that %relies on knowledge of the graph and the objectives and 
will break if either graph or objectives change. Alternatively, one could consider a \emph{nondeterministic} scheduler and construct policies that fulfill their objectives no matter how the nondeterminism is resolved. 
Unfortunately, this is too restrictive from the policy design standpoint: in Example~\ref{ex:deterministic versus stochastic scheduler}, given the policies $\pi_1$ and $\pi_2$, a nondeterministic scheduler can act like an adversary and find an interleaving of the policies (like the one of $\sched_{\mathsf{d}}$) such that the path keeps cycling between $t_1$ and $t_2$ and never visits the targets.
Defeating such an adversarial scheduler is possible, but would require the policies to take joint actions (like $\pi_1$ and $\pi_2$ making a clockwise traversal of the graph in Example~\ref{ex:deterministic versus stochastic scheduler}), which would preclude the key aspect of modular design.
{\em (iv)~Simple schedulers}. We see it as a feature of our framework that our schedulers are simple and can thus run on devices with limited processing capabilities like robots or drones. 

\subsection*{Related Work}
Computing policies for multi-objective decision-making problems is a well-studied problem, both for probabilistic~\cite{basset2015strategy,basset2018compositional,chatterjee2018combinations} and non-probabilistic~\cite{roijers2013survey} systems.
Almost all works produce a single, monolithic policy that fulfills all objectives. 

Decoupling has been considered in other fields with the same motivation as us; a framework that is modular, robust, and separates design concerns. For example, decoupling exploration from exploitation has been studied in RL~\cite{SCHA22}, {\em behavioral programming} is a decoupled approach to programming~\cite{HMW12}, decoupling has been applied in control~\cite{GB05}, and game theory~\cite{HM03}. 

We compare with two approaches for decoupled design that are closest to our work. 
The first kind ``over-approximates'' the local policies using the so-called \emph{strategy templates}~\cite{anand2023synthesizing}, which abstractly represent sets of policies that would fulfil the objectives.\footnote{Strategy templates are originally developed on turn-based game graphs, which are generalizations of the plain graphs that we use in this work.}
When these local strategy templates are composed, we obtain a strategy template for the conjunction of all objectives, from which a concrete policy for the overall problem can be extracted.
Unfortunately, the composition of strategy templates is computationally intensive, which creates substantial computational overhead and can impact the runtime performance of the composed policy.

The second kind of modular design uses auctions as a scheduler, and is called {\em auction-based scheduling} (ABS)~\cite{avni2024auction}: 
Each policy starts with a budget, and at each step, an auction is held, where the policies use their available budgets to bid for the privilege of selecting the next action.
{\em Bidding policies} for ABS are found by independently solving a two-player zero-sum {\em bidding game}~\cite{lazarus1999combinatorial,AHC19}. 
We point to two key advantages of our approach over ABS. First, our approach allows an arbitrary number of objectives whereas ABS for multiple objectives is currently not possible due to the dependency on multi-player bidding games, which, to best of our knowledge, have not yet been studied. 
Second, ABS is a sound but incomplete method: it is possible that there is a path that satisfies both objectives, but there is no decoupled solution for ABS. In fact, most instances with a pair of co-B\"uchi objectives cannot be decoupled using ABS. On the other hand, our framework is guaranteed to successfully decouple any collection of parity objectives on strongly-connected graphs, assuming, of course, that they have an intersection.

\stam{%appears above
The only known work~\cite{avni2024auction} considering a modular setting like ours uses a different scheduling mechanism: Each local policy is augmented with a bidding capability and is given an amount of (imaginary) money. At each time step, all policies use their available money to bid for the privilege of getting scheduled in the current step, and the goal is to compute these bidding-enabled policies such that all objectives are fulfilled by their \emph{auction-based} scheduling. Computing these bidding-enabled policies require solving \emph{bidding games} over graphs, and despite a rich literature on this topic~\cite{lazarus1999combinatorial,avni2020survey}, there are multiple bottlenecks that limit the applicability of this approach. Most notably, at the moment, it is not possible to extend this approach to the case with more than two policies. Our work uses simpler scheduling mechanisms, namely purely randomized scheduling, and in doing so, we extend the applicability of modular decision-making to classes of problems that are beyond the reach of bidding games and auction-based scheduling.
}

%Shielding~\cite{alshiekh2018safe} is another closely related line of work, concerning problems with usually two objectives where one of the objectives is a ``hard'' requirement (like safety) and the other one is a ``soft'' requirement (like performance). The main use case of shielding is when the soft objective is fulfilled by an unverified controller, like the ones designed using reinforcement learning, and the hard objective is fulfilled by a formally verified controller that has the power to override the action of the unverified controller if it is deemed necessary. Even though this is a particular instance of modular design of multi-objective policies, it gives a fixed higher priority to the verified controller while scheduling the policies. Our case can be seen as having multiple, equally important ``hard'' objectives, and it is unclear if a fixed-priority scheduler like the one in shielding can be extended to our setting.

In several graph games, ``mixed'' or \emph{randomised policies} are used either out of necessity for winning against the opponent~\cite{de2000concurrent} or to reduce the memory requirements of the policies~\cite{chatterjee2004trading}.
%A randomised policy is one that selects probability distributions over actions at each step, instead of a deterministic action.
Our randomised scheduling of local policies can be viewed as randomised policies, where the randomisation over actions is facilitated by the scheduler.
However, there is an important difference: 
usually, randomised policies are designed centrally for the entire objective, whereas our goal is to design isolated local policies independently from the scheduler.
%This creates new technical opportunities and challenges, and many results on the capabilities and limitations of randomised policies do not extend to our setting.

The classical distributed view of reactive synthesis~\cite{pneuli1990distributed,kupermann2001synthesizing,finkbeiner2005uniform} uses a fundamentally different model. % of the problem.
It is assumed that the actions chosen by each local policy affect a separate set of variables, so that all local policies can be deployed in parallel.
In other words, classical distribution synthesis does not require scheduling of policies, and asks how to design local policies whose parallel composition fulfills all objectives.

%\KM{Putting the examples here at the moment...}
%
%\begin{example}[Universal scheduler for B\"uchi objectives with dense, bounded hitting time policies.]
%	Consider the game in Fig.~\ref{fig:screenshotTempBuchi}, where the policy indices are $\set{1,2}$.
%	Suppose the scheduler $\sched$ is the uniform scheduler as described in Cor.~\ref{cor:Buchi+dense+bounded hitting+uniform scheduler}.
%	We intuitively explain why this uniform scheduler is also universal.
%	Consider a pair of finite memory policies for the two B\"uchi objectives, where policy $\pol_1$
%	
%\end{example}
\section{Preliminaries}
%!TEX root=main.tex

We use $\mathbb{N}$ to denote the set of natural numbers $\{0,1,2,\dots\}$. For two natural numbers $i,j$ with $i<j$, we use $[i; j]$ to denote the set $\{i,i+1,\dots,j\}$ consisting of natural numbers that are at least $i$ and at most $j$. 
Sometimes, for a natural number $i$, we use $[i]$ to denote the set $[1; i]$. 

\medskip
\noindent\textit{Graphs, and paths. }  A graph is an ordered triple $\G = (\V, \E, \vinit)$, where $\V$ is a finite set of vertices, $\E \subseteq \V \times \V$ is a set of directed edges, and $\vinit \in \V$ is the initial vertex. 
A path in $\G$ is a (finite or infinite) sequence of vertices $v_0 v_1 \ldots$ such that $v_0 = \vinit$ and $(v_i, v_{i+1}) \in \E$ for every valid index $i$. We write $\pathsfin(\G)$ and $\pathsinf(\G)$ for the sets of finite and infinite paths of $\G$, respectively. 
By convention, $\pathsfin(\G)$ includes the path of length $0$.

\medskip
\noindent\textit{Objectives.} 
An \emph{objective} $\spec$ in $\G$ is a set of infinite paths in $\G$. 
We represent objectives defined using predicates over the vertices of the given graph. We consider the following \emph{families} of objectives. 
\begin{description}
\item[Reachability.] is specified by a set of \emph{target} vertices $T \subseteq V$, and is the set of every path that eventually visits at least one vertex in $T$. 
\item[Safety.] is specified by a set of \emph{safe} vertices $S \subseteq V$, and is the set of every path always remaining within the vertices in $S$. 
\item[Parity.] is specified by a \emph{colouring} function $\kappa: V\mapsto \mathbb{N}$, which associates each vertex a natural number called its \emph{colour}. An infinite path is in the parity objective iff the largest colour that appears infinitely often is even.
\item[B\"uchi.] is a special case of parity in which vertices are only coloured by $1$ and $2$, where the latter are called \emph{B\"uchi} vertices. A path is in the B\"uchi objective iff B\"uchi vertices are visited infinitely often. 
\item[Co-B\"uchi.] is a special case of parity in which vertices are only coloured by $0$ and $1$, where the latter are called \emph{co-B\"uchi} vertices or ``bad'' vertices. A path is in the co-B\"uchi objective iff it visits co-B\"uchi vertices only finitely often. 
\end{description}

We will divide objectives into two complementary families, namely safety and liveness, defined below.
While safety can be defined using safe vertices as above, the following definition uses an alternate, semantic variant.

\begin{definition}[Safety versus liveness objectives]
	Let $\G$ be a given graph.
	A given objective $\spec$ in $\G$ is called \emph{liveness} if every finite path in $\G$ can be extended into an infinite path in $\spec$.
	Dually, a given objective $\spec$ in $\G$ is called \emph{safety} if every infinite path $v_0v_1\ldots\notin\spec$ has a finite prefix $\rho = v_0\ldots v_i$ such that for every $j\geq i$, all infinite extensions of the path $v_0\ldots v_j$ is not in $\spec$.
\end{definition}

\medskip
\noindent\textit{Almost-sure satisfaction.}
%Guy: I think these are standard definitions from probability. It would be good to keep in a paper like this. 
%\KM{Crazy thought: do we need ``regular'' policies at all? Why not totally remove the following paragraph?}
Let $\mathcal{D}(\V)$ denote the set of all probability distributions over $\V$, and given $d\in \mathcal{D}(\V)$, we will write $\supp(d)$ to denote the support of $d$. 
Consider a function $f:\pathsfin(\G)\to \mathcal{D}(\V)$ that, given a finite path $\rho$ ending at a vertex $v$, assigns a probability distribution whose support is contained in the set of successors of $v$. 
We extend $f$ to infinite paths as follows. 
For a finite path $\rho = v_0\ldots v_k$, the cylinder set spanned by $\rho$, denoted $\Cyl(\rho)$, is the set of all infinite paths whose prefix is $\rho$. 
Define the probability distribution over cylinder sets based on $f$ as $\Pr^{f}(\Cyl(\rho)) = \prod_{i=0}^{k-1} f(v_0\ldots v_i)(v_{i+1})$.
The function $\Pr^{f}$ is a pre-measure over the set $\pathsinf(\G)$, which extends to a \emph{unique} probability measure over $\pathsinf(\G)$ by applying the Carathéodory’s extension theorem~\cite{Bil95}.
For simplicity, we use the notation $\Pr^{f}$ to also represent the probability measure.
We say that $f$ {\em almost surely} satisfies an objective $\varphi$ if $\Pr^f(\spec) = 1$.

\section{The Decoupled Planning Framework}\label{sec:framework}
We develop a decoupled framework for generating a path that satisfies a collection $\varphi_1,\ldots, \varphi_N$ of objectives. 
The framework has two main ingredients, $N$ individual {\em agents} and {\em a scheduler}.
Each Agent~$i$, for $i \in [1;N]$, is solely interested in fulfilling the objective $\varphi_i$.
To this end, the Agent~$i$ chooses a policy $\pol_i$, and the policies of all agents are composed at runtime by the scheduler.

%We study: given a fixed scheduler that is agnostic of the underlying graph and the individual objectives, how should the independently designed local policies coordinate so that all of their objectives are fulfilled?

\stam{%This moved to the intro
We start with an in-depth discussion on our choice of scheduler.

\begin{example}[On schedulers, contd.\ from Example~\ref{ex:deterministic versus stochastic scheduler}]
\label{ex:different types of schedulers}
Consider Example~\ref{ex:deterministic versus stochastic scheduler} from Section~\ref{sec:intro}.
We already showed that the simple fair deterministic scheduler $\sched_{\mathsf{d}}$, which alternates between $\pol_1$ and $\pol_2$, cannot guarantee the fulfillment of both objectives. 
As a fix, we proposed the stochastic scheduler $\sched_{\mathsf{s}}$.
We present two other natural alternatives for schedulers, and describe their pitfalls.

The issue with $\sched_{\mathsf{d}}$ was that none of the two policies $\pol_0$ and $\pol_1$ was getting enough time to reach the target before the other policy would take up control.
Let us attempt to fix this by considering the modified deterministic scheduler $\sched_{\mathsf{d2}}$ that schedules each policy \emph{three} times in a row, which gives the policies enough time to reach their target from anywhere in the graph.
With this change, now both policies will fulfill their objectives.
However, $\sched_{\mathsf{d2}}$ now uses knowledge of the graph and the objectives, to determine the  time-window size that would be necessary for the policies to fulfill their objectives.
If either the graph or the objectives change, $\sched_{\mathsf{d2}}$ would become ineffective.
In general, we would like to fix one \emph{universal} and lightweight scheduler that does not depend on the specific graph or objectives at hand.

Now consider a \emph{non-deterministic} scheduler $\sched_{\mathsf{nd}}$, which non-deterministically schedules the policies at each step.
The policies now need to make sure that their objectives are satisfied, no matter how the non-determinism is resolved.
Even if we restrict $\sched_{\mathsf{nd}}$ to be fair, with the assurance that both agents' policies are selected infinitely many times, in most cases $\sched_{\mathsf{nd}}$ would be overly adversarial.
For example, for the shortest path policies $\pol_1$ and $\pol_2$, the non-deterministic scheduler will just mimic $\sched_{\mathsf{d}}$.
The only way a non-deterministic scheduler would work is if the agents coordinate their actions: for example, in ~\ref{subfig:Leftintro}, if both agents decide to always traverse the loop in the clock-wise direction (effectively using the same policy), then no matter what $\sched_{\mathsf{nd}}$ does, both objectives will be fulfilled.
Unfortunately, this goes against the spirit of modular design, because now the agents' policies cannot be allowed to be independent of each other.

Unlike $\sched_{\mathsf{nd}}$, the stochastic scheduler $\sched_{\mathsf{s}}$  does not act like a pure adversary.
Moreover, it addresses the weakness of $\sched_{\mathsf{d2}}$, which is the \emph{fixed} time window (specifically, three) that is used for scheduling each policy in a row.
In particular, using results from measure theory (Borel-Cantelli lemma), it can be shown that, almost surely (with probability $1$), each policy will be consecutively scheduled over infinitely many time windows of \emph{arbitrary} finite lengths.
Therefore, unlike $\sched_{\mathsf{nd}}$, $\sched_{\mathsf{s}}$ can be universally used for any graph and any objectives.
For \ref{subfig:Leftintro}, it can be shown that the composition now generates a random path that satisfies both objectives almost surely. 
\qed
\end{example}
}%of stam

This section is devoted to formalizing this framework. We start with schedulers. 

\begin{definition}[Schedulers: general, fair, deterministic]\label{def:schedulers}
A \emph{scheduler} over $[N]$ is a function $\sched:[N]^*\to\Distribution([N])$.
    The scheduler $\sched$ is called \emph{fair} if there exists $\epsilon > 0$ such that for every $u\in [N]^*$ and for every $n\in [N]$, $\Pr(n = \sched(u)) > \epsilon $.
    A scheduler is called {\em deterministic} if it always chooses Dirac distributions.
\end{definition}

We reiterate the advantages of stochastic schedulers mentioned in the introduction. First, they are completely agnostic of the underlying graph and the objectives, while not obstructing policies to fulfill their objectives.
Moreover, stochastic schedulers are lightweight and easy to implement.
In fact such schedulers are often used in concurrent programs, where multiple programs (analogous to our policies) that are competing for a resource (being scheduled, analogous to our policies being able to execute their actions) are allocated the resource randomly.
Henceforth, schedulers will by default be stochastic.

%Guy: It's one line and too close to the deadline. I'm keeping it here. 
%\KM{Can't we put the deterministic schedulers where it is used? Just to not have definition overload. Plus it's a bit odd if you look the definitions of fair and deterministic together: no deterministic scheduler is fair according to these definitions.}

\noindent\textbf{Semantics of schedulers.}
Every sequence from $[N]^*\cup [N]^\omega$ is called a \emph{schedule} which is either finite or infinite.
Every scheduler $\sched$ induces a probability measure over the infinite schedules as follows.
Given a finite sequence $\theta = p_0\ldots p_k\in [N]^*$, we define $\Pr^\sched(\Cyl(\theta)) = \prod_{i=0}^{k-1} \sched(p_0\ldots p_{k-1})$, which is a pre-measure that extends to a unique measure---also denoted as $\Pr^\sched$---over the set  $[N]^\omega$ by applying Carathéodory’s extension theorem~\cite{Bil95}.\qed

\smallskip
We formalize the second ingredient of our framework, the policies that the agents choose, where %of the $N$ agents, given a fixed scheduler that is unaware of the graph and the objectives.
we define policy types with increasing knowledge of scheduling choices. 
We introduce some notation first. 
Given two alphabets $A$ and $B$, and a pair of finite words of equal length $w_A = a_0\ldots a_k\in A^*$ and $w_B = b_0\ldots b_k\in B^*$, we define their element-wise cross-product as: $w_A\otimes w_B \coloneqq (a_0,b_0)\ldots (a_k,b_k)\in (A\times B)^*$.
Furthermore, we lift this cross-product to sets of sequences: given $W_A\subseteq A^*$ and $W_B\subseteq B^*$, define $W_A\otimes W_B\coloneqq$ $\{w_A\otimes w_B\mid w_A\in$ $W_A, w_B\in W_B\}$.

\begin{definition}[Policies augmented with scheduling information]
Let $\sched$ be a scheduler over $[N]$. %We use $\id\in [N]$ to refer to the agent that is scheduled. 
	\begin{description}
\item[Path-aware.] policies coincide with the traditional definition of policies and have no scheduling knowledge, they are of the form $\pol \colon \pathsfin(\G) \to \mathcal{D}(\V)$. We denote the set of path-aware policies by $\PolP$.

\item[Scheduled-aware.] 
policies keep track of which past time points they have been scheduled, and all choices made by it in the past.
A scheduled-aware policy is a \emph{partial} function of the form $\pol\colon\pathsfin(\G)\otimes \set{\top,\bot}^*\otimes \V^* \to \mathcal{D}(\V)$, mapping the path $\rho=v_0\ldots v_k\in \pathsfin(\G)$, the history  $\theta=q_0\ldots q_{k-1} \in \set{\top,\bot}^*$ of time points in which it was scheduled, where ``$\top$'' means ``scheduled'' and ``$\bot$'' means not scheduled, and the history of choices $\gamma=w_1\ldots w_{k}\in \V^*$ made by it, to the distribution $\pol(\rho,\theta,\gamma)$.
		The distribution $\pol(\rho,\theta,\gamma)$ is defined iff $\rho$, $\theta$, and $\gamma$ are consistent, i.e., for every $i\in [k]$ with $q_{i-1} = \top$, $v_{i}=w_i$.
%		\SS{Shouldn't it be "if \(q_{i-1} = \top\) then \(v_i = w_{i-1}\)? Moreover, the length of \(\rho, \theta\), and \(\gamma\) should be same. }
		The set of all full scheduled-aware policies will be denoted as $\PolS{}$.

\item[Full history-aware.] policies enrich scheduled-aware policies by additionally including the indices of scheduled agents during times it was unscheduled (instead of just keeping track of ``$\bot$'' as scheduled-aware policies do).
Without loss of generality, assume that the indices of the other policies are $[2;N]$.
A full history-aware policy is a \emph{partial} functions of the form $\pol\colon\pathsfin(\G)\otimes \set{\top,2,\ldots,N}^*\otimes \V^* \to \mathcal{D}(\V)$, mapping the path $\rho\in \pathsfin(\G)$, the history of scheduling decisions $\theta\in \set{\top,2,\ldots,N}^*$, and the history of choices $\gamma\in \V^*$ made by $\pol$, to the distribution $\pol(\rho,\theta,\gamma)$.
The distribution $\pol(\rho,\theta,\gamma)$ is defined under the same conditions as for scheduled-aware policies.
		The set of all full history-aware policies with $N$ agents will be denoted as $\PolFH{N}$.
	\end{description}
For each policy, regardless of its type, for every given path $\rho v\in\pathsfin(\G)$, the support of the output distribution over vertices must be a subset of $\supp(v)$.
\end{definition}

%\KM{Can we give examples of these three types of policies based on the Figures in the intro?} Guy: Maybe for the next submission. 

\begin{remark}[Why keep track of past choices?]
One may na\"ively think that it is redundant that policies in $\PolS{}$ and $\PolFH{N}$ keep track of their own past choices, alongside the whole path history, because the old histories could be used to simulate the old choices on demand.
This will not work since the policies are stochastic, and executing them on the same history will generate different outcomes on different instances with positive probability.
\end{remark}

\begin{remark}[Comparing decoupling strength]
Decoupling is stronger when using a family of policies with weaker scheduling information. 
Since path-aware policies coincide with the traditional definition of policies, an agent can use any off-the-shelf tool to construct a policy for their objective. This is no longer the case with scheduled-aware and full history-aware policies. Yet, an appealing feature of scheduled-aware policies is their independence of how many policies are interacting. 
This enables, for example, ``patching'' a deployed system by removing or adding an objective at runtime. Indeed, even when $N$ changes, no update is needed to an individual policy. This is no longer the case with full history-aware policies. We point out that decoupling with full history-aware policies still provides modularity: as long as the number of objectives is fixed, changes to objectives at runtime is possible by changing only the relevant policies. 
\end{remark}

We define the \emph{composition} of a scheduler and a collection of full history-aware policies. This generates a probability distribution over the infinite paths generated by a random interleaving of the individual local policies.
\begin{definition}[Composition]\label{def:composition}
	Suppose we are given a graph $\G$, a scheduler $\sched$ for a set of $N$ agents, and a set of policies $\pol_1,\ldots,\pol_N$ where for every $i\in [N]$, $\pol_i \in \PolFH{N}\cup \PolS{N}\cup \PolP$.
	The composition is the tuple $\implementation = (\G,\pol_i,\dots,\pol_N,\sigma)$.
\end{definition}
Each composition defines a probability distribution over the paths generated, which we computed step-by-step below. Such a distribution depends on the type of scheduler, hence we define it carefully below.

\smallskip
\noindent\textbf{Semantics of composition.}
At each time point $i$, when the current path is $v_0\ldots v_i \in \pathsfin(\G)$, each Agent $j$ randomly selects the next vertex $w_{i+1}^j$ using its policy, giving us a tuple $(w_{i+1}^1,\ldots,w_{i+1}^N)$ of selections.
Out of this tuple, the choice of the scheduled agent $p_i$ is used to extend the current path to $v_0\ldots v_{i+1}$, where $v_{i+1}=w_{i+1}^{p_i} $, and the process continues.
For a given fixed schedule $p_0\ldots p_{k-1}\in [N]^*$, we define the probability of the sequence of selections being $(w_1^1,\ldots,w_1^N) \ldots (w_k^1,\ldots,w_k^N)$ given as:
\begin{align}\label{eq:implementation:conditional probability}
	\Pr^{\implementation}\left((w_1^1,\ldots,w_1^N) \ldots (w_k^1,\ldots,w_k^N) \mid (p_0,\ldots p_{k-1})\right) \coloneqq \prod_{j=1}^N\prod_{i=0}^{k-1} D(w_{i+1}^j),
\end{align}
where $D(w_{i+1}^j)$ represents the probability that at time $i+1$, the policy $\pol_j$ selected the vertex $w_{i+1}^j$, whose value is defined based on the policy type as below.
In the above, the path generated by the schedule $p_0,\ldots p_{k-1}$ from the sequence $(w_1^1,\ldots,w_1^N) \ldots (w_k^1,\ldots,w_k^N)$ is $v_0v_1\ldots v_k$ where $v_0=\vinit$ and for each $i\in [k]$, $v_i = w_i^{p_{i-1}}$.
The set of all policies are $\set{\pol_j}_{j\in [N]}$.
Then,
\begin{multline*}
	D(w_{i+1}^j) = \\
		\begin{cases}
			\pol_{j}\left(v_0,\ldots v_i\right)	&	\pol_{j}\in \PolP,\\
			\pol_{j}\left((v_0\ldots v_{i}),(q_0\ldots q_{i-1}),(w_1^j\ldots w_{i}^j)\right),\, \forall t\;.\;q_t=\top\Leftrightarrow p_t=j	&	\pol_{j}\in \PolS{N}, \\
			\pol_{j}\left((v_0\ldots v_{i}),(p_0\ldots p_{i-1}),(w_1^j\ldots w_{i}^j)\right)	&	\pol_{j}\in \PolFH{N}.
		\end{cases}
\end{multline*}
%------- deterministic policies -------
%\KM{Move the following paragraph to somewhere else, too much is happening already...}
%If the policies are all deterministic, then we obtain a unique path $\rho_\theta = v_0\ldots v_k$ given the schedule $\theta = p_0\ldots p_{k-1}$, where for each $i\in [0;k-1]$, $v_{i+1} = \pol_i(v_0\ldots v_i)$.
%As expected, we write $\Pr^{\implementation}(\rho \mid \theta) = 1$ if $\rho=\rho_\theta$, and $\Pr^{\implementation}(\rho \mid \theta) = 0$ otherwise.
% -----------------------------------

Suppose we are given a finite path $\rho=v_0\ldots v_k\in \pathsfin(\G)$ and a schedule $\theta = p_0\ldots p_{k-1}\in [N]^*$.
We introduce the set $\Sigma_{v_0,\rho,\theta} \subseteq \V^{N\times k}$ representing the set of all sequences of vertices generated by all agents such that the schedule $\theta$ would produce the path $\rho$ starting at the initial vertex $v_0$.
Formally, $\Sigma_{v_0,\rho,\theta} \coloneqq \set{(v_1^1,\ldots,v_1^N) \ldots (v_k^1,\ldots,v_k^N) \mid \forall i\in [k]\;.\;v_{i}=v_i^{p_{i-1}}}$.
Now we combine $\Pr^{\implementation}(\rho \mid \cdot)$ with the probability measure $\Pr^\sched$ over sequences of policy indices, and write $\Pr^{\implementation}(\Cyl(\rho))$ as
\begin{multline}\label{eq:implementation:probability over cylinder sets}
 \sum_{\substack{\theta\in [N]^{k-1} \\ {(v_1^1,\ldots,v_1^N) \ldots (v_k^1,\ldots,v_k^N)\in \Sigma_{v_0,\rho,\theta}}}}
 \Pr^{\implementation}((v_1^1,\ldots,v_1^N) \ldots (v_k^1,\ldots,v_k^N) \mid \theta)\cdot\Pr^{\sched}(\theta),
\end{multline}
 which is a pre-measure and leads to a unique measure---also written $\Pr^{\implementation}$---over the set of all infinite paths by applying the Carathéodory’s extension theorem~\cite{Bil95}.
We will write $\implementation \modelsAS \spec$ iff $\Pr^{\implementation}(\spec)=1$.\qed

Following is the decoupling problem that we undertake.

\begin{mybox}[width=\textwidth,colback={blue!10!white},title={Problem statement (informal): decoupled planning}][
	Fix a fair, stochastic scheduler $\sched$.
	Suppose $\G$ is a given graph and $\spec_1, \ldots ,\spec_N$ are the objectives.
		Independently design the policies $\pol_1,\ldots,\pol_N$ in a way that $\implementation (\G,\pol_1,\ldots,\pol_N,\sched)\modelsAS \spec_1\land \ldots \land\spec_N$.
\end{mybox}

Towards this goal, we start by stating why a fair \emph{deterministic} scheduler would not suffice in achieving the desired decoupling, which formalizes the informal description in Example~\ref{ex:deterministic versus stochastic scheduler}.

\begin{restatable}[No deterministic scheduler can be universal]{theorem}{NoDetSched}
\label{prop:NoDetScheduler}
    For every deterministic scheduler $\sched$, there exist a graph \(\G\) and reachability
    objectives $\spec_1$ and $\spec_2$ with $(\spec_1\cap\spec_2) \neq \emptyset$ and deterministic policies  $\pol^\sched_1$ and $\pol^\sched_2$ that respectively satisfy $\spec_1$ and $\spec_2$, but 
    %the unique path generated by    $(\G,\pol_1^\sched,\pol_2^\sched,\sched)$    does not satisfy $\spec_1\cap\spec_2$. 
     $(\G,\pol_1^\sched,\pol_2^\sched,\sched) \not \models \spec_1 \wedge \spec_2$. 
\end{restatable}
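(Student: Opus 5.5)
The plan is to exploit determinism: since $\sched$ is deterministic, it fixes a single infinite schedule $p_0p_1p_2\ldots\in\{1,2\}^\omega$ (restricting to two agents, $N=2$). With deterministic policies, the composition then yields exactly one infinite path. So it suffices to build, depending on this schedule, a graph and two policies whose interleaving under $p_0p_1\ldots$ never visits one of the targets. The probability of failure is then $1$, which certainly gives $\not\models$.

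We use a variant of the graph in \ref{subfig:Leftintro}: a line $\ell = u_0 - u_1 - u_2 - u_3 = r$, with bidirectional edges and self-loops at $\ell$ and $r$. The targets are $T_1=\{\ell\}$ and $T_2=\{r\}$. Every path that reaches both endpoints witnesses $\spec_1\cap\spec_2\neq\emptyset$.

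We split into cases on the schedule.
\begin{itemize}
\item If some agent, say agent $i$, is never scheduled, let the other agent's policy head away from $T_i$ and then stay put (e.g.\ go to $r$ and loop, with $i=1$). Agent $i$'s shortest-path policy still satisfies $\spec_i$ alone, but the composed path never reaches $T_i$.
\item If both agents are scheduled infinitely often, we use schedule-dependent, history-based deterministic policies. Since $\sched$ is fixed, a policy may consult the time step $|\rho|$ and hence know $p_{|\rho|}$. Start at $\vinit=u_1$. Define $\pol_1$ to move one step toward $\ell$ at every time $t$, except that if the current vertex is $u_1$ and $p_t=1$, it moves to $u_2$. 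Symmetrically, $\pol_2$ moves toward $r$, except that if it is at $u_2$ and $p_t=2$, it moves to $u_1$. Under the composition, agent 1 only acts when $p_t=1$ and agent 2 only when $p_t=2$, so the path stays in $\{u_1,u_2\}$ forever and visits neither target.
\end{itemize}

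The key check is that each policy alone satisfies its objective. Consider $\pol_1$ running alone. It is applied at every step, including steps with $p_t=2$; at those steps it moves from $u_1$ to $\ell$. So as long as $p_t=2$ for some $t\ge 1$, which holds in this case, $\pol_1$ alone reaches $\ell$. Once there, it stays via the self-loop, and we define its behaviour at $\ell$ and $u_2,u_3$ as ``go toward $\ell$''. The argument for $\pol_2$ is symmetric, using that $p_t=1$ infinitely often.

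The main subtlety is the formal requirement that policies be functions of the path only; this is resolved by indexing on path length, which the fixed deterministic schedule makes legitimate. We also need to handle the boundary case where the composed path starts at $u_1$ with $p_0=2$. There agent 2 is at $u_1$ and moves toward $r$, i.e.\ to $u_2$, which keeps the path in $\{u_1,u_2\}$, so the construction still works. More generally, we verify the invariant that the composed path stays in $\{u_1,u_2\}$ by a quick induction over the four combinations of (current vertex, scheduled agent).
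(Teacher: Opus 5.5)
The gap is in the ``key check'' of Case~2; your overall strategy (fix the unique schedule $p_0p_1\ldots$, index policies by $|\rho|$, trap the composed run away from both targets) matches the paper's, and your Case~1 is fine.

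You claim that $\pol_1$, running alone, moves from $u_1$ to $\ell$ at a step with $p_t=2$. That presumes the solo run is \emph{at} $u_1$ at that step. Since $u_1$ and $u_2$ have no self-loops, the solo run of $\pol_1$ is $u_1u_2u_1u_2\ldots$ until it escapes. It sits at $u_1$ only at even times, so it reaches $\ell$ iff $p_t=2$ for some \emph{even} $t$. Symmetrically, the solo run of $\pol_2$ is at $u_2$ only at odd times, and it reaches $r$ iff $p_t=1$ for some \emph{odd} $t$.

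Take the alternating schedule $p=1212\ldots$. This is exactly the extremely fair scheduler $\sched_{\mathsf{d}}$ of the introduction, and it lies in your Case~2. Both solo runs coincide with the composed run $u_1u_2u_1u_2\ldots$, so neither $\pol_i$ satisfies $\spec_i$, and your construction does not meet the theorem's hypotheses. A secondary issue: your split ``never scheduled'' versus ``scheduled infinitely often'' is not exhaustive, e.g.\ $2\,1^\omega$.

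The fix is to let the trapped run wait at a vertex with a self-loop, so the solo run stays at the decision vertex and observes \emph{every} $p_t$. This is the paper's gadget:
\begin{itemize}
\item the vertices are $v,a_1,a_2$, with a self-loop at $v$, and the targets are $a_1$ and $a_2$;
\item at time $t$, $\pol_i$ takes the self-loop if $p_t=i$ and moves to $a_i$ otherwise.
\end{itemize}
The composed run stays at $v$ forever, while $\pol_1$ alone reaches $a_1$ at the first $t$ with $p_t=2$, and likewise for $\pol_2$.

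The same works in your line graph once you add a self-loop at $u_1$. When scheduled at $u_1$, each agent takes the self-loop. Otherwise $\pol_1$ moves from $u_1$ to $\ell$, and $\pol_2$ moves from $u_1$ via $u_2$ to $r$.

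This only needs each index to occur at least once. So the correct dichotomy is your Case~1 versus ``both agents are scheduled at least once.'' Keep your Case~1: the self-loop construction really does fail on a constant schedule, and the paper's written proof does not treat that case separately.
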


The high level idea (slightly over-simplified) of the proof appears in Example~\ref{ex:deterministic versus stochastic scheduler}, and a rigorous proof can be found Appendix~\ref{app:NoDetScheduler}.
This limitation of (extremely) fair deterministic schedulers inspired us to use stochastic schedulers with the stronger probabilistic fairness~\cite{de1998formal} properties.

\section{Building Support for Safety Components of\\ $\omega$-Regular Objectives}\label{sec:safety}
%Our compositional framework as defined in Section~\ref{sec:safety} unfortunately falls short even if only one of the specifications is safety.
We start by demonstrating the challenge of incorporating safety. 
%through an example, and then propose a solution that achieves soundness with minimal communications and under mild conditions on safety specifications.

\begin{figure}
	 	\centering
	 	\begin{tikzpicture}[>=stealth,thick,node distance=2cm]
	 		
	 		\tikzstyle{state}=[circle,draw,minimum size=7mm,inner sep=0pt]
	 		\tikzstyle{target1}=[circle,draw,minimum size=7mm,inner sep=0pt,fill=red!30]
	 		\tikzstyle{target2}=[circle,draw,minimum size=7mm,inner sep=0pt,fill=blue!30]
	 		
	 		% Nodes
	 		\node[state] (v) {$v$};
	 		\node[target2, right of=v] (a2) {$a_2$};
	 		\node[target1, left of=v] (a1) {$a_1$};
	 		
	 		%Edges
	 		\draw[->] (v) edge[bend left=15] (a2);
	 		\draw[->] (a2) edge[bend left=15] (v);

	 		\draw[->] (v) edge[bend right=15] (a1);
	 		\draw[->] (a1) edge[bend right=15] (v);
	 		
	 		%\draw[->] (v) edge[loop below] ();

	 	\end{tikzpicture}
 			\caption{A graph with a safety objective $G (\neg a_1)$ and a B\"uchi objective $GF a_2$.}\label{fig: determinB}
	 \end{figure}
	 
\begin{example}
Consider the graph depicted in Figure~\ref{fig: determinB}, and two objectives, the safety objective $\varphi_1$ requires ``avoid the Red vertex $a_1$'' and a B\"uchi objective $\varphi_2$ requires ``visit the Blue vertex $a_2$ infinitely often''. Suppose that Agent~$2$ chooses the policy $\pi_2$ that alternates between selecting $a_1$ and $a_2$ at $v$. Note that if $\pi_2$ is always scheduled, then $\varphi_2$ is satisfied. Observe that no matter which policy Agent~$1$ chooses, any stochastic scheduler that schedules $\pi_2$ at $v$ with probability at least $\epsilon>0$ leads to a random walk that visits $a_1$ almost surely and thus violates $\varphi_1$. 
Finally, note that the path $(v, a_1)^\omega$ satisfies $\varphi_1 \wedge \varphi_2$. \qed
\end{example}

As demonstrated in Examples~\ref{ex:deterministic versus stochastic scheduler} and~\ref{ex:coBuchi}, liveness objectives allow for flexible decoupling. Safety, however, is rigid. 
%The fact that our scheduling-based implementations cannot support safety is not surprising. This is because, under random scheduling, the opponents can force visit to every reachable vertex---including the unsafe ones---in the graph with nonzero probability, thereby not fulfilling safety almost surely.
Indeed, it is well known that almost-sure and sure satisfaction coincide for safety~\cite{de2000concurrent}. That is, in order to satisfy safety, a composition needs to satisfy it deterministically on \emph{every} path. %---which is not possible with the scheduler defined in Definition~\ref{def:schedulers}.
This requires communication, which we establish by augmenting the policies with a safety-preserving component defined as follows. 

\smallskip
\noindent\textit{Maximally permissive policies.}
A \emph{maximally permissive policy} for a safety specification $\varphi$ is $\chi: V \rightarrow 2^V$. Intuitively, for a vertex $v$, the set $\chi(v)$ represents safe continuations from $v$. More formally, a path $\rho = v_0, v_1,\ldots$ such that $v_{i+1} \in \chi(v_i)$ satisfies $\varphi$. Maximality of $\chi$ means that for every $v$ and every $u \notin \chi(v)$, there is a continuation from $u$ that violates $\varphi$. 
In order to construct $\chi$, we compute the set of vertices from which $\spec_i$ can be fulfilled, which we  call the \emph{winning region} and denote it by $W$. Then, for every $v \in W$, we define $\chi(v) = \set{u: E(v,u) \wedge u \in W}$. 
Observe that $\chi$ is maximal; proceeding to $u \notin \chi(v)$ violates $\varphi$.

\smallskip
\noindent\textit{Shielded composition.}
We augment the composition defined in Def.~\ref{def:composition} with maximally-permissive policies. Consider policies $\pi_1,\ldots, \pi_N$ and maximally-permissive policies $\chi_1,\ldots, \chi_N$. 
Whenever Agent~$i\in [N]$ is scheduled at vertex $v$, every other Agent~$j\neq i$ sends $\chi_j(v)$ to Agent~$i$. 
Agent~$i$ selects a vertex $\pol_i(v) \in \bigcap_{1 \leq j \leq N} \chi_j(v)$ that is safe for all. 
%locally computes $W\coloneqq \bigcap_{j\neq i} W_j$, and selects an action that is in the intersection of $\pol_i(v)\cap W$.
We will prove that, under mild conditions on the graph and objectives, there is always such a safe vertex to choose. 
%intersection $\pol_i(v)\cap W$ will always be non-empty. This way, if all agents follow this same protocol, every infinite path generated by the composition will fulfill all safety objectives.

\smallskip
\noindent\textit{Decomposition of $\omega$-regular specifications.} 
Consider an $\omega$-regular objective $\varphi_i$ for Agent~$i$. 
It is possible to decompose $\varphi_i$ into a safety and a liveness component  $\specsafe_i$ and $\speclive_i$, such that $\spec_i = \specsafe_i \cap\speclive_i$~\cite{alpern1987recognizing}.
Let $W_i$ be the maximal winning region, $W_i^c$ the complement of $W_i$, and $\reach(W_i^c)$ the objective that contains all paths eventually reaching $W_i^c$.
Define $\speclive_i\coloneqq \spec_i \cup \reach(W_i^c)$.
Observe that $\speclive_i$ is a liveness objective, since no matter what prefix we have seen, we can either stay in $W_i$ and fulfil $\spec_i$, or reach $W_i^c$.
Clearly $\spec_i = \specsafe_i\cap \speclive_i$, and moreover, $\speclive_i$ can be expressed as a parity objective (possibly with additional colours).

We describe how the agents act. Let $\pi_i$ be a policy for Agent~$i$ and let $\chi_i$ be a maximally permissive policy. As before, when Agent~$i$ is scheduled following path $\rho$ that ends at $v$, every other Agent~$j$ sends $\chi_j(v)$ and Agent~$i$ selects $\pi_i(\rho) \in \bigcap_j \chi_j(v)$. 
We call this a \emph{shielded} composition, and denote it by $(\G,\pol_1,\ldots,\pol_n,\sigma)^\sh$. We omit the maximally permissive policies for brevity. 

\smallskip
\noindent\textit{Mutual safety closed.}
Consider an objective $\spec\subseteq V^\omega$, we define $\pref(\spec)\coloneqq \set{u\in V^\star\mid \exists v\in \V^\omega\;.\; uv\in \spec}$ to be the set of all finite prefixes of $\spec$.
We call a pair of objectives $\spec_1$ and $\spec_2$ \emph{mutually safety-closed}, if every path $\path$ in $\G$ fulfilling $\path\in \pref(\spec_1)\cap \pref(\spec_2)$ also fulfils $\path \in \pref(\spec_1\cap \spec_2)$.
By transitivity, we can generalize mutual safety-closure to arbitrary number of objectives.

The mutually safety closed assumption guarantees that every prefix of a path has an extension that satisfies each objective. This ensures that a path that makes a ``wrong step'' can always recover.

\begin{assumption}\label{assuump:safetty closure}
	The objectives $\set{\spec_1,\ldots,\spec_N}$  are mutually safety-closed.
\end{assumption}

	Let $W_i$ be the maximal set of vertices in $\G$ from which the agent $i$ has a policy to fulfil $\spec_i$.
	In other words, $\specsafe_i = \safe(W_i)$.
	Define $W\coloneqq \cap_{i\in [N]} W_i$. 
	By assumption, $W$ is nonempty, since $\vinit\in W_i$ for all $i$.	
	Let $\G[W]$ be the subgraph obtained by removing all the vertices $\V\setminus W$ from $\G$, and by removing all the edges that involve these vertices.
	If Assumption~\ref{assuump:safetty closure} holds, then it can be shown that the subgraph $\G[W]$ is deadlock-free (see Prop.~\ref{Prop:subgraph is dead end free} in Appendix~\ref{appn: safety}), and therefore every path can be extended into an infinite path.

We are now ready to present the main theorem of this section (see its proof in Appendix~\ref{appn: safety}), which suggests the following decoupling framework for $\omega$-regular objectives. The shielded composition that is described above keeps the path in the \emph{strongly-connected} region $G[W]$ that is safe for all agents. This leaves us with liveness, which the policies $\pi_1,\ldots, \pi_N$ take care of. Constructing $\pi_1,\ldots,\pi_N$ is not trivial. In subsequent sections, we develop such constructions 
%only focus on the left side of the implication in Eqn.~\ref{eq:safeety liveness}, 
while assuming that the graph is strongly connected and all objectives are liveness objectives.

\begin{restatable}[Compositional enforcement of the safety components]{theorem}{safetyclosed}\label{thm:safetyclosed}
	Suppose Assumption~\ref{assuump:safetty closure} holds true.
	Then for every fair scheduler $\sched$ and every set of local policies $\pol_1,\ldots,\pol_N$ the following holds:
	\begin{align*}%\label{eq:safeety liveness}
		\left[(\G[W],\pol_1,\dots,\pol_N,\sigma)\modelsAS \bigcap_{i=1}^N \speclive_i\right] \implies
		\left[(\G,\pol_1,\dots,\pol_N,\sigma)^\sh\modelsAS \bigcap_{i=1}^N \spec_i\right].
	\end{align*}
\end{restatable}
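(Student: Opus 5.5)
The plan is to split the conclusion into its safety part and its liveness part, using $\spec_i = \specsafe_i \cap \speclive_i$ with $\specsafe_i = \safe(W_i)$. For safety we show surely, not just almost surely, that every path produced by the shielded composition stays inside $W=\bigcap_i W_i$. For liveness we transfer the hypothesis on $\G[W]$ to the shielded composition on $\G$. The identification of the two compositions is the key structural step.

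\emph{Step 1 (invariance of $W$).} We argue by induction on the length of the generated prefix $v_0\ldots v_k$ that $v_k\in W$ for every realisable schedule and every realisable choice of the agents. The base case is $v_0=\vinit\in W$, which holds because $\vinit\in W_i$ for all $i$. For the inductive step, let $v_k\in W$ and let Agent~$p_k$ be scheduled. The shield forces its choice to lie in $\bigcap_j\chi_j(v_k) = \set{u : E(v_k,u)\wedge u\in W}$. We need this set to be nonempty, and this is exactly deadlock-freedom of $\G[W]$ (Prop.~\ref{Prop:subgraph is dead end free}), which follows from Assumption~\ref{assuump:safetty closure}. Hence $v_{k+1}\in W$. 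It follows that every infinite path in the support of $\Pr^{(\G,\pol_1,\dots,\pol_N,\sigma)^\sh}$ lies in $\safe(W)\subseteq\bigcap_i\specsafe_i$. The set of paths leaving $W$ is a countable union of cylinders, each of measure $0$, so $\Pr(\bigcap_i \specsafe_i)=1$.

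\emph{Step 2 (coupling with the composition on $\G[W]$).} On histories staying inside $W$, the shielded choice $\pi_i(\rho)\in\bigcap_j\chi_j(v)$ is a successor of $v$ in $\G[W]$. So the shielded policies restricted to such histories are exactly policies of $\G[W]$, which is how we read the hypothesis $(\G[W],\pol_1,\dots,\pol_N,\sigma)$: the $\pol_i$ are interpreted as choosing within $\G[W]$. We show by induction on $k$ that for every finite path $\rho$ in $\G[W]$ the cylinder probabilities agree:
\begin{align*}
\Pr^{(\G,\pol_1,\dots,\pol_N,\sigma)^\sh}(\Cyl(\rho)) = \Pr^{(\G[W],\pol_1,\dots,\pol_N,\sigma)}(\Cyl(\rho)).
\end{align*}
The induction unfolds \eqref{eq:implementation:probability over cylinder sets}: the schedule distribution $\Pr^\sched$ is the same in both, and the factors $D(w^j_{i+1})$ coincide on consistent histories. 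By Step~1 both measures give probability $0$ to cylinders leaving $W$. By uniqueness of the Carathéodory extension the two measures therefore coincide on $\pathsinf(\G[W])$.

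\emph{Step 3 (conclusion).} Since the measures coincide, $\Pr^\sh(\bigcap_i\speclive_i)=1$, where $\speclive_i$ restricted to $\G[W]$ is the same set of paths. Intersecting with the probability-one event from Step~1 gives $\Pr^\sh(\bigcap_i\spec_i)=1$.

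The main obstacle is Step~2, specifically making precise that the shielded policies coincide with the policies used in the hypothesis. The scheduled agent's proposal is redirected into $\bigcap_j\chi_j(v)$, while unscheduled agents' recorded choices $\gamma$ are unconstrained. Because only the scheduled agent's choice affects the path, the marginal over paths is the same. To make this rigorous, we would sum out the unscheduled agents' selections in \eqref{eq:implementation:probability over cylinder sets} and check that the consistency conditions on $(\rho,\theta,\gamma)$ are preserved. Fairness of $\sched$ is not needed for this implication beyond the well-definedness of the measures.
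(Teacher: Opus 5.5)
Your proposal is correct and follows the paper's proof. The shield keeps every generated path inside $W=\bigcap_i W_i$: the paper argues this by contradiction (some agent $j$ ``blocks'' the unsafe successor), while you argue it by induction and additionally invoke deadlock-freedom of $\G[W]$ so that the shielded choice set is nonempty. The liveness hypothesis on $\G[W]$ then transfers to the shielded composition; the paper does this informally by ``ignoring the vertices outside $W$'', and your cylinder-wise identification of the two measures makes it precise.

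One small remark on your final paragraph. Under your reading the $\pol_i$ already choose within $\G[W]$ on $W$-histories, so the shield is vacuous there and the two compositions have identical cylinder probabilities directly. There is no need to sum out the unscheduled agents' selections. That argument would be delicate anyway: for scheduled- and full history-aware policies, those selections enter $\gamma$ and influence future choices, so ``only the scheduled agent's choice affects the path'' is not true in general.
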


%\new{
%In light of Theorem~\ref{thm:safetyclosed}, from this point onward, we only focus on the left side of the implication in Eqn.~\ref{eq:safeety liveness}, and assume that the given graph is strongly connected and all objectives are liveness objectives.}
%\KM{Adjust the definitions of these terms in the preliminaries.}

\section{Decoupling Liveness Objectives via \emph{Conventions}}
%\section{Non-existence of Sound and Universal Protocols}\label{sec:nonexist}
%\section{Non-existence of the Universal Decoupling Framework}\label{sec:nonexist}
Unlike safety objectives, we show that liveness objectives \textit{can} be decomposed without any direct communications among agents. 
However, this does not rule out the need for coordination.
In fact, already for B\"uchi objectives on strongly connected graphs, without coordination the agents may not be able to fulfill their objectives.
This is somewhat surprising, and the intuition was sketched in Example~\ref{ex:non-existence-universal-sched}.
We formally state this result in the following theorem.
 
\begin{restatable}[No stochastic scheduler for uncoordinated B\"uchi policies]{theorem}{NoUnivBuchischeduler}
\label{thm:no univ buchi scheduler}
There is a graph $\G$ such that for any stochastic scheduler $\sched$, there are B\"uchi objectives $\spec_1$ and $\spec_2$ with $(\spec_1\cap\spec_2) \neq \emptyset$ and deterministic policies $\pol^\sched_1$ and $\pol^\sched_2$ that respectively satisfy $\spec_1$ and $\spec_2$, but the composition $(\G,\pol_1^\sched,\pol_2^\sched,\sched) \not \modelsAS \spec_1\cap\spec_2$.     
\end{restatable}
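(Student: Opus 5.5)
The plan is to use the graph $\G$ of Figure~\ref{subfig:Rightintro}, with $\vinit=v$ and B\"uchi targets $T_1=\set{b_1}$ and $T_2=\set{b_2}$. The path $(v\,a_1\,b_1\,v\,a_2\,b_2)^\omega$ lies in $\spec_1\cap\spec_2$, so the intersection is nonempty. The construction of the policies will depend on the given scheduler $\sched$, and the key point is the following. In $\G$, vertex $b_1$ can only be entered from $a_1$, and only when the scheduled agent picks $b_1$. So I will let $\pol_2$ never pick $b_1$, and let $\pol_1$ pick $b_1$ only after it has been scheduled for a very long uninterrupted stretch at $a_1$. Since all policies are deterministic, the composed path is a deterministic function of the schedule. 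Hence every event about visits to $b_1$ becomes an event about the schedule, measured by $\Pr^\sched$.

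\textbf{Case 1.} Suppose that with positive probability agent~$2$ is scheduled only finitely often. Let $\pol_1$ be any memoryless policy that cycles $v\,a_1\,b_1\,v$, and let $\pol_2$ be the dual policy. On that event, the path is eventually generated by $\pol_1$ alone. Such a path visits $b_2$ only finitely often, so $\spec_2$ fails with positive probability. The case where agent~$1$ is scheduled only finitely often is symmetric.

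\textbf{Case 2.} Otherwise both agents are scheduled infinitely often almost surely. Then for every time $t$, the event $E_{t,L}$ that agent~$1$ is scheduled at every step of $[t;t+L)$ satisfies $\Pr^\sched(E_{t,L})\to 0$ as $L\to\infty$, because these events decrease in $L$ and their intersection is null. I fix a function $L(n,t)$ such that $\Pr^\sched(E_{t,L(n,t)})\le 2^{-(n+t+2)}$.

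I define $\pol_1$ as follows, using only the path, which reveals the current time. At $v$, $a_2$ and $b_1$ it moves toward $a_1$ (from $a_2$ this means going to $v$). Suppose the path is on its $n$-th visit to $a_1$, a visit that began at time $t$, and it has already spent $L(n,t)$ consecutive steps at $a_1$. Then $\pol_1$ moves to $b_1$. Otherwise it self-loops at $a_1$.

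I define $\pol_2$ as the policy that moves to $a_2$ from $v$, moves to $b_2$ from $a_2$, and moves to $v$ from $a_1$, $b_1$ and $b_2$. Alone, $\pol_2$ satisfies $\spec_2$. Alone, $\pol_1$ waits a finite time at each visit and then reaches $b_1$, so it satisfies $\spec_1$.

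In the composition, consider the $n$-th visit to $a_1$, started at time $t$. Any step at $a_1$ where agent~$2$ is scheduled leads to $v$ and ends the visit. So this visit can produce a transition to $b_1$ only if the schedule lies in $E_{t,L(n,t)}$. The probability that the $n$-th visit ever reaches $b_1$ is therefore at most
\[
\sum_{t\ge 0}\Pr^\sched\!\left(E_{t,L(n,t)}\right)\le 2^{-(n+1)}.
\]
This bound is summable in $n$. By Borel--Cantelli, almost surely only finitely many visits to $a_1$ lead to $b_1$, so $b_1$ is visited finitely often almost surely. Hence the composition violates $\spec_1$, which is even stronger than $\not\modelsAS$.

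The delicate step is that the start times of the visits are random and unbounded. A dwell length depending only on $n$, as in the heuristic of Example~\ref{ex:non-existence-universal-sched}, does not give a summable bound for an arbitrary, history-dependent scheduler. I handle this by indexing the dwell length by both $n$ and the start time $t$, and by taking a union bound over $t$ as above. I also need to check that this union bound is legitimate: the event "the $n$-th visit starts at $t$ and succeeds" is contained in $E_{t,L(n,t)}$, regardless of how $t$ was reached.
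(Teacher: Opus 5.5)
Your proof is correct and follows essentially the paper's route: the same graph, the same split on whether some agent is eventually scheduled forever with positive probability, dwell times at $a_1$ tuned to $\sched$ so that the probabilities of long exclusive blocks are summable, and the first Borel--Cantelli lemma. Indexing the dwell time by both the visit number $n$ and the start time $t$, with a union bound over $t$, cleanly handles the randomness of the attempt start times, which the paper's claim treats only loosely. (Your closing remark is slightly overstated, though: an $n$-only dwell time also works if $L(n)$ is chosen inductively. Once $L(1),\ldots,L(n-1)$ are fixed, the law of the $n$-th start time is fixed, so you can cut off its tail and then choose $L(n)$ to make the finitely many remaining block probabilities small.)
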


We remark that the proof above also shows that even for a reachability objective, despite using  B\"uchi strategies that ensure that every prefix satisfies the B\"uchi objective, the probability of reaching an accepting state can be bounded by $1/2$ (union bound).
\begin{corollary}[No stochastic scheduler for uncoordinated reachability objectives]
There exists a graph $\G$ such that for every scheduler $\sched$, there exists reachability objectives $\spec_1$ and $\spec_2$ with $(\spec_1\cap\spec_2) \neq \emptyset$ and deterministic policies $\pol^\sched_1$ and $\pol^\sched_2$ that respectively satisfy $\spec_1$ and $\spec_2$, but $(\G,\pol_1^\sched,\pol_2^\sched,\sched) \not \modelsAS \spec_1\cap\spec_2$.     
\end{corollary}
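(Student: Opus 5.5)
We reuse the construction from the proof of Theorem~\ref{thm:no univ buchi scheduler} on the graph of Figure~\ref{subfig:Rightintro}. The reachability targets are $T_1=\set{b_1}$ and $T_2=\set{b_2}$. The intersection $\spec_1\cap\spec_2$ is nonempty, as witnessed by the path $v a_1 b_1 v a_2 b_2 \cdots$.

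Given an arbitrary scheduler $\sched$, we take the deterministic policies $\pol^\sched_1,\pol^\sched_2$ of that proof. When not at $a_1$, policy $\pol_1$ steers towards $a_1$ and its target. At $a_1$ it moves to $b_1$ only at designated times $n_0<n_1<\cdots$ and otherwise loops; $\pol_2$ is dual. Since the policies are deterministic, each one alone produces a single path that reaches its target at time $n_0$ or shortly after, so $\pol_i$ satisfies the reachability objective $\spec_i$.

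The key step is to choose the times $n_j$ depending on $\sched$, so that under the composition the event $E_j$ has probability at most $2^{-(j+2)}$. Here $E_j$ is the event that the walk is at $a_i$, with $\pol_i$ scheduled, at a time that $\pol_i$ designates for leaving to $b_i$. Note that leaving $a_i$ towards $b_i$ requires $\pol_i$ to be scheduled at $a_i$ at a designated time; when the other policy is scheduled at $a_i$, it moves back to $v$. We choose the $n_j$ inductively. Given the history up to time $n_{j-1}$, the schedule induced by $\sched$ together with the fixed policies determines a distribution over positions at each later time. We pick $n_j$ late enough that the probability of being at $a_i$ at exactly time $n_j$, with $\pol_i$ scheduled, is small. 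This is possible because fairness forces repeated returns through $v$ and random alternation between $a_1$ and $a_2$, which spreads the mass over times. If $\sched$ is not fair, one policy may essentially never be scheduled, which only makes the failure easier. This choice is exactly what the proof of Theorem~\ref{thm:no univ buchi scheduler} provides.

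A union bound then gives $\Pr(\bigcup_j E_j)\le\sum_j 2^{-(j+2)}\le 1/2$. Hence with probability at least $1/2$ neither $b_1$ nor $b_2$ is ever visited, so $\Pr(\spec_1\cap\spec_2)\le 1/2<1$, which proves the corollary. The main obstacle is the inductive choice of $n_j$ for an arbitrary, history-dependent $\sched$: we must show that the probability of hitting $a_i$ at one specific time, conditioned on the past, can be made arbitrarily small by taking that time large enough. We would take this bound directly from the proof of Theorem~\ref{thm:no univ buchi scheduler}, and check that it bounds the probability of the first visit, not merely of infinitely many visits.
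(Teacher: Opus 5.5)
There is a genuine gap, in the step you single out as key, and the construction fails with it. With \emph{absolute} designated times, taking $n_j$ late does not make $\Pr(E_j)$ small; fairness forces the opposite. Let $\sched$ be fair with constant $\epsilon$ and let $n$ be a designated time of $\pol_1$. Whatever the history up to time $n-2$, suppose agent~$1$ is scheduled at $n-2$, $n-1$ and $n$, which has conditional probability above $\epsilon^3$. Then the walk is at $b_1$ by time $n+1$: it follows $v\to a_1\to a_1\to b_1$, $a_2\to v\to a_1\to b_1$, or $b_2\to v\to a_1\to b_1$, or it simply waits at $a_1$. Since there are infinitely many designated times, $b_1$ is reached almost surely, and symmetrically so is $b_2$. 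So your policies satisfy $\spec_1\cap\spec_2$ almost surely under \emph{every} fair scheduler. For the uniform scheduler, the walk on $\set{v,a_1,a_2}$ is aperiodic with uniform stationary distribution, so $\Pr(E_j)$ stays around $1/6$. Returns through $v$ do not ``spread the mass over times'': $E_j$ concerns a single fixed time, at which $a_i$ always carries a constant share of the mass. The proof of Theorem~\ref{thm:no univ buchi scheduler} provides no bound of this kind. The informal Example~\ref{ex:non-existence-universal-sched}, phrased with absolute times, is misleading on exactly this point.

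That proof relies instead on the exit to $b_i$ being triggered by a \emph{run of consecutive schedulings} of agent~$i$, not by the clock. The paper's one-line justification of the corollary reruns that proof and finishes with the same union bound to $1/2$ that you use. In its $n$-th attempt, $\pol_i$ enters $a_i$, loops there for $L(n)$ steps counted from its arrival, and only then proposes $b_i$. The other agent sends the walk from $a_i$ back to $v$, so reaching $b_i$ in attempt $n$ needs about $L(n)$ consecutive $i$'s from the attempt start $T_n$. When $\Pr(\mathrm{Lock}_i)=0$, for each fixed $t$ the probability of $L$ consecutive $i$'s from $t$ tends to $0$ as $L\to\infty$; this is what permits a summable choice of $L(n)$. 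Since $T_n$ is random, first fix $K_n$ with $\Pr(K_n<T_n<\infty)\le 2^{-n-3}$, then choose $L(n)$ for the finitely many $t\le K_n$.

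Your claim that non-fair schedulers only make failure easier also needs an argument. If $\Pr(\mathrm{Lock}_j)>0$, the other target may still be hit before the lock, for example if $\sched$ schedules $2$ for $1000$ steps and then $1$ forever. So the dwell lengths of $\pol_{3-j}$ must exceed some $K$ such that the lock occurs before $K$ with positive probability.

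For reachability, your clock-based idea can actually be salvaged by asking for less. Give $\pol_1$ a \emph{single} designated time $t\ge 1$ with $\Pr(\sched\text{ schedules }2\text{ at }t)>0$. Then $b_1$ is reached only if agent~$1$ is scheduled at $t$, an event of probability below $1$. If no such $t$ exists, agent~$2$ is almost surely never scheduled after time $0$, and $b_2$ is never reached.
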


These results motivate us to seek mechanisms that would induce the required coordination among agents, without them requiring to establish direct communication during the design and deployment of their policies.
To this end, we formalize {\em conventions}, which are rules that agents should follow while selecting their policies, and it is guaranteed if every agent follow these conventions, the composition will fulfill all objectives. 
Fig.~\ref{fig:convention} illustrates how conventions are used.

\begin{definition}[Conventions]
Consider a family of objectives $\alpha$, where we focus on $\alpha \in \set{\text{B\"uchi}, \text{co-B\"uchi}, \text{parity}}$. 
A convention $\Conv_\alpha$ for the family $\alpha$ is a mapping that takes a graph $\G$, a scheduler $\sched$, and an objective $\varphi$, and outputs a collection of policies of type $\beta$ that the agent can choose from. We focus on $\beta \in \set{\SAPolFH{N},\SAPolS{N},\SAPolP}$ and design a convention $\Conv_\alpha$ such that for every $\G,\sched$, and $\varphi \in \alpha$, we have $\Conv_\alpha(\G, \sched, \varphi) \subseteq \beta$. 
\begin{itemize}
\item Soundness: A convention $\Conv_\alpha$ is sound for classes of graphs and schedules, if for every graph $\G$ and schedule $\sched$ from the respective class, and collection of objectives $\varphi_1,\ldots, \varphi_N \in \alpha$, if for every $i \in \set{1,\ldots, N}$ and $\pi_i \in \Conv_\alpha(\G, \sched, \varphi_i)$, we have $(\G,\pol_i,\dots,\pol_N,\sigma) \modelsAS \spec$. 
\item A partial order on conventions: Consider two conventions $\Conv_\alpha$ and $\Conv'_\alpha$. We define $\Conv_\alpha \preceq \Conv'_\alpha$ if $\Conv_\alpha$ uses policies with weaker scheduling information or if $\Conv_\alpha$ is less restrictive than $\Conv'_\alpha$, that is for every $\G, \sched, \varphi$, we have $\Conv'_\alpha(\G, \sched, \varphi) \subseteq \Conv_\alpha(\G, \sched, \varphi)$.
\end{itemize}
\end{definition}

\begin{figure}[ht]
\includegraphics[height=3cm]{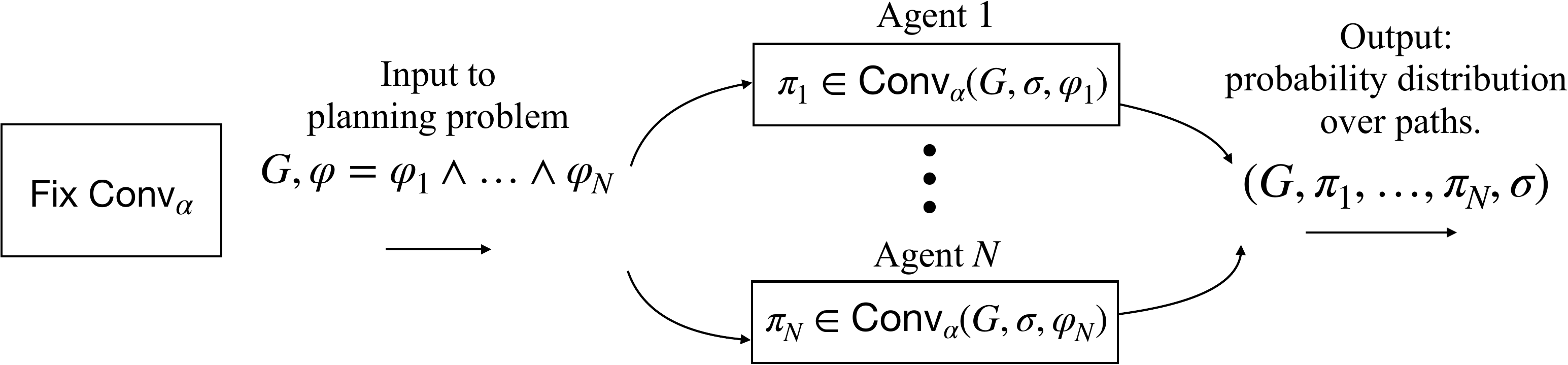}
\caption{Decoupling planning using a convention $\Conv_\alpha$. Observe that $\Conv_\alpha$ is independent of the graph, scheduler, or specific objectives. Agent~$i$ independently chooses $\pi_i \in \Conv_\alpha(\G,\sched,\varphi_i)$ that agrees with the convention. Soundness of $\Conv_\alpha$ implies that the output almost-surely satisfies $\varphi$.}
\label{fig:convention}
\end{figure}

We now present the problem that we will study.
%Guy: This notation appears in the definition.
%We will use the notation $\SAPolFH{N}$, $\SAPolS{N}$, $\SAPolP$ to respectively represent the family of \emph{all} full history-aware, scheduled-aware, and path-aware policies over \emph{all} graphs.

\begin{mybox}[width=\textwidth,colback={blue!10!white},title={Problem statement: decoupled planning of liveness objectives}][
We focus on strongly-connected graphs and fair stochastic schedulers. For a family of objectives $\alpha \in \set{\text{B\"uchi}, \text{co-B\"uchi}, \text{parity}}$. Find a sound and minimal convention $\Conv_\alpha$. 
%That is, for every strongly-connected graph $\G$, stochastic scheduler $\sched$, and objectives $\varphi_1,\ldots, \varphi_N$, we have $(\G,\pol_i,\dots,\pol_N,\sigma) \modelsAS \spec$. 
\stam{
Fix a family of objectives, namely B\"uchi, co-B\"uchi, or parity, and fix a number of objectives $N$.
Determine a family of \emph{sufficiently restricted} set of policies $\SAP\subseteq\SAPolFH{N}\cup \SAPolS{N}\cup \SAPolP$, with $\SAP\neq \emptyset$, such that: 
for every graph $\G$, for every set of liveness objectives $\set{\spec_1,\ldots,\spec_N}$ in $\G$ from the selected family, with $\cap_{i\in [N]}\spec_i \neq \emptyset$, 
every agent $j\in [N]$ can independently select a policy $\pol_j\in \SAP$ that depends on $\G$ and $\spec_j$, so that for every fair scheduler $\sched$, we obtain $\implementation = (\G,\pol_i,\dots,\pol_N,\sigma)\modelsAS \bigcap_{j=1}^N \spec_j$.
}
\end{mybox}

Our focus is on finding sound conventions and we do not show optimality, i.e., provide lower bounds to show inexistence of better conventions. We argue that finding conventions that suffice for decoupling is a nontrivial and challenging problem since conventions are fixed before the graph, scheduler, and objectives are known. Moreover, developing conventions has practical applications whereas showing their inexistence does not.

%We proceed as follows.
%We first establish that we do need restrictions on policies (Sec.~\ref{sec:nonexist}), and then develop restrictions for the prefix-independent case of all objective families (Sec.~\ref{sec:buchi}, \ref{sec:coBuchi}, \ref{sec:parity}).
%Then, we consider the prefix-dependent case, where we slightly modify the construction of our composition to deal with mutual safety violations (prefix-dependent property) by the policies (Sec.~\ref{sec:safety}).

\section{Convention for Decoupling B\"uchi Objectives}\label{sec:buchi}

We show that decoupling is possible for B\"uchi objectives with mildly restricted path-aware policies; that is, B\"uchi admits a strong form of decoupling.

%OLD that if the policies are such that a reachability or B\"uchi set is seen after a finite prefix, if the scheduler chooses each module independently at each step with a fixed positive probability (for simplicity: i.i.d. uniform choice between the two modules), and each policy satisfies a \emph{uniform bounded hitting-time property} defined below, then with probability 1 the resulting random execution satisfies the conjunction of the B\"uchi objectives. 

\begin{definition}[Policies with uniform bounded hitting-time]
Let $\G=(V,E,v_0)$ be a graph and let $\pi\in \Pol^P$ be a stochastic policy on~$\G$.
Fix a subset $B \subseteq V$ and a finite path $\alpha = v_0 v_1 \ldots v_k \in \pathsfin(\G)$.
The \emph{hitting time random variable} of~$B$ under~$\pi$ from~$\alpha$
is defined by
\[
  \tau_\pi(\alpha)(\rho)
  \;:=\;
  \min\{\, t\ge 1 \mid v_{k+t}\in B \text{ in the extension } \rho \in \mathrm{Ext}_\pi(\alpha)\,\},
\]
with the convention $\tau_\pi(\alpha)(\rho)=\infty$ if no such~$t$ exists.
The \emph{expected hitting time} of~$B$ from~$\alpha$ under~$\pi$ is then
$  \mathbb{E}_\pi[\tau_\pi(\alpha)]
  \;=\;
  \int_{\mathrm{Ext}_\pi(\alpha)} \tau_\pi(\alpha)(\rho)\, d\Pr^\pi(\rho)$.
We say that $\pi$ has \emph{uniformly bounded expected hitting time} for~$B$
if there exists a finite constant~$L$ such that  $\sup_{\alpha \in \pathsfin(\G)} 
  \mathbb{E}_\pi[\tau_\pi(\alpha)] \;\le\; L$.
\stam{%short
\[
  \sup_{\alpha \in \pathsfin(\G)} 
  \mathbb{E}_\pi[\tau_\pi(\alpha)] \;\le\; L.
\]
}
% For a finite prefix $\alpha=v_0\ldots v_k$, we define the \emph{hitting time} of a subset of vertices $B$ under $\pi$ from $\alpha$ by
% \[
%   \tau_\pi(\alpha) \;:=\; \min\{\, t\ge 1 \mid v_{k+t}\in B
%   \text{ in } \mathrm{Ext}_\pi(\alpha)\,\},
% \]
% with the convention $\tau_\pi(\alpha)=\infty$ if no such $t$ exists.
% The \emph{hitting time} for vertex set $B$ by a policy $\pi$ is smallest integer $L\ge 1$ such that
% \[
%   \sup_{\alpha\in\pathsfin(\G)} \tau_\pi(\alpha) \;\le\; L.
% \]
% A policy has \emph{uniform bounded hitting time} for $B$ if $L<\infty$. 
\end{definition}

The following theorem proves implementation of any such policies with a stochastic scheduler is good for all. The proof can be found in Appendix~\ref{appn: buchi}. 

%states that any stochastic scheduler that schedules all properties infinitely often with a constant probability is a universal scheduler. 
\begin{restatable}%[Reachability and B\"uchi specifications admit universal schedulers with dense policies with a uniform bound on expected hitting-time]
{theorem}{denseUniformBuchi}\label{thm:denseUniformBuchi}
Consider a strongly connected graph $\G$, policies $\pol_1,\ldots, \pol_N$ with uniform bounded hitting-time for B\"uchi objectives $\spec_1,\dots,\spec_N$, and a stochastic scheduler $\sched$. Then, $(\G, \pol_1,\ldots, \pol_N,\sched) \modelsAS \bigcap_{1 \leq i \leq N} \spec_i$.
%Under fair schedulers, the sufficient condition required for B\"uchi protocol to be strong and 
\end{restatable}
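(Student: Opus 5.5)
Fix $i\in[N]$; by a countable intersection it suffices to show that the composition visits $B_i$ (the B\"uchi vertices of $\spec_i$) infinitely often almost surely. Since $\sched$ is fair, there is $\epsilon>0$ such that at every step, regardless of the past schedule, Agent~$i$ is scheduled with probability at least $\epsilon$. The plan is to show there exist $K$ and $\delta>0$ such that from \emph{every} finite history of the composition, the probability of visiting $B_i$ within the next $K$ steps is at least $\delta$. A standard conditional Borel--Cantelli (Lévy's extension) argument then yields that $B_i$ is visited infinitely often with probability one: the events ``$B_i$ is hit in the $m$-th block of $K$ steps'' each have conditional probability $\ge\delta$ given the past, so $\Pr(\text{no hit in blocks } m,\dots,m+r)\le(1-\delta)^{r+1}\to 0$ for every $m$.

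To obtain the uniform bound, use the uniform bounded expected hitting time of $\pol_i$: there is $L$ with $\mathbb{E}_{\pol_i}[\tau_{\pol_i}(\alpha)]\le L$ for all $\alpha\in\pathsfin(\G)$. By Markov's inequality, for every $\alpha$, $\Pr^{\pol_i}(\tau_{\pol_i}(\alpha)\le 2L)\ge 1/2$. Set $K=\lceil 2L\rceil$. Now take any history $h$ of the composition whose path is $\alpha$. Consider the event $S$ that Agent~$i$ is scheduled at each of the next $K$ steps; by fairness, $\Pr(S\mid h)\ge\epsilon^K$, and this holds conditionally on the selections of the agents. The point is that $\pol_i$ is path-aware: its choice at each step depends only on the current path. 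Thus on $S$ the next $K$ vertices are produced exactly as if $\pol_i$ ran alone from $\alpha$, and the joint law of these $K$ choices of $\pol_i$ is the one induced by $\Pr^{\pol_i}$ on extensions of $\alpha$. One has to check this coupling carefully from the composition semantics: the product in \eqref{eq:implementation:conditional probability} factorizes over agents, and the schedule is drawn by $\sched$ independently of the agents' selections given the past. This gives $\Pr(\text{hit } B_i \text{ within } K \text{ steps}\mid h)\ge \epsilon^K\cdot\tfrac12=:\delta$.

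I expect the main obstacle to be this coupling/independence step. It requires arguing that the scheduler's randomness and the policies' internal randomness are independent conditioned on the history, so that conditioning on $S$ does not bias $\pol_i$'s choices. The hypothesis that $L$ is a supremum over \emph{all} finite paths $\alpha$ is what makes the bound uniform, because the prefix reached in the composition may be one that $\pol_i$ alone would never generate. Strong connectivity is not needed beyond guaranteeing that paths are infinite and the hitting-time assumption is meaningful. With $\delta$ uniform, the block argument applies, and intersecting over $i\in[N]$ completes the proof.
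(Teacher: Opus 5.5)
Your proposal is correct and follows the same strategy as the paper's proof. Fairness gives, with conditional probability at least $\epsilon^{K}$, a window in which only Agent~$i$ is scheduled; the hitting-time bound then yields a visit to $B_i$ inside that window, and this repeats infinitely often. Your execution is tighter than the paper's: Markov's inequality gives a \emph{uniform} per-window success probability $\tfrac12$, whereas the paper, using windows of length $L_i$, only asserts ``positive probability''. Your single geometric-decay bound $(1-\delta)^{r+1}$ also replaces the paper's two-step argument (first infinitely many exclusive windows, then conditioning on them), which leaves the needed uniformity and scheduler/policy independence implicit.
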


\stam{%old
\begin{corollary}\label{cor:Buchi+dense+bounded hitting+uniform scheduler}
    The uniform scheduler $\sigma:[N]^\star\mapsto [N]$, that is,  $\Pr[j = \sigma(u)]=1/N$ for all $u\in [N]^\star$, and $\forall j\in [N]$ is a universal scheduler for B\"uchi objectives on strongly connected graphs. 
\end{corollary}
\begin{corollary}
    Any scheduler $\sigma:[N]^\star\mapsto [N]$ such that there is a positive constant $\epsilon$ such that $\Pr[j = \sigma(u)]>\epsilon$ for all $u\in [N]^\star$, and $\forall j\in [N]$ is a universal scheduler for B\"uchi objectives on strongly connected graphs for finite-memory policies. 
\end{corollary}
}
\section{Convention for Decoupling Co-B\"uchi Objectives}\label{sec:coBuchi}
%\subsection{On the non-existence of scheduler-unaware local policies}
We start by showing that restricting to dense path aware policies does not suffice for co-B\"uchi objectives. 
%In the situation in the case of coB\"uchi, we not only show that we cannot compose two arbitrary policies, even when they are finite memory using fair schedulers. universal schedulers, we show that there are no universal schedulers when the agents are restricted to finite-memory or even memoryless. 
\begin{restatable}%[Dense path aware policies for coB\"uchi specifications are not enough]
{theorem}{NoUnivCoBuchi}\label{theorem:NoUnivCoBuchi}
There is a graph, two 
%For $N=2$, and any scheduler $\sched$, there exists 
co-B\"uchi objectives $\spec_1$ and $\spec_2$ with $(\spec_1\cap\spec_2) \neq \emptyset$ and deterministic memoryless policies $\pol^\sched_1$ and $\pol^\sched_2$ that respectively satisfy $\spec_1$ and $\spec_2$, but for any scheduler $\sched$, we have $(\G,\pol_1^\sched,\pol_2^\sched,\sched) \not \modelsAS \spec_1\cap\spec_2$.     
\end{restatable}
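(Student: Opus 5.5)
We will reuse the graph of Example~\ref{ex:coBuchi} (the hexagon of \ref{subfig:Leftintro}, with self-loops at $\ell$ and $r$). Take $\spec_1=\eventually\always\,\ell$, i.e.\ every vertex other than $\ell$ is co-B\"uchi (bad) for Agent~$1$, and $\spec_2=\eventually\always\,(\ell\vee r)$. The path that moves from the initial vertex to $\ell$ and then loops forever witnesses $\spec_1\cap\spec_2\neq\emptyset$. The policies are the memoryless deterministic ones drawn in the figure: $\pol_1$ moves along a shortest path to $\ell$ and loops at $\ell$; $\pol_2$ moves along a shortest path to $r$ and loops at $r$. Running alone, each policy reaches its target and stays there, so it satisfies its objective. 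Because the statement quantifies over \emph{every} scheduler $\sched$ while the policies are fixed, the policies must not depend on $\sched$, and this construction meets that requirement.

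The main step shows that no scheduler yields almost-sure satisfaction of $\spec_1\cap\spec_2$. The key observation is deterministic, not probabilistic. Since both policies are deterministic, every realised path is one of the paths generated by some schedule $\theta\in\set{1,2}^\omega$. Let $\rho$ be any such path that satisfies $\spec_1\cap\spec_2$. Then $\rho$ eventually stays at $\ell$. Once the path sits at $\ell$ from time $t$ on, every scheduling of Agent~$2$ at a time $\ge t$ would move the path to $t_1$ (the first step of $\pol_2$ from $\ell$ toward $r$), leaving $\ell$. So the path stays at $\ell$ only if $\theta$ eventually schedules Agent~$1$ forever. Hence $\spec_1\cap\spec_2\subseteq\set{\theta \mid \theta \text{ is eventually constantly } 1}$, after pulling back through the deterministic map from schedules to paths.

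It remains to show that this event has probability less than $1$ under $\sched$. This is where the strength of the claim, ``for any scheduler,'' becomes the obstacle. A scheduler that deterministically schedules Agent~$1$ forever would succeed, so the literal statement needs the standing restriction to fair stochastic schedulers from the problem statement, and we will invoke that. Under fairness with parameter $\epsilon$, for every $t$ the probability that all steps from $t$ to $t+k$ schedule Agent~$1$ is at most $(1-\epsilon)^{k}\to0$. So the event ``constantly $1$ from time $t$'' is null, and a countable union over $t$ gives probability $0$. Thus $\Pr(\spec_1\cap\spec_2)=0$, which is even stronger than non-almost-sure satisfaction.

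We will also state explicitly that $\pol_1,\pol_2$ are chosen independently of $\sched$ (only the notation $\pol^\sched_i$ suggests otherwise). The delicate point to check is the precise successor choices of the policies in the figure, and in particular that $\pol_2$ leaves $\ell$ immediately. If we prefer, we can instead rely on the symmetric fact that $r$ is bad for $\spec_1$: any visit to $r$ forces Agent~$1$ to leave, which gives the same conclusion.
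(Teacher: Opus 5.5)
\textbf{The proposal proves a weaker statement, and the restriction it adds is not needed.} Your argument is correct for fair schedulers, and the bound $(1-\epsilon)^{k}$ plus the countable union is fine. However, the theorem fixes the policies first and then quantifies over \emph{every} scheduler in the sense of Definition~\ref{def:schedulers}, including deterministic and unfair ones. The paper's proof explicitly handles the case where one agent is eventually always scheduled with positive probability. Your construction cannot cover that case, and no choice of policies can save it. Since $\eventually\always\,\ell \subseteq \eventually\always\,(\ell\vee r)$, you have $\spec_1\subseteq\spec_2$. So the run of $\pol_1$ alone (or, for a randomised $\pol_1$, almost every such run) already lies in $\spec_1\cap\spec_2$. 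The scheduler that always picks Agent~$1$ then satisfies the conjunction. Your remark that the literal statement ``needs'' fairness is therefore wrong: only your choice of objectives needs it.

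\textbf{The missing idea is to make every interleaving fail.} The paper uses a four-vertex graph with edges $v_0\to v_0$, $v_0\to v_1$, $v_1\to v_{11}$, $v_1\to v_{12}$, $v_{11}\to v_0$ and $v_{12}\to v_0$. Vertex $v_{1j}$ is the bad vertex for $\spec_j$, and $v_0^\omega$ witnesses $\spec_1\cap\spec_2\neq\emptyset$. Both policies move $v_0\to v_1$ and return from $v_{1j}$ to $v_0$. At $v_1$, however, each picks the \emph{other} agent's bad vertex: $\pol_1(v_1)=v_{12}$ and $\pol_2(v_1)=v_{11}$. Each policy alone satisfies its own objective. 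Yet every composed path has the form $v_0v_1x_1v_0v_1x_2\cdots$ with each $x_k\in\set{v_{11},v_{12}}$. So on every path, under every schedule, one of the two bad vertices is visited infinitely often, and the conjunction fails surely for every scheduler, fair or not. The paper phrases this as a case split on whether some agent is eventually always scheduled, but this observation makes the split unnecessary.
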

\begin{proof}[sketch]
Consider the graph $\G$ depicted in \cref{figure:recoBuchiNotUniversal}. For $i \in \set{1,2}$, the ``bad'' co-B\"uchi vertex for objective $\varphi_i$ is $v_{1i}$. Note that $(\spec_1\cap\spec_2) \neq \emptyset$ since $v_0^\omega$ satisfies both objectives. We define $\pi_i(v_1) = v_{1,i}$, otherwise the policies are identical. Observe that $\pi_i$ satisfies $\varphi_i$. However, for any stochastic scheduler, all vertices will be almost-surely visited, which violates both objectives. 
The proof can be found in App.~\ref{app:NoUnivCoBuchi}. 
\stam{%OLD
We construct an explicit counterexample showing that no scheduler can compose two path-aware policies for coB\"uchi policies. 
The graph $\G$ (see \cref{figure:recoBuchiNotUniversal}) is such that $v_{11}$ and $v_{12}$ are the respective coB\"uchi vertices for the specifications $\spec_1$ and $\spec_2$. 
Both policies $\pi_1$ and $\pi_2$ behave identically everywhere except at $v_1$, where $\pi_1$ moves to $v_{12}$ and $\pi_2$ moves to $v_{11}$. 
Thus, each policy individually satisfies its own specification.%, and there exists a path (e.g.\ looping at $v_0$) that avoids both bad vertices, so $\spec_1\cap\spec_2\neq\emptyset$.

Now consider any scheduler $\sigma$. 
If $\sigma$ eventually always schedules one agent (say agent~1), then the composition behaves as $\pi_1$ from some point onward, and visits $v_{12}$ infinitely often, violating $\spec_2$. 
Hence, $\spec_1\cap\spec_2$ is violated on a set of positive probability. 
If, on the other hand, $\sigma$ schedules both modules infinitely often, then both $v_{11}$ and $v_{12}$ are visited infinitely often, violating both coB\"uchi objectives. 
Therefore, in every case, the composition $(\G,\pi_1,\pi_2,\sigma)$ fails to almost-surely satisfy $\spec_1\cap\spec_2$, establishing that no scheduler can be universal for coB\"uchi objectives.
}
\end{proof}

\begin{wrapfigure}{r}{5cm}
\centering
\begin{tikzpicture}[->,>=stealth,node distance=1cm]
  \tikzstyle{normal}=[circle,draw,minimum size=8mm]
  \tikzstyle{bad1}=[circle,draw,fill=red!50,minimum size=8mm]
  \tikzstyle{bad2}=[circle,draw,fill=blue!50,minimum size=8mm]
  \tikzstyle{shared}=[circle,draw,fill=purple!40,minimum size=8mm]

  \node[normal] (v0) {$v_0$};
  \node[normal, right=1cm of v0] (v1) {$v_1$};
  % \node[normal, left=1.5cm of v0] (v2) {$v_2$};

  \node[bad1, right=0.6cm of v1, yshift=+0.5cm] (v11) {$v_{1,1}$};
  \node[bad2, right=0.6cm of v1, yshift=-0.5cm] (v12) {$v_{1,2}$};

  % \node[shared, left=1cm of v2, yshift=+1cm] (v21) {$v_{21}$};
  % \node[normal, left=1cm of v2, yshift=-1cm] (v22) {$v_{22}$};
%=======
%  \node[bad1, right=0.6cm of v1, yshift=+0.5cm] (v11) {$v_{11}$};
%  \node[bad2, right=0.6cm of v1, yshift=-0.5cm] (v12) {$v_{12}$};
%>>>>>>> refs/remotes/origin/main

  \draw[purple,thick] (v0) -- (v1);
  % \draw (v0) -- (v2);

  \draw[red,thick] (v1) -- (v12);   
  \draw[blue,thick] (v1) -- (v11);  

  % \draw (v2) -- (v21);
  % \draw (v2) -- (v22);

  \draw (v11) to [out=160,in=45] (v0);
  \draw (v12) to [out=200,in=-45] (v0);
\draw (v0) to [out=60,in=120,looseness=8] (v0);

\end{tikzpicture}
\caption{A graph with two co-B\"uchi objectives given by bad states $v_{1,1}$ ({\color{red} Red}) and $v_{1,2}$ ({\color{blue} Blue}). Red edges are chosen by $\pol_1$ and Blue edges by $\pol_2$.}
\label{figure:recoBuchiNotUniversal}
\end{wrapfigure}

%
%\section{Deterministic Universal Schedulers?}
%\input{3Deterministicschedulers}
%\section{Stochastic universal schedulers}
%\input{4stochasticScheduler}
%\section{Universal Local policies}
\subsection{A convention for co-B\"uchi objectives}

%We proceed to describe decoupling for co-B\"uchi objectives using scheduled-aware policies. We describe a convention for how an agent chooses a policy. We stress that the convention is fixed {\em before} knowing the specific graph and co-B\"uchi objective. Thus, the convention can be applied by the agents independently. 
	
Consider a strongly-connected graph $\G$, a stochastic scheduler $\sched$, and a co-B\"uchi objective $\varphi$ given by a set of ``bad'' states $B$. Let $\mathfrak{C}_B$ denote the set of ``good'' simple cycles in $\G$ that are disjoint from $B$. A policy $\pol \in \Conv_{\text{co-B\"uchi}}(\G, \sched, \varphi)$ proceeds as follows. It randomly chooses a lasso path $\rho \cdot C^\omega$, where $\rho$ is a simple path and $C \in \mathfrak{C}_B$. 
If $\pol$ is scheduled in a turn, if $C$ has not yet been reached, it chooses the next vertex on $\rho$, and otherwise advances on $C$. Suppose that $\pol$ is not scheduled, and let $\pol'$ be the scheduled policy. Let $v$ be the vertex that $\pol$ would have chosen if it was scheduled and let $v'$ be the vertex that $\pol'$ chooses. If $v = v'$, we proceed to the next turn. If $v \neq v'$, then $\pol'$ realizes it is in {\em conflict} with $\pol$: the lasso that $\pol$ chose differs from the choice of $\pol'$. Then, $\pol'$ randomly chooses a different lasso path $\rho' \cdot C'^\omega$ with $C' \in \mathfrak{C}_B$. Correctness will be derived from the property of co-B\"uchi objectives that if there is a path that satisfies all objectives, then there is a lasso $\rho \cdot C^\omega$ with $C \in \mathfrak{C}_B$ that is good for all.  The idea is that almost surely, all policies will eventually choose the same lasso path leading to the generated path ``stabilizing'' on a cycle that is good for all.

We describe $\pol$ formally. The memory states of $\pol$ are $M = \pathsfin(\G) \times \mathfrak{C}_B$, where a state $\zug{\rho, C} \in M$ means that $\pol$ chooses cycle $C$ and the path $\rho$ leading to it. 
The interaction with a scheduler and the other policies generates a random sequence, and $\pol$'s {\em local} view of the sequence is $\zug{m_0, v_0}, \zug{m_1, v_1},\ldots$, where for $i \geq 0$, each $m_i = \zug{\eta_i, C_i} \in M$ is a memory state and $v_i \in V$ is a location on $\G$. 
%For $j \geq 0$, we denote the $j$-th vertex of $\eta$ by $\eta[j]$ and by $\eta[j:]$ as the suffix starting from the $j$-th vertex. 
The policy maintains the invariant that the location in $\G$ matches the first vertex of $\eta$, namely $\eta_i[0] = v_i$. 
We use $\top$ and $\bot$ to respectively denote that $\pol$ is and is not scheduled. 
The definition of $\zug{m_{i+1}, v_{i+1}} = \pol(m_i, v_i, \top)$ is deterministic and depends on whether the cycle has been reached: (1)~$C_i$ has not been reached: then we define $v_{i+1} = \eta[1]$ and $m_{i+1} = \zug{\eta[1:], C_i}$, and (2)~$C_i$ has been reached: then $\eta_i= \epsilon$ and we define $m_{i+1} = m_i$ and $v_{i+1}$ is the successor of $v_i$ on $C$. 
We proceed to define $\pol(m_i, v_i, \bot)$. Let $\zug{m, v} = \pol(m_i, v_i, \top)$ be $\pol$'s choice if it was scheduled, and let $v_{i+1}$ be the vertex chosen by some other policy. If $v = v_{i+1}$, then the update is as above, $m_{i+1} = m$. 
If $v_{i+1} \neq v$, then $\pol$ randomly selects $C_{i+1} \in \mathfrak{C}_B$ and a simple path $\eta_{i+1}$ from $v_{i+1}$ to $C_{i+1}$.

To prove the following theorem, we construct a Markov chain that simulates the global state of the policies and show that its Bottom Strongly Connected Components (BSCCs) are states in which the policies are in consensus regarding the choice of a lasso path that is good for all. The proof then follows from the property of Markov chains that a random walk reaches a BSCC almost surely. 
The full proof is available in Appendix~\ref{app:coBuchischeduling}.

\begin{restatable}{theorem}{coBuchischeduling}\label{thm: coBuchischeduling}
Let $B_1,\ldots, B_N$ be a collection of co-B\"uchi objectives on a graph $\G$. For every choice of policies that satisfies the co-B\"uchi condition $\pol_i \in \chi_{B_i}$, for $i \in [N]$, and a fair scheduler $\sched$, we have $(\G, \pi_1,\ldots, \pi_N, \sched) \modelsAS \bigwedge_{1 \leq i \leq N} \varphi_i$. 
\end{restatable}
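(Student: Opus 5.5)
The plan is to show that the global state of the composition evolves as a finite Markov chain, that almost surely it ends up in a bottom strongly connected component, and that every bottom component is a state of consensus on a lasso path that is good for all objectives.

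We work in the setting the paper has fixed: strongly connected graphs, liveness objectives, a fair scheduler, and $\bigcap_i \varphi_i \neq \emptyset$. Each $\pi_i \in \Conv_{\text{co-B\"uchi}}(\G,\sched,\varphi_i)$ is the lasso-guessing policy described above (written $\chi_{B_i}$ in the statement).

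\textbf{Step 1: a finite Markov chain over global states.} A global state is $(v, m_1,\ldots,m_N)$, where $v \in V$ and $m_i = \zug{\eta_i, C_i}$. Here $\eta_i$ is a simple path and $C_i \in \mathfrak{C}_{B_i}$ is a simple cycle. Because simple paths and simple cycles are finitely many, this state space is finite.

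The transitions come from two sources. First, the scheduler picks an index $j$. Second, each non-scheduled agent $i$ whose predicted successor differs from $\pi_j$'s choice resamples its lasso, according to its fixed resampling distribution with full support. All other updates are deterministic.

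The scheduler $\sched$ may depend on its history, so the process is not literally time-homogeneous. Fairness still gives a uniform lower bound: every admissible successor global state has probability at least $\epsilon\cdot p_{\min}^{N}$, where $p_{\min}$ is the minimal resampling probability. The standard argument then applies: from any state from which a consensus state is reachable within $K$ steps, one is reached with probability at least $\delta = (\epsilon p_{\min}^N)^{K}>0$. Applying Borel--Cantelli, or equivalently a geometric-trials argument over disjoint windows of length $K$, a consensus state is reached almost surely. I would either phrase this directly in this form or couple to a homogeneous chain whose transition support is given by the $\epsilon$-lower bound.

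\textbf{Step 2: the existence of a common good lasso.} We use $\bigcap_i\varphi_i\ne\emptyset$. A path visiting each $B_i$ only finitely often eventually stays in $V\setminus\bigcup_i B_i$. Its infinitely-visited vertices therefore contain a simple cycle $C^\star$ avoiding every $B_i$, so $C^\star \in \bigcap_i \mathfrak{C}_{B_i}$.

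Consensus states are the global states where every agent holds the same lasso with cycle $C^\star$, started from the current vertex. From such a state there is never a conflict, since all agents predict the same successor. Hence no resampling occurs, the state is absorbing up to cycling along $C^\star$, and the generated path is eventually periodic on $C^\star$, which satisfies every $\varphi_i$.

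\textbf{Step 3 (main obstacle): reaching consensus from every global state within $K$ steps with positive probability.} I would try the following first. Fix an agent, say $1$, and schedule it once while another agent disagrees. If no one disagrees, first move along the current lassos. Each disagreeing non-scheduled agent then resamples. With positive probability, each such agent resamples exactly agent~$1$'s remaining lasso suffix if that suffix is simple and ends in a cycle of its $\mathfrak{C}_{B_i}$. Otherwise it resamples a lasso from the current vertex to $C^\star$; such a lasso exists by strong connectivity.

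The delicate part is aligning everyone onto $C^\star$. My concrete plan is to choose a witness: schedule agent~1 until a conflict is triggered, and have every other agent resample the lasso $\eta\cdot (C^\star)^\omega$ from the new vertex. Then repeatedly schedule some agent $k\neq 1$. As long as agent~1 disagrees, agent~1 resamples that same lasso, while the other agents agree and do not change. If agent~1 happens to agree, everyone already agrees.

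One complication is the corner case where no conflict ever occurs. This only happens if all agents already predict identical successors forever. In that case all agents follow the same deterministic trajectory, which ends on a cycle good for every agent simultaneously (each $C_i$ is in $\mathfrak{C}_{B_i}$ and the trajectories coincide), so the objectives hold directly.

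The bound $K$ is uniform because the state space is finite. Combining this with Step~1 gives almost-sure satisfaction of $\bigwedge_i\varphi_i$.
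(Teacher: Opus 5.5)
Your architecture (a finite chain on global states with a uniform $\epsilon\,p_{\min}^{N}$ lower bound, absorption once all agents follow a common good lasso, and bounded-time positive-probability reachability of agreement) is the same as the paper's Markov-chain/BSCC argument. Two of your choices are more careful than the paper's. First, you handle a history-dependent $\sched$, whereas the paper writes a constant $\sched(j)$. Second, you send resampling agents to a cycle $C^\star\in\bigcap_i\mathfrak{C}_{B_i}$. The paper's step~(2) instead has a non-scheduled agent copy the scheduled agent's memory, whose cycle need not lie in that agent's own $\mathfrak{C}_{B_{j'}}$.

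The gap is in your Step~3, which is the heart of the proof. You write ``have every other agent resample the lasso $\eta\cdot(C^\star)^\omega$''. Under the convention, as you yourself state in Step~1, only agents whose predicted successor differs from agent~1's move resample. Agents that happened to predict the same vertex keep their own lassos, which may differ both from agent~1's and from $\eta\cdot(C^\star)^\omega$. So ``the other agents agree and do not change'' is unjustified. Such a leftover agent can conflict later and resample elsewhere. Worse, if the agent $k\neq 1$ you schedule is one of them, the walk follows its lasso, and it is the agents holding $\eta\cdot(C^\star)^\omega$ that get knocked off. Likewise, ``if agent~1 happens to agree, everyone already agrees'' confuses agreeing at one step with agreeing forever.

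The repair stays within your framework.
\begin{itemize}
\item Keep a set $S$ of agents with identical memory, a lasso $\lambda$ to $C^\star$, and always schedule a member of $S$. Members never conflict with each other.
\item A non-member that disagrees with the move resamples. With probability at least $p_{\min}$ it picks exactly the remaining suffix of $\lambda$, which is admissible for every agent since $C^\star\in\mathfrak{C}_{B_i}$ for all $i$, and so joins $S$.
\item A non-member that never disagrees during $2|V|$ such steps has entered its own cycle $C_i$ while the walk went once around $C^\star$. So every edge of $C^\star$ is an edge of the simple cycle $C_i$, forcing $C_i=C^\star$, and it agrees forever.
\item The same argument shows that, unless all agents' lassos already induce the same future path, a conflict occurs within $2|V|$ steps whoever is scheduled. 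That conflict seeds $S$.
\end{itemize}

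Consequently, the target for your geometric-trials argument should not be consensus on $C^\star$. It should be the absorbing set of states in which all agents' lassos induce the same future path; the generated path then equals each $\eta_iC_i^\omega$, so it lies in every $\varphi_i$. The reason is that a state where everyone agrees forever on some other common good cycle can arise mid-run, not only initially as in your corner case. From such a state no agent ever resamples, so $C^\star$-consensus is unreachable.

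With the enlarged target, it is reached from every global state and every scheduler history within $K=O(|V|)$ steps with probability at least $(\epsilon\,p_{\min}^{N})^{K}$. That is exactly what Step~1 needs.
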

\stam{
\begin{proof}
We describe the {\em global} interaction between $N$ policies as an {\em absorbing} Markov chain. Note that all policies have the same set of memory states $M = \pathsfin(\G) \times \mathfrak{C}_B$. A state of the Markov chain is $\zug{\zug{m_1, \ldots, m_N}, v} \in M^N \times V$. 
%A state is \emph{sink} if \(m_j = m_{j'}\) for all \(j, j' \in \set{1, \ldots, N}\). 
	A state is called \emph{consensus} if $m_j = m_{j'}$, for all $j, j' \in \set{1,\ldots, N}$, dually every other state is referred to as  \emph{non-consensus}.
	A consensus state \(\zug{\zug{m ,\ldots , m}, v}\) is called a sink if \(m = \zug{\epsilon, C}\) for some cycle \(C\), and the current vertex is already inside \(C\).
	Note that such a sink state corresponds to the case that all polices chooses the same cycle \(C\), thus the path that they generate will be \(C^\omega\), which satisfies all objectives since \(C \in \mathfrak{C}_{B_j}\) for all \(j \in \set{1, 2, \ldots, N}\). 
 Note that such a global state corresponds to the case that all policies choose the same path $\eta$ and cycle $C$, thus the path that they generate will be $\eta \cdot C^\omega$, which satisfies all objectives since $C \in \mathfrak{C}_{B_j}$, for all $j \in \set{1,\ldots, N}$. 
The probability of transitioning from $\zug{\zug{m_1, \ldots, m_N}, v}$ to $\zug{\zug{m'_1, \ldots, m'_N}, v'}$ is 
\[\sum_{j=1}^N \sched(j) \cdot \pol_j(v, m_j, \top)( m'_j, v') \cdot \prod_{j' \neq j} \pol_{j'}(v, m_{j'}, \bot)(m'_{j'}, v').\]
%\SS{Why do we only consider only \(v'\) on the product? The policy which is not scheduled could choose any vertex, isn't it? In particular, shouldn't it be the following: 
%\[\sum_{j=1}^N \sched(j) \cdot \pol_j(v, m_j, \top)(v', m'_j) \cdot \prod_{j' \neq j} \sum_{v'': E(v, v'')} \pol_{j'}(v, m_{j'}, \bot)(v'', m'_{j'}).\]}
 %We point out that $\pol_j(v, m_j, \top)(v', m'_j)$ is either $0$ or $1$. 

Clearly, it is a finite Markov chain. \SS{TODO: Show the Markovian property explictly?!}
Moreover, we claim that 
\begin{claim}\label{clm: bscc}
	Every bottom strongly connected components (BSCC) of the Markov chain only consists of sink states, and each sink state is part of some (plausibly singleton) BSCCs.
\end{claim}
%we claim that every bottom strongly connected components (BSCC) of \(\G\) only consists of sink states, and each sink state is part of some (plausibly singleton) BSCCs. 
Since a finite Markov chain eventually reaches a BSCC with probability \(1\), it suffices to conclude that \((\G, \pi_1,\ldots, \pi_N, \sched) \modelsAS \bigwedge_{1 \leq i \leq N} \varphi_i\). \qed

\begin{claimproof}
	We establish the above claim by showing the following three properties of the constructed Markov chain: (1) the probability of transitioning from any consensus state to a non-consensus one is \(0\) (as a corollary, it holds for the special case of a sink to a non-sink state), (2)
from any non-consensus state, there exists a path with probability \(> 0\) to some consensus state, 
% for any non-consensus state, there exists a consensus state such that probability of transitioning from the non-consensus state to the consensus state is \(> 0\), 
 and finally, (3) the probability of transitioning from any non-sink consensus state \(\zug{\zug{m = \zug{\eta, C}, \ldots, m}, v}\) to another non-sink consensus state \(\zug{\zug{m' = \zug{\eta', C}, \ldots, m'}, v'}\) is \(>0\), only if \(\eta' = \eta[1:]\).

(1) implies that a consensus and a non-consensus state cannot be part of same SCC. 
(2) (together with (1)) establishes that non-consensus states cannot be a part of any BSCC.
Finally, (3) shows that two non-sink consensus states cannot be in the same SCC.   

%To show (1), 
%Let us consider an arbitrary consensus state \(\zug{\zug{m, \ldots, m}, v}\), and an arbitrary non-consensus state \(\zug{\zug{m_1', \ldots, m_N'}, v'}\) where \(m_j' \neq m_{j'}'\) for some \(j, j' \in [N]\). 
%
%  
%once the memory states  probability of transitioning  from \(\zug{\zug{m_1, \ldots m_n}, v}\) to \(\zug{\zug{m_1', \ldots, m_n'}, v'}\) is \(0\), when (1) \(m_j = m_{j'}\) for all \(j, j \in [N]\), and (2) there exists \(j, j' \in [N]\) such that \(m_j' \neq m_{j'}'\). 

To show (1), 
we consider an arbitrary consensus state \(\zug{\zug{m, \ldots, m}, v}\), and an arbitrary non-consensus state \(\zug{\zug{m_1', \ldots, m_N'}, v'}\) where \(m_j' \neq m_{j'}'\) for some \(j, j' \in [N]\). 
We argue that each summand in the expression of the transition probability is \(0\). 
Fix an arbitrary \(j \in [N]\). 
If \(\pi_j(v, m, \top)(m_j', v') = 0\), we are done. 
Otherwise, we have \(\pi_j(v, m, \top) = ( m_j', v')\) (since \(\pi_j\) is deterministic), and at least one \(j' \in [N]\) such that \(m_j' \neq m_{j'}'\). 
We consider \(\pi_{j'}(v, m, \bot)(v', m_{j'}')\), and show that this is \(0\).
Since both \(\pi_j\) and \(\pi_{j'}\) are defined using the same \(\pi\) constructed above, \(\pi_{j'}(v, m, \top) = \pi_j(v, m, \top)\), we have \(\pi_{j'}(v, m, \top) = ( m_j', v')\). 
Therefore, from \(m_j \neq m_{j'}'\), we have \(\pi_{j'}(v, m, \bot)( m_{j'}', v') = 0\). 

To show (2), we consider an arbitrary non-consensus state \(\zug{\zug{m_1, \ldots, m_N}, v}\), and construct a path with positive probability to a consensus state in the following manner: 
we first divide the policies \(1, 2, \ldots N\) in at most \(degree(v)\)-many (plausibly empty) sets, namely \(X_1, \ldots, X_{degree(v)}\), where each set \(X_i\) correspond to a neighbouring vertex \(u\) of \(v\) such that all the policies in \(X_i\) proposes \(u\) as its next action.  
We first assume that at least two of these sets are non-empty, and select a set, namely \(X_l\), with the least number of policies. 
We then select a policy from \(X_l\), namely policy \(j\). 
Suppose \(\pi_j(v, m_j, \top) = (v', m')\). 
We know \(\sigma(j) > 0\). 
Moreover, for any policy \(j' \in \cup_{i \neq l} X_i\) , \(\pi_{j'}(v, m_{j'}, \bot)(v', m') > 0\).   
Therefore, there exists an edge with positive probability from \(\zug{\zug{m_1, \ldots, m_N}, v}\) to \(\zug{\zug{m_1', \ldots, m_N'}, v'}\) with \(m_{j'} = m'\) for all \(j' \in \{j\} \cup \bigcup_{l \neq i} X_l\). 
Since \(X_l\) is of size at most \(\frac{N}{2}\), upon traversing along this edge consensus among at least \(\frac{N}{2}+1\) many policies is achieved. 
If \(\zug{\zug{m_1', \ldots m_N'}, v'}\) is a consensus state, we are done; otherwise, we apply the same procedure from \(\zug{\zug{m_1', \ldots, m_N'}, v'}\), and so on until it converges to a consensus state.  
Note that, in the second step, at least \(\frac{N}{2}+1\) many policies will be in the same set, therefore the size of smallest set could be at most \(\frac{N}{2} -1\). 
Therefore, the procedure reaches to a consensus state within at most \(N\)-steps, if in every step there are at least two non-empty sets. 
Finally, we argue that this assumption does not prevent reaching the consensus, albeit delay it. 
Suppose at some step, there is only a single non-empty set, it means that all the policies agree in their action in that step.  
If this happens at a non-consensus state, we know eventually their proposed action will differ, so till then we simply traverse along non-consensus states in the Markov Chain without improving the consensus.  
%To show (2), we construct a consensus state \(\zug{\zug{m', \ldots, m'}, v'}\) from an arbitrary non-consensus state \(\zug{\zug{m_1, m_2, \ldots, m_n}, v}\) such that the probability of transitioning from the latter to the former is positive. 
%We fix a \(j \in [N]\), and define \((v', m') \coloneqq \pi_j(v, m_j, \top)\). 
%For any \(j' \neq j\), denote \(\pi_{j'}(v, m_{j'}, \top) = (v_{j'}'', m_{j'}'')\). 
%If \(v_{j'}'' = v'\), then \(\pi_{j'}(v, m_{j'}, \bot)(v', m') = 1\), otherwise 

Finally, showing (3) is straightforward: since, every policy has the same memory state at a given vertex \(v\), no matter which policy is scheduled, each of them updates along the path to the cycle. 
Since by assumption, we consider only simple path to the cycle, there can be edge with probability \(> 0\) in only one of the direction. 
\end{claimproof}
\end{proof} }

\stam{
	\begin{lemma}
		For an arena \(\mathcal{A}\), and \(N\)-many (for any integer \(N\)) overlapping coB\"uchi objectives \(\spec_1, \ldots \spec_N\), there is a path \(\pi = x \gamma^\omega\) satisfying the  conjunction \(\spec_1 \wedge \ldots \wedge \spec_N\) of the coB\"uchi specifications, where \(\gamma\) is a simple cycle.  
	\end{lemma}\label{lem: coBucisimplecycle}
	
	\begin{proof}
		We suppose there is a coB\"uchi set \(C_i\) associated with the specification \(\spec_i\).
		We note that, it is enough to show that for any coB\"uchi spcecification there exists an infinite path in which the only infinitely repeated component is a simple cycle. 
		This is because \(\bigcap_{i \in [N]} C_i = C\) is a subset of the set of states of \(\mathcal{A}\). 
		Thus, \(\bigwedge_{i\in [N]}\spec_i = \spec\) is itself a coB\"uchi  specification. 
		
		We now suppose an infinite payth \(\pi'\) that satisfies \(\spec\) (assuming it is also denote by a coB\"uchi set \(C\)). 
		There exists an integer \(n\) such that the suffix of \(\pi'\) after the \(n^{th}\) position only contains vertices which occur infinitely often in \(\pi'\). 
		Let us suppose \(v\) be the \((n+1)^{th}\) vertex in \(\pi'\). 
		We now consider cycles from \(v\) to \(v\) in \(\pi\) from \((n+1)^{th}\), and call them \(\gamma_{j}'\) for \(j = 1, 2, \ldots \). 
		We construct another path \(\pi''\) from \(\pi'\) by removing any cycles that may occur inside each of \(\gamma_{j}'\) in \(\pi'\), and denoting the new cycles as \(\gamma_{j}''\), all of which are simple.
		We note that \(\pi''\) also satisfies \(\spec\). 
		Next, we observe that each cycle \(\gamma_{j}''\) is of length at most of the number of vertices of \(\mathcal{A}\). 
		Therefore, there exists a simple cycle \(\gamma\) (from vertex \(v\) to \(v\)) which occurs infinitely often among \(\gamma_{j}''\)'s. 
		Finally, we define \(x\) to be the prefix of length \(n\) of \(\rho\) which contains some vertices which occur only finitely often. 
		Thus, \(\pi = x \gamma^\omega\) is a path that satisfies \(\spec\) as well, where \(\gamma\) is a simple cycle.

	\end{proof}}

\section{Convention for Decoupling Parity Objectives}\label{sec:parity}

We generalize the ideas for decoupling co-B\"uchi objectives to parity objectives. The construction requires full-history aware policies when $N > 2$. Fix a graph $\G$ and stochastic scheduler $\sched$. 

The co-B\"uchi convention relies on the property that if there is a path that satisfies all objectives, there is such a lasso path. This is no longer the case for parity objectives. For example, consider the graph depicted in \ref{subfig:Leftintro} and objectives $\varphi_1 = (GF \ell) \wedge (G \neg b_1)$ and $\varphi_2 = (GF r) \wedge (G \neg b_1)$, then a path that satisfies both must cycle between $r$ and $\ell$ via $t_1$ and $t_2$. This is not a simple cycle since both outgoing edges of $t_1$ and $t_2$ are traversed infinitely often. We overcome this using the following lemma whose proof can be found in App.~\ref{app:kappagood}.
\begin{lemma}\label{lemma: kappagood}
	Consider parity objectives \(\kappa_1, \ldots, \kappa_N:V \rightarrow \mathbb{N}\) with $\bigwedge_{i\in [1;N]}\kappa_i \neq \emptyset$. Then, there are memoryless policies \(\zug{\theta_1, \ldots, \theta_N}: V \rightarrow V\) such that \((\G, \theta_1, \ldots, \theta_N\), \(\sigma) \modelsAS \bigwedge_{i\in [1;N]}\kappa_i\)
\end{lemma}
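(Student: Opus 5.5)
We construct the policies $\theta_1,\ldots,\theta_N$ jointly from a single witness path, since the lemma only asserts that they exist. Fix an infinite path $\rho^\ast \in \bigwedge_{i}\kappa_i$. Let $S$ be the set of vertices it visits infinitely often, and $E_S$ the set of edges it traverses infinitely often. Then $(S,E_S)$ is strongly connected and every vertex of $S$ has an $E_S$-successor in $S$. For each $i$, $\max \kappa_i(S)$ is even. We choose a witness $u_i \in S$ with $\kappa_i(u_i)=\max\kappa_i(S)$. The goal is policies under which, almost surely, the generated path eventually stays in $S$ and visits every $u_i$ infinitely often. Then for each $i$ the largest colour seen infinitely often is exactly $\kappa_i(u_i)$, which is even.

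\emph{Definition of the policies.} Outside $S$, all $\theta_i$ coincide. We fix a breadth-first tree of shortest paths in $\G$ from every vertex $v\notin S$ towards $S$. This requires $S$ to be reachable from $v$, which holds because the graph is strongly connected (the standing assumption of this section). We let $\theta_i(v)$ be the tree-parent of $v$. Because the policies agree there, the distance to $S$ strictly decreases at every step regardless of which agent is scheduled, so $S$ is reached within $|V|$ steps surely. Inside $S$, $\theta_i$ follows a fixed shortest path in $(S,E_S)$ towards $u_i$. At $u_i$ itself, $\theta_i$ picks some $E_S$-successor of $u_i$. Every choice made inside $S$ is an edge of $E_S$, so once the path enters $S$ it never leaves, whichever agent is scheduled.

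\emph{Almost-sure visits to every $u_i$.} The composition is not a Markov chain in general, because the scheduler may depend on the past schedule. We therefore use a uniform bound instead. By fairness, at every step and for every history, each agent is scheduled with probability at least $\epsilon$. Once inside $S$, from any vertex, if agent $i$ is scheduled for $|S|$ consecutive steps the path hits $u_i$. Hence, conditioned on any finite history, the probability of visiting $u_i$ within the next $|S|$ steps is at least $\epsilon^{|S|}>0$. By the standard argument (L\'evy's zero-one law, or a geometric bound over disjoint blocks of length $|S|$), $u_i$ is visited infinitely often almost surely. A countable intersection over $i\in[N]$ keeps probability one. Since the path is eventually confined to $S$, the vertices seen infinitely often lie in $S$ and include every $u_i$, which gives $(\G,\theta_1,\ldots,\theta_N,\sigma)\modelsAS\bigwedge_i\kappa_i$.

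The main obstacle is making this probabilistic step rigorous for a history-dependent scheduler, where a BSCC argument over vertices is unavailable. The uniform $\epsilon^{|S|}$ lower bound on conditional probabilities is the key tool for that. A secondary point is choosing the successor at $u_i$ inside $E_S$ so that $S$ remains closed; the existence of an $E_S$-successor of every vertex of $S$ guarantees this. If the graph were not strongly connected, the only change would be to restrict to the part of the graph from which $S$ is reachable, and $\vinit$ lies in that part because $\rho^\ast$ starts there.
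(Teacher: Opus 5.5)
Your proof is correct, and it shares the paper's skeleton: one witness vertex of maximal, even $\kappa_i$-colour per objective, taken from the infinitely visited part of a common witness path, and identical behaviour of all $\theta_i$ outside a region that every $\theta_j$ keeps closed. The region and the internal policies are built differently. The paper joins consecutive witnesses $a_i, a_{i+1}$ by cycle-free paths $\tau_i'$ and works with the single, generally non-simple, cycle $\tau_1'\cdots\tau_N'$. There $\theta_i$ follows $\tau_i'$ first, then $\tau_{i+1}'$, and so on cyclically, and the proof ends by asserting that, because every agent is scheduled infinitely often, the infinitely visited vertices are exactly those of the cycle. You keep the whole recurrent subgraph $(S,E_S)$ and let $\theta_i$ be a shortest-path strategy to $u_i$ inside it. This makes true by construction the one property the probabilistic argument needs: agent $i$, scheduled alone for $|S|$ steps, reaches $u_i$ from anywhere in the region. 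Your $\epsilon^{|S|}$ block bound then handles history-dependent fair schedulers rigorously. In the paper's priority construction that property is not immediate; for instance $\tau_{i+1}'$ may re-enter $\tau_i'$ at an interior vertex, so $\theta_i$ alone cycles without visiting $a_i$. The paper also states its final almost-sure step rather than deriving it. The paper's route pins the eventual support to a single cycle, which is more concrete but not needed for the lemma. You also make explicit that the region is reachable from $\vinit$ (by strong connectivity, or by restricting to the part from which $S$ is reachable), which the paper uses tacitly.
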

For example, \ref{subfig:Leftintro} depicts two memoryless policies $\pi_1$ and $\pi_2$ for which we have $(\G, \pi_1,\pi_2, \sched)\modelsAS \varphi_1 \wedge \varphi_2$ since the generated random walk almost surely visits each of the vertices $\ell, t_1, t_2, r$ infinitely often.

\smallskip
\noindent
\textbf{Parity convention.}
Consider a parity objective given by a colouring function \(\kappa_i: V \rightarrow \mathbb{N}\). 
%, and let \(\mathfrak{C}_\kappa\) denote the set of simple cycles in \(\G\) that has the largest colour even. 
A policy $\pol_i \in \Conv_{\text{Parity}}(\G, \sched, \kappa)$ randomly chooses an \(N\)-tuple of {\em internal} memoryless policies \(\theta_1, \ldots \theta_N\) such that $(\G, \theta_1, \ldots \theta_N, \sched) \modelsAS \kappa_i$. 
%To reduce confusion, we call $\theta_i$ an {\em internal policy}. 
Intuitively, when scheduled, $\pol_i$ follows $\theta_i$, and $\theta_j$, for $j \neq i$, constitutes a guess of the internal policy that $\pol_j$ follows. 
At vertex $v$, if $\pol_i$ is scheduled, it proceed to $\theta_i(v)$ and if $\pol_j$ is scheduled and chooses $u \neq \theta_i(v)$, then $\pol_i$ observes a conflict and randomly chooses a new $N$-tuple of internal memoryless policies. 
In order to observe a conflict, $\pol_i$ requires knowledge of who is scheduled and their choice, thus it is a full-history aware policy.

We describe \(\pol_i\) formally. 
Its memory states are \(M = (V^V)^N\). 
 The interaction with a scheduler and the other policies generates a random sequence, and \(\pol_i\)'s \emph{local} view of the sequence is \(\zug{m_0, v_0}, \zug{m_1, v_1}, \ldots\), where at time \(j \geq 0\), the location in $\G$ is \(v_j \in V\) and the memory state is \(m_j = \zug{\thetatwo{j}{1}, \ldots \thetatwo{j}{N}} \in M\). 
 If $\pol_i$ is scheduled at time $j$, we define $v_{j+1} = \thetatwo{j}{i}(v_j)$ and the memory state is unchanged, namely $m_{j+1} = m_j$. 
 Suppose that $\pol_{i'}$ is scheduled and chooses $u$, then we define $u = v_{j+1}$. If $u = \thetatwo{j}{i}(v_j)$, then there is no conflict and $m_{j+1} = m_j$. 
 Otherwise, $u \neq \thetatwo{j}{i}(v_j)$, policy $\pol_i$ realizes a conflict and randomly chooses \(m_{j+1} = \zug{\thetatwo{j+1}{1}, \ldots \thetatwo{j+1}{N}} \in M\).

\stam{%OLD
\(\theta_i\) is the policy that agent \(\id\) intends to play to fulfil \(\kappa\), and \(\theta_i\) for \(i \neq \id\) is Agent \(\id\)'s ``guess'' about how Agent \(i\) would play. 
Agent \(\id\) makes a \(N\)-tuple  \(\theta_1, \ldots, \theta_N\) as their guess of internal policies, only if their composition with a fair scheduler \(\sigma\) almost-surely satisfies \(\kappa\).
When \(\id\) is scheduled, Agent \(\id\) proposes action according to \(\theta_\id\), and keep the previous guess. 
When \(i \neq \id\) is scheduled: (1) If Agent \(i\) makes a choice that differs from \(\theta_i\) (the choice that Agent \(\id\) thought Agent \(i\) would make), Agent \(\id\) randomly re-chooses the internal policies \(\theta_1', \theta_2', \ldots, \theta_N'\) (including changing \(\theta_{\id}\) to some \(\theta_{\id}'\)). 
Correctness will be derived from the property of parity objectives that if there is a path that satisfies all objectives, then there is such a \(N\)-tuple of internal policies which when implemented by a fair scheduler almost-surely satisfy all the parity objectives.

We describe \(\pol_i\) formally. 
 The memory states of \(\pi_i\) are \(M = (V^V)^N\), where a state \(m = \zug{\theta_1, \theta_2, \ldots, \theta_N}\) means \(\pi_{\id}\) guesses that Agent \(i\) would implement \(\theta_i\). 
 The interaction with a scheduler and the other policies generates a random sequence, and 
 \(\pol_{\id}\)'s \emph{local} view of the sequence is \(\zug{m_0, v_0}, \zug{m_1, v_1}, \ldots\) where for each \(i \geq 0\), each 
\(m_i = \zug{\thetatwo{i}{1}, \ldots \thetatwo{i}{N}} \in M\) is a memory state, and \(v_i \in V\) is a location in \(\G\). 
The notation \(\pol_{\id}(v,m, (j, u))\) denotes the probability distribution of the composition when Agent \(\id\)'s memory state is \(m\) at location \(v\), Agent \(j\) is scheduled, and chooses \(u\) as its successor vertex. 
The definition of \(\zug{m_{i+1}, v_{i+1}} = \pol_{\id}(v, m, (j, u))\) is deterministic when \(j = \id\), or when \(j \neq \id\) but \(u = \thetatwo{i}{j}(v_i)\): \(m_{i+1} = m_i\) and \(v_{i+1} = \thetatwo{i}{j}(v_i)\). 
Otherwise, \(j \neq \id\) and \(u \neq \thetatwo{i}{j}(v_i)\), then \(\pi_{\id}\) randomly selects \(\zug{\thetatwo{i+1}{1}, \ldots \thetatwo{i+1}{N}} \in M\). 
}

\smallskip
The following theorem is obtained by reasoning about the Markov chain that simulates the global state of the policies, carefully identifying consensus global states that are trivially good for all, and showing that they form the BSSCs of the Markov chain. 
The proof is available in Appendix~\ref{app:parityscheduling}. 

\begin{restatable}{theorem}{parityscheduling}\label{thm: parityscheduling}
	Consider a strongly-connected graph $\G$, a stochastic scheduler $\sched$, and a collection of $N$ parity objectives  $\kappa_1,\ldots, \kappa_N$ having $\bigcap_{1 \leq i \leq N} \kappa_i \neq \emptyset$.
	For every collection of policies $\pi_1,\ldots, \pi_N$ that satisfy the convention, i.e., $\pi_i \in \Conv_{\text{parity}}(\G, \sched, \kappa_i)$, for $i \in \set{1,\ldots, N}$, we have $(\G, \pi_1,\ldots, \pi_N, \sched) \modelsAS \bigwedge_{1 \leq i \leq N} \kappa_i$. 
\end{restatable}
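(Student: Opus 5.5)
We will argue, as in the proof of Thm.~\ref{thm: coBuchischeduling}, by building the finite Markov chain that tracks the global state of the composition. Its states are tuples $\zug{\zug{m^1,\ldots,m^N}, v}\in M^N\times V$, where $m^i=\zug{\theta^i_1,\ldots,\theta^i_N}$ is Agent~$i$'s current guess.

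The transition probabilities come from three sources: the scheduler, the deterministic move $\theta^j_j(v)$ of the scheduled agent~$j$, and the conflict-triggered resampling of each agent $i$ with $\theta^i_j(v)\neq\theta^j_j(v)$. Resampling draws uniformly (or with full support) from the finite set of tuples in $M$ admissible for $\kappa_i$. Two technical points need care. First, the scheduler may be history dependent, so we either augment the state with scheduler memory or, more simply, use fairness: every transition that has positive probability under some scheduler choice has probability at least $\epsilon\cdot p_{\min}$. A standard "positive probability of reaching a good set within $K$ steps from everywhere" argument then replaces the BSCC lemma. Second, the admissible sets must be nonempty. This is exactly Lemma~\ref{lemma: kappagood}: since $\bigcap_i\kappa_i\neq\emptyset$, there is a common tuple $\Theta^*=\zug{\theta^*_1,\ldots,\theta^*_N}$ with $(\G,\Theta^*,\sched)\modelsAS\bigwedge_i\kappa_i$. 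In particular $\Theta^*$ lies in every agent's admissible set.

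\textbf{Consensus states.} Call a global state a \emph{consensus} if all agents hold the same tuple $\Theta=\zug{\theta_1,\ldots,\theta_N}$. From such a state, agent~$j$ is scheduled and moves to $\theta_j(v)$, which every other agent predicted, so nobody resamples. Consensus is therefore absorbing with respect to memory. The path from then on is distributed exactly as the composition $(\G,\theta_1,\ldots,\theta_N,\sched)$ started from the current vertex and history. Because $\Theta$ was admissible for every $\kappa_i$, each agent guarantees $\modelsAS\kappa_i$ for it. This needs a small lemma: admissibility should hold from every vertex and history reached. For memoryless $\theta$'s and a fair scheduler, the set of vertices visited infinitely often is almost surely a closed class of the graph induced by $\bigcup_j\theta_j$. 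Requiring admissibility to mean "every such reachable class has even maximal colour" makes it history-independent. We also need parity to be prefix-independent, which it is. The path then satisfies all $\kappa_i$ almost surely.

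\textbf{Reaching consensus, the main obstacle.} We must show that from every non-consensus state, some consensus state is reached with probability bounded below within a bounded number of steps. We first aim for the specific consensus on $\Theta^*$. The plan has two parts.

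1. Force every agent to resample at least once, in a bounded window. If two agents disagree on the next move, scheduling one of them produces a conflict for the other. If all agents agree on the next vertex, we instead follow that forced run. Because guesses are memoryless tuples and the set of states is finite, either a disagreement eventually appears or the agents' predictions coincide along the whole reachable run. In the latter case the current behaviour is already effectively a common deterministic run. We handle this by noting it is a consensus "up to irrelevant differences", and we may need to enlarge the notion of consensus to agreement of $\theta^i_j$ on the vertices actually reachable. This step worries me most.

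2. Once agents resample, with positive probability they pick $\Theta^*$. The difficulty is that an agent that resamples to $\Theta^*$ can be disturbed later by agents still holding old guesses. We use the majority-growing strategy of the co-B\"uchi proof. Schedule an agent from the $\Theta^*$ group. Every agent whose prediction differs resamples and, with positive probability, also lands on $\Theta^*$. Agents whose prediction matches keep their guess, but then they agree locally. Iterating this grows the $\Theta^*$ group monotonically, up to locally agreeing holdouts, within at most $N\cdot|V|$ steps with probability bounded below.

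By the geometric-trials argument, consensus is reached almost surely. Combined with the consensus analysis above, this yields $(\G,\pi_1,\ldots,\pi_N,\sched)\modelsAS\bigwedge_i\kappa_i$.
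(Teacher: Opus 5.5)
\textbf{Gap in ``Reaching consensus''.} Your argument breaks at the step you flagged, and the cause is not the choice of consensus notion: conflict detection is one-directional. Agent~$k$ resamples only when a scheduled agent~$j$ moves somewhere other than $\theta^k_j(v)$. Suppose agent~$k$ guesses everyone else correctly along the run, but its own moves $\theta^k_k$ differ from what the others guess of it. Then $k$ is never forced to resample. Its deviations only make the \emph{others} resample, and if their admissible sets pin down their own moves, they have no way to signal back to $k$.

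So Step~1 (forcing every agent to resample at least once) cannot be carried out in general. In Step~2, the ``locally agreeing holdouts'' are exactly such agents: they agree with the $\Theta^*$ group's moves, but the group does not agree with theirs. Re-picking $\Theta^*$ keeps the group intact, but the holdout never leaves. The resulting state is not a consensus even in your enlarged, reachable-vertices sense.

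\textbf{A counterexample to the statement.} This cannot be repaired within the convention as stated, using the guess-based conflict rule that both you and the paper's proof use. Take $V=\{x,y,z\}$ with edges $x\to y$, $x\to z$, $y\to x$, $z\to x$. Let $N=2$, $\kappa_1(z)=1$, $\kappa_1(x)=\kappa_1(y)=0$, and $\kappa_2\equiv 0$.
\begin{itemize}
\item A memoryless policy is determined by its choice $Y$ or $Z$ at $x$. Under a fair scheduler, any tuple containing $Z$ visits $z$ infinitely often almost surely.
\item Hence the only tuple admissible for $\kappa_1$ is $(Y,Y)=\Theta^*$, whether admissibility is taken in the paper's sense or in your strengthened one. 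Every tuple is admissible for $\kappa_2$.
\item Suppose agent~2 draws $(Y,Z)$, which happens with positive probability. Agent~1 always moves to $y$, exactly as agent~2 predicts, so agent~2 never resamples.
\item Each time agent~2 is scheduled at $x$, it moves to $z$. Agent~1 then resamples and is forced back to $(Y,Y)$.
\end{itemize}
The memory pair $((Y,Y),(Y,Z))$ is therefore absorbing, and $z$ is visited infinitely often almost surely. So $\kappa_1$ fails with positive probability, even though $(xy)^\omega\in\kappa_1\cap\kappa_2$.

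\textbf{The paper's proof.} It has the same hole. Its Step~(2) lets every agent that mis-guesses agent~$i$ jump to $m_i$. In this example that would require agent~1 to adopt $(Y,Z)$, which is not admissible for $\kappa_1$.

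Two of your observations are well taken: admissibility must hold from every vertex and history, not just from $\vinit$, and the consensus target should be a tuple admissible for everyone. Neither helps here, though. Closing this gap requires changing the convention itself, so that a wronged agent has some way to make the others resample, rather than a finer consensus argument.
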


\stam{
\begin{proof}
	We describe the \emph{global} interaction between \(N\) policies as an absorbing Markov chain. 
	Note that all policies have the same set of memory states \(M = (V^V)^N\). 
	A state of the Markov chain is \(\zug{\zug{m_1, m_2, \ldots m_N}, v} \in (V^V)^N \times V\). 
	A state is called consensus if \(m_j = m_j'\) for all \(j, j' \in [N]\), dually every other states are referred to as non-consensus. 
	Once it reaches a consensus state, none of the memory states ever change. 
	Recall that composition of a fair scheduler on each memory state \(m_i\), by design,  almost surely satisfies \(\kappa_i\).
	 Therefore, when \(m_i = m\) for some memory state \(m\) for all \(i \in [N]\), the composition of \(\sigma\) on \(\zug{m, \ldots, m}\) almost surely satisfies all the parity objectives. 
	 It remains to show that no matter from which state of the Markov chain the procedure starts, it always reaches a consensus state. 
	 We first define the transition probabilities from \(\zug{\zug{m_1, \ldots , m_N}, v}\) to \(\zug{\zug{m_1', \ldots m_N'}, v'}\) as follows:
	 \[
	 \sum_{j = 1}^{N} \sigma(j) \cdot \pi_j( v, m_j, (j, v'))(m_j', v') \prod_{j' \neq j} \pi_{j'}( v, m_{j'}, (j, v'))(m_{j'}', v') 
	 \]
	 We first show that it is indeed a Markov Chain. \SS{TODO: show}
	 Since a finite Markov chain eventually reaches a BSCC with probability \(1\), it suffices to show that consesus states are BSCCs of this Markov Chain. 
	 Formally, 
	 \begin{claim}
	 	Every BSCC of \(\L\) only consists of consensus states.
	 \end{claim}
 
 	\begin{claimproof}
 		We establish the above statement by showing the following two properties: (1) The transition probability from any consensus state to any non-consensus state is \(0\), and (2) For any non-consensus state, there exists a path with probability \(> 0\) to some consensus state. 
 		Since, \(\L\) has no dead-end by design, this proves the claim. 
 		
 		 To show (1), 
 		 we consider an arbitrary consensus state \(\zug{\zug{m, \ldots, m}, v}\), and an arbitrary non-consensus state \(\zug{\zug{m_1', \ldots, m_N'}, v'}\) where \(m_j' \neq m_{j'}'\) for some \(j, j' \in [N]\). 
 		 We argue that each summand in the expression of the transition probability is \(0\). 
 		 Fix an arbitrary \(j \in [N]\). 
 		 If \(\pi_j(v, m, (j, v'))(m_j', v') = 0\), we are done. 
 		 Otherwise, we have \(\pi_j(v, m, (j, v')) = (m_j', v')\) (since \(\pi_j\) is deterministic in this case), and at least one \(j' \in [N]\) such that \(m_j' \neq m_{j'}'\) (since we consider a non-consensus state). 
 		 We consider \(\pi_{j'}(v, m, (j, v'))(m_{j'}', v')\), and show that this is \(0\).
 		 Since both \(\pi_j\) and \(\pi_{j'}\) are defined using the same \(\pi\) constructed above, \(\pi_{j'}(v, m, (j, v')) = \pi_j(v, m, (j, v'))\), we have \(\pi_{j'}(v, m, (j, v')) = (m_j', v')\). 
 		 Therefore, from \(m_j \neq m_{j'}'\), we have \(\pi_{j'}(v, m, (j, v'))(v', m_{j'}') = 0\).
 		 
 		  To show (2), we consider an arbitrary non-consensus state \(l = \zug{\zug{m_1, \ldots, m_N}, v}\), with \(m_i = \zug{\thetatwo{i}{1}, \ldots \thetatwo{i}{N}}\) and construct a path with positive probability to a consensus state in the following. 
 		  At any state \(s\), for each agent \(i\), we denote by \(X_i^{(s)}\) the set of agents \(i'\) for which \(\thetatwo{i}{i} \neq \thetatwo{i'}{i}\), and by \(Y_i^{(s)}\) the set of agents \(i\) for which \(\thetatwo{i}{i}(u) \neq \thetatwo{i'}{i}(u)\), where \(u\) is the location at \(s\).
 		  Clearly, \(Y_i^{(s)} \subseteq X_i^{(s)}\) for any \(i\) and \(s\).  
 		  Since \(l\) is a non-consensus state, there exists at least one agent \(i\) for which \(X_i^{(l)}\) is non-empty.  
 		  Otherwise, we are already at a consensus state!
 		  We know \(\sigma(i) > 0\). 
 		  Without loss of generality, we assume that \(Y_i^{(l)}\) is also non-empty.
 		  We consider the scenario when Agent \(i\) is scheduled, which makes the next vertex \(\thetatwo{i}{i}(v) = v'\). 
 		  Since every agent \(i'\) of \(Y_i^{(l)}\) makes a wrong guess about Agent \(i\), with positive probability, all of them changes their memory state to \(m_i\). 
 		  In other words, there exists an edge from \(l\) to \(l' = \zug{\zug{m_1', \ldots, m_N'}, v'}\), where \(m_{i'}' = m_i\) for all \(i' \in Y_i^{(l)}\), making \(X_i^{(l')} = X_i^{(l)} \setminus Y_i^{l}\). 
 		  If \(l'\) is a consensus state, we are done. 
 		  Otherwise, we select another agent \(j\) for which \(X_j^{(l')}\) is non-empty, and continue the procedure. 
 		  Note that,  
 		  once two agents reaches a consensus in their memory state, they never lose again because from that point onwords, either both of them requires change of memory or none of them.  
 		  Therefore, in this sequence, for any state \(s\), the size of at least one \(X_i^{(s)}\) is decreasing if there is at least one \(i\) with non-empty \(Y_i^{s}\). 		  
 		  If \(Y_i^{s}\) is empty for all \(i\) at some non-consensus state\(s\) at some step, the consensus among agents at most remains the same. 
 		  However, this cannot go forever from a non-consensus state, and eventually a \(s'\) is reached with a positive probability for which there is an agent \(j\) with \(Y_j^{(s)} \neq \emptyset\). 
 		  Therefore, eventually all the \(X_i^{(s)}\)'s become empty, implying a consensus state is reached in \(\L\) from \(l\), along a path with probability \(> 0\). 
 	\end{claimproof}
\end{proof}}

 \stam{
 \begin{theorem}
 	Let $\kappa_1,\ldots, \kappa_N$ be a collection of $N$ parity objectives on a graph $\G$. 
 	For every choice of policies that satisfies the parity condition $\pol_i \in \chi_{B_i}$, for $i \in \set{1,\ldots, N}$, and a stochastic scheduler $\sched$, we have $(\G, \pi_1,\ldots, \pi_N, \sched) \modelsAS \bigwedge_{1 \leq i \leq N} \varphi_i$. 
 \end{theorem}

\begin{proof}
	Similar to coB\"uchi condition, we describe th \emph{global} interaction between \(N\) policies as an absorbing Markov chain. 
	Note that all policies have the same set of memory states \(M = (\pathsfin(\G) \times \mathfrak{C}_\kappa)\). 
	A state of the Markov chain is \(\zug{\zug{m_1, \ldots, m_N}, v} \in M^N \times V\). 
	Like before, a state is called consensus if \(m_j = m_{j'}\) for all \(j, j' \in [N]\). 
	The probability of transitioning from \(\zug{\zug{m_1, \ldots m_N}, v}\) to \(\zug{\zug{m_1', \ldots m_N'}, v'}\) is 
	\begin{align*}
		\sum\limits_{j = 1}^{N} \sigma(j) \cdot \pi_j(v, m_j, (j, u))(v', m_j') \prod_{j \neq j'} \pol_{j'}(v, m_{j'}, (j, u))(v', m_{j'}')
	\end{align*}
	where \(u = \eta\)
\end{proof}}

\stam{
\begin{theorem}[Parity specifications admit universal local policies]
   The class of schedulers containing schedulers $\sigma:[N]^\star\mapsto [N]$ such that there is a positive constant $\epsilon$ such that $\Pr[j = \sigma(u)]>\epsilon$ for all $u\in [N]^\star$, and $\forall j\in [N]$, admits universal local policies for parity objectives on strongly connected graphs. 
\end{theorem}

\begin{proof}(Sketch)
	The proof structure is similar to that of Thm~\ref{thm: coBuchiuniv}, except conjunction of parity specifications x1x	is not necessarily a parity specification itself. 
	Thus, we do not have a counterpart of Lemma~\ref{lem: coBucisimplecycle}.
	Therefore, a module \(i\) now need to keep a simple cycle \(\rho_j\) for each \(j = 1, 2, \ldots, N\), such that maximum parity of each of \(\rho_j\) according to the priority function corresponding to Module \(i\) is even. 
	Intutively, Module \(i\) assumes that \(\rho_j\) is additionally winning for Module \(j\).
	Whenever, there is a revealation by a module that the guess is incorrect, the policy updates its memory by choosing another \(N\)-tuples of such simple cycles uniformly at random, and so on. 
	We prove that eventually all the modules almost-surely agree on their guessed target cycle, and thus the composition of a stochastic scheduler along with this scheduler-aware policies satisfies the conjuction of the parity specifications. 
\end{proof}}

\section{Discussion}
% We conclude by summarising our main technical results across different prefix-dependent families of objectives.

% \begin{description}
%   \item[Reachability and B\"uchi objectives.]
%   For both reachability and B\"uchi objectives on strongly connected graphs, we identify a broad class of \emph{path-aware} (regular) strategies that admit universal schedulers. 
%   We show that these are policies that visit the designated B\"uchi vertices infinitely often and, moreover, have a uniformly bounded expected hitting time after every prefix. 
%   Finite-memory strategies that fulfil their respective B\"uchi objectives after any prefix are one such class satisfying this property.

%   \item[Co-B\"uchi objectives.]
%   For co-B\"uchi objectives, we show that \emph{scheduler-aware} policies---constructed so that their memory states correspond to simple cycles (lassos) in the graph---can be composed under any fair scheduler. 
%   This class captures the minimal scheduler information necessary for almost-sure satisfaction.

%   \item[Parity objectives.]
%   For general parity objectives, we establish that \emph{full-history aware} policies are required. 
%   In this setting, each policy’s memory corresponds to an $N$-tuple of local strategies (one for each module), as discussed in \cref{sec:parity}. 
%   Such policies remain composable under any fair scheduler.

%   \item[Safety shields.]
%   Finally, we discuss how safety shields can be implemented for compositions of policies that are mutually safety-closed, ensuring that the composed execution never leaves the safe region.
% \end{description}

We established sufficient conditions under which local policies for different $\omega$-regular objectives can be composed under a common scheduler to almost-surely satisfy all objectives, showing that B\"uchi, co-B\"uchi, and parity objectives require progressively richer scheduler awareness.

Although our focus is on multi-parity objectives, the framework and the existence results for universal local policies naturally extend to any class of specifications implementable via finite-memory strategies, provided the memory bound is common knowledge among all modules. 
In this setting, each module may guess a finite-memory strategy per agent (rather than a memoryless one) whose bounded memory size is shared across modules.

Interestingly, while our analysis in \cref{sec:parity} shows that full-history awareness is necessary in general, there are special cases that require less information. 
For instance, when there are only two agents with parity objectives, each can infer the scheduler’s choice from the observed transition: if the transition corresponds to its own proposed action, the agent was scheduled; otherwise, the other agent was. 
Hence, no explicit communication from the scheduler is needed in this case.

For parity objectives, the memory size of each agent typically scales with the number of agents, as each agent must maintain a guess of a cycle for every other agent (including itself). 
However, if an agent has additional knowledge—e.g., that some others have B\"uchi or co-B\"uchi objectives—this requirement can be reduced. 
When a parity objective is combined with a (co-)B\"uchi objective, satisfiability can be achieved via a single simple cycle, eliminating the need for additional memory components.

Although we establish the existence of such universal schedulers, our work does not address their convergence properties. In particular, the expected convergence time of the presented protocol is likely to be exponential in the size of the underlying system. Improving this convergence time remains is not discussed in this paper. We conjecture that techniques from reinforcement learning or stochastic approximation could be adapted to accelerate convergence while preserving correctness guarantees, and is left for future work.

 \bibliographystyle{splncs04}
 \bibliography{refs}
\appendix
%\section{Appendix for~\cref{sec:nonexist}}\label{appn: A}
\section{Appendix for Section~\ref{sec:framework}}\label{appn: A}
We first state well-known results from measure theory that we use in our proofs.
\begin{theorem}[Carathéodory Extension Theorem \cite{Bil95}] 
Let $\Omega$ be a set and $\mathcal{F}_0$ be a field of subsets of $\Omega$. Let $P$ be a probability measure on $\mathcal{F}_0$. There exists a unique probability measure $Q$ on $\sigma(\mathcal{F}_0)$ such that $Q(A) = P(A)$ for all $A \in \mathcal{F}_0$.
\end{theorem}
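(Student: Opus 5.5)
The plan is the standard outer-measure construction for existence, followed by Dynkin's $\pi$–$\lambda$ theorem for uniqueness. Throughout, "probability measure on $\mathcal{F}_0$" means that $P(\Omega)=1$ and that $P$ is countably additive whenever a countable disjoint union of sets of $\mathcal{F}_0$ again lies in $\mathcal{F}_0$.

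\textbf{Step 1: the outer measure.} For arbitrary $A\subseteq\Omega$ define
\[
P^*(A)=\inf\Big\{\textstyle\sum_n P(A_n) : A_n\in\mathcal{F}_0,\ A\subseteq\bigcup_n A_n\Big\}.
\]
Monotonicity is immediate, and $P^*(\emptyset)=0$. For countable subadditivity, cover each $B_k$ by $\mathcal{F}_0$-sets whose total mass is within $\epsilon/2^k$ of $P^*(B_k)$, then merge these covers.

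\textbf{Step 2: $P^*$ agrees with $P$ on $\mathcal{F}_0$.} The inequality $P^*(B)\le P(B)$ follows from the trivial cover of $B$ by itself. For the reverse inequality, take any cover $B\subseteq\bigcup_n A_n$ and disjointify it:
\[
C_n=B\cap A_n\setminus\textstyle\bigcup_{m<n}A_m .
\]
Each $C_n$ lies in $\mathcal{F}_0$ because $\mathcal{F}_0$ is a field, and the $C_n$ are disjoint with union $B\in\mathcal{F}_0$. Countable additivity of $P$ on the field then gives $P(B)=\sum_n P(C_n)\le\sum_n P(A_n)$. This is the only place where countable, rather than finite, additivity of $P$ is used.

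\textbf{Step 3: Carathéodory-measurable sets.} Let $\mathcal{M}$ be the family of sets $E$ such that
\[
P^*(A)=P^*(A\cap E)+P^*(A\setminus E)\quad\text{for all } A\subseteq\Omega .
\]
I will show two things:
\begin{itemize}
\item[(a)] $\mathcal{M}$ is a $\sigma$-field and $P^*$ is countably additive on it;
\item[(b)] $\mathcal{F}_0\subseteq\mathcal{M}$.
\end{itemize}

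For (a), closure under complements is by symmetry of the defining condition. Closure under finite unions follows by applying the splitting condition twice. For countable disjoint unions $E=\bigcup_n E_n$, induction gives
\[
P^*(A)\ \ge\ \textstyle\sum_{n\le k}P^*(A\cap E_n)+P^*(A\setminus E)
\]
for every $k$. Letting $k\to\infty$ and applying subadditivity closes the chain of inequalities. This simultaneously shows $E\in\mathcal{M}$ and, taking $A=E$, countable additivity.

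For (b), fix $B\in\mathcal{F}_0$ and any cover $\{A_n\}$ of $A$ from $\mathcal{F}_0$. Split each $A_n$ as $A_n\cap B$ and $A_n\setminus B$, both in $\mathcal{F}_0$, and use finite additivity of $P$ to get
\[
\textstyle\sum_n P(A_n)\ \ge\ P^*(A\cap B)+P^*(A\setminus B).
\]
The reverse inequality is subadditivity.

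Consequently $\sigma(\mathcal{F}_0)\subseteq\mathcal{M}$, and $Q:=P^*|_{\sigma(\mathcal{F}_0)}$ is a probability measure extending $P$ by Step 2.

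I expect step (a) to be the main obstacle. It is the one genuinely delicate computation, in particular the limiting argument that turns the finite splitting identities into the countable one. I would handle it exactly as sketched, by proving the finite identity $P^*(A\cap\bigcup_{n\le k}E_n)=\sum_{n\le k}P^*(A\cap E_n)$ first.

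\textbf{Step 4: uniqueness.} Suppose $Q_1$ and $Q_2$ are probability measures on $\sigma(\mathcal{F}_0)$ that both agree with $P$ on $\mathcal{F}_0$. Let $\mathcal{L}=\{A : Q_1(A)=Q_2(A)\}$. Then $\mathcal{L}$ is a $\lambda$-system:
\begin{itemize}
\item it contains $\Omega$, since both measures give it mass $1$;
\item it is closed under proper differences, by finiteness of the measures;
\item it is closed under increasing unions, by continuity from below.
\end{itemize}
It also contains the field $\mathcal{F}_0$, which is in particular a $\pi$-system. Dynkin's $\pi$–$\lambda$ theorem therefore yields $\sigma(\mathcal{F}_0)\subseteq\mathcal{L}$, so $Q_1=Q_2$.
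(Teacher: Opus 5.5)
Your proposal is correct. It is the standard outer-measure construction, in which countable additivity of $P$ on the field is used exactly once, to get $P^*=P$ on $\mathcal{F}_0$, followed by Dynkin's $\pi$--$\lambda$ argument for uniqueness. The paper does not prove this theorem itself; it cites Billingsley's textbook, whose proof follows essentially the same route, so there is nothing to add.
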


\begin{lemma}[Borel-Cantelli lemmas \cite{Bor09,Can17,Shi16}]\label{lemma:borellcantelli}
    Let $E_1,E_2\dots,$ be an infinite sequence of events in some probability space.
    \begin{description}
    	\item[First lemma:] If $\sum_{n=1}^\infty\Pr[E_n]<\infty$, then the probability that infinitely many events occur is $0$, namely $\Pr[\limsup_{n\to\infty}E_n] = 0$.
    	\item[Second lemma:] If the events are all \emph{independent} and $\sum_{n=1}^\infty\Pr[E_n] = \infty$, then almost-surely infinitely many of them occur, namely $\Pr[\limsup_{n\to\infty}E_n] = 1$.
    \end{description}  
\end{lemma}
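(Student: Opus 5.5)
We would prove the two parts separately, each directly from the definition $\limsup_{n\to\infty}E_n=\bigcap_{m\ge 1}\bigcup_{n\ge m}E_n$, using only countable subadditivity, continuity of measure, and (for the second part) independence.

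\emph{First lemma.} Fix any $m\ge 1$. Since $\limsup_{n\to\infty}E_n\subseteq\bigcup_{n\ge m}E_n$, monotonicity and countable subadditivity give
\[
\Pr\Bigl[\limsup_{n\to\infty}E_n\Bigr]\le\sum_{n\ge m}\Pr[E_n].
\]
The right-hand side is the tail of a convergent series, so it tends to $0$ as $m\to\infty$. The left-hand side does not depend on $m$, hence it equals $0$.

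\emph{Second lemma.} We pass to complements:
\[
\Bigl(\limsup_{n\to\infty}E_n\Bigr)^c=\bigcup_{m\ge1}\bigcap_{n\ge m}E_n^c .
\]
By countable subadditivity, it suffices to show that $\Pr\bigl[\bigcap_{n\ge m}E_n^c\bigr]=0$ for each fixed $m$.

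The complements $E_n^c$ are again independent, so for every $M\ge m$
\[
\Pr\Bigl[\bigcap_{n=m}^{M}E_n^c\Bigr]=\prod_{n=m}^{M}\bigl(1-\Pr[E_n]\bigr)\le\exp\Bigl(-\sum_{n=m}^{M}\Pr[E_n]\Bigr),
\]
using $1-x\le e^{-x}$. Because $\sum_n\Pr[E_n]=\infty$, the tail sum $\sum_{n=m}^{M}\Pr[E_n]$ diverges as $M\to\infty$, so the bound tends to $0$. The sets $\bigcap_{n=m}^{M}E_n^c$ decrease in $M$, so continuity from above yields $\Pr\bigl[\bigcap_{n\ge m}E_n^c\bigr]=0$. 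A countable union of null sets is null, hence $\Pr\bigl[\limsup_{n\to\infty}E_n\bigr]=1$.

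The only delicate point is the second lemma: independence must be applied to finitely many complements, and only then do we pass to the limit via continuity from above. This requires independence of the complemented family, which is standard; it follows by inclusion–exclusion from independence of the original events. Neither part presents a genuine obstacle.
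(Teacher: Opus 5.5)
Your proof is correct: both parts are the standard textbook arguments (bounding by the tail of a convergent series for the first lemma; for the second, passing to complements, using $1-x\le e^{-x}$ on finite products, then continuity from above). The paper does not prove this lemma at all---it quotes it as a classical result from the cited references and gives only an informal coin-tossing intuition---so your argument supplies the standard proof that those citations point to.
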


The Borel-Cantelli lemmas can be intuitively explained using a coin toss example.
Suppose we are tossing a coin whose bias is varying over time.
Let $p_i$ be the probability of seeing heads in the $i$-th toss.
We are interested to know: if we toss the coin infinitely many times, will we see infinitely many heads?
The first lemma says that if the infinite sum $\sum_{i=1}^\infty p_i$ converges to a finite number, then the probability of seeing infinitely many heads is zero; an example would be $p_i = 2^{-i}$.
The second lemma says that if the infinite sum $\sum_{i=1}^\infty p_i$ diverges to infinity, then the probability of seeing infinitely many heads is one; an example of this case would be $p_i$ being a constant.

\subsection{Proof of Thm.~\ref{prop:NoDetScheduler}}
\label{app:NoDetScheduler}

\NoDetSched*

\begin{proof}
	For the contrary, we suppose \(\sched\) is a universal deterministic scheduler. 
	In the following, we exhaustively explore all the possibilities of how \(\sched\) may behave. 
	 Accordingly, we provide a game graph \(\mathcal{G}\), and for \(N = 2\), a pair of objectives and policies which will violate the universality condition of \(\sched\). 
	 
	 We consider the graph \(G\) as shown in Fig~\ref{fig: determinB}, where the set of vertices are \(V = \{v, a_1, a_2\}\), and the set of edges are \(E = \{(v, a_1), (v, a_2), (a_1, v), (a_2, v), (v, v)\}\). 
	 The initial vertex is \(v\).  
	 Moreover, module 1 and 2's target set of vertices is the singleton set consisting of \(a_1\) and \(a_2\), respectively. 
	 We describe the two exhaustive cases in the following:
	 
%	 (a) After a finite prefix, \(\sched\) only schedules one of the module forever. 
%	  Suppose Policy~\(i\) chooses to go to \(a_i\) whenever it is scheduled and the game is at \(v\). 
%	 In this case, if \(\sched\) only schedules module \(i\) (for \(i \in \{1, 2\}\)) after a finite prefix, then \(\spec_{3-i}\) would not be fulfiled. 

%	 (b) \(\sched\) schedules both the modules infinitely often.
	% In this case, we do not explicitly construct the two policies, rather implictly 
    We construct 
    %show the existence of 
    two policies for which \(\sched\) fails to maintain the universality condition.  
	 % Since, \(\sched\) is a deterministic scheduler, there is a fixed infinite sequence \(p_1 p_2 \ldots \in \{1, 2\}^\omega\) that it produces. 
     Let the infinite sequence $p_1p_2\dots \in \{1,2\}^\omega$ be the unique path produced by the deterministic scheduler $\sched$.
     The only outgoing edge from vertex $a_1$ as well as from $a_2$ is the central vertex $v$. 
%	 It is clear from the graph that when the game is at vertex \(a_1\) or \(a_2\), none of the module has any choice but to choose the edge that makes the next vertex \(v\).  
	 %It is possible that there exists a 
     We define policy \(\pol_1\) which behaves in the following manner: at the $j^\textsuperscript{th}$ iteration, that is, after $j$ time-steps have elapsed (assuming the game is at vertex \(v\)), if \(p_j = 1\), then \(\pol_1\) prescribes taking the self-loop at \(v\), and if \(p_j = 2\), then \(\pol_1\) prescribes the edge \((v, a_1)\).
	 Similarly, we define \(\pol_2\) as above replacing the role of $1$ and $2$. 
	 This makes the game stay at vertex \(v\) forever for the given scheduler $\sigma$. 
	 Thus, the implementation \((\G,\pol_1,\pol_2,\sched) \not\models \spec_1 \cap \spec_2\), even though \(\spec_1 \cap \spec_2 \neq \emptyset\) and \(\pol_i \models \spec_i\) for \(i = 1, 2\). 
    \qed
\end{proof}

\subsection{Proof of Thm.~\ref{thm:no univ buchi scheduler}}
\label{app:no univ buchi scheduler}

\NoUnivBuchischeduler*

\begin{proof}

    Consider a fixed scheduler $\sigma$. 
    
    We consider the following graph $\G$ in \cref{fig:screenshotTempBuchi}, which consists of vertices $V= \{\,v_0,\; a_1,\; a_2,\; b_1,\; b_2\,\}$, and edges $E = \{\,(v,a_i),\; (a_1,a_1),(a_1,v_0),(a_1,b_1),\; (b_i,v)\mid i\in\{1,2\}\,\}$. We consider $v_0$ to be the initial vertex.

    We define the specifications $\spec_1$ and $\spec_2$ for modules $1$ and $2$ respectively by their B\"uchi sets that need to be visited infinitely often, where $\spec_1$ requires that the vertex $b_{1}$ is seen infinitely often, and $\spec_2$ requires that $b_2$ is seen infinitely often. 
    
    The policies that we consider in the proof have the following shape. 
    For module $i \in \{1,2\}$ that, when vertex $v$ is visited for the $n^\textsuperscript{th}$ time after which policy $i$ is selected, the policy $\pi_i$ choses to go to vertex $a_i$, and  the policy $\pi_i$ then counts till a large value $L(n)$ many steps and stays at vertex $a_i$ using the self-loop for $L(n)$ many steps. After this $L(n)$ many steps, then it proposes to visit vertex $b_i$. 
    This $L(n)$ is an fast-increasing sequence constructed based on the scheduler $\sigma$. 
%Guy: we don't define densely fulfils anymore
%    For any sequence $L$ from $\Nat\to \Nat$, note that such policies $\pol_i$ densely fulfils $\spec_i$ for $i\in\{1,2\}$. 

    Since we have a fixed scheduler $\sigma$, we define the following events $\mathrm{Lock}_1$ and $\mathrm{Lock}_2$
    \[
  \mathrm{Lock}_j \;=\; \{\,s\in\{1,2\}^\omega \mid \exists t\ \forall r\ge t:\ s_r=j\,\}
\] 
Since $\sigma$ induces a probability distribution $\Pr^\sigma$ on $\{1,2\}$, the values $\Pr^\sigma(\mathrm{Lock}_j)$ is well-defined for $j\in\{1,2\}$.

    We call each visit to $v$ an ``attempt'', since process $i$, attempts to visits its B\"uchi state. 
        For each ``attempt'' $n\ge 1$ and module $j\in\{1,\dots,N\}$ define
\[
  p_{n,j}(L)=\Pr_\sigma\bigl(\text{indices at the $n$-th attempt
    start forming a block of $L$ consecutive $j$'s}\bigr),
\]
\[
  p_{n,j}(\infty)=\lim_{L\to\infty} p_{n,j}(L)
  = \Pr_\sigma\bigl(\text{all indices from the $n$-th attempt start
    onward are } j\bigr).
\]
For a given scheduler~$\sigma$ one of the following is true: either~(a)~there exists a module $j$ such that $\Pr_\sigma(\mathrm{Lock}_j)>0$; or (b) $\Pr_\sigma(\mathrm{Lock}_1)=\Pr_\sigma(\mathrm{Lock}_2)=0$.

\paragraph{(a) One module is chosen forever after a point with positive probability}. There exists $j$ with $\Pr_\sigma(\mathrm{Lock}_j)>0$.
In this case, we can argue that the B\"uchi condition of the module that is not $j$ is therefore cannot be achieved with at least this positive probability. 

\paragraph{(b) No module is chosen forever after a point with positive probability, the probability of seeing either module ``forever'' is $0$.} In this case, for both modules, that is modules $j\in \{1,2\}$, we have $\Pr_\sigma(\mathrm{Lock}_j)=0$; moreover,
        for each fixed time $t\in\mathbb{N}$ and module $j \in \{1,2\}$, there are functions $L_1,L_2\,:\,\mathbb{N}\to\mathbb{N}$ such that
        \(          \lim_{L_j\to\infty}\Pr_\sigma(s_t=s_{t+1}=\cdots=s_{t+L_j-1}=j)=0.
        \)
        Therefore, we show in the claim, that one can choose an increasing sequence $L_j(n)\to\infty$ such that, for every finite collection of absolute starting times $(t_n)$ corresponding to the intended attempt-starts, the probabilities of an $L_j(n)$-long block of $j$ beginning at $t_n$ are such that their summation is finite. 
        \begin{figure}
            \centering
            \begin{tikzpicture}[>=stealth,thick,node distance=2cm]

\tikzstyle{state}=[circle,draw,minimum size=7mm,inner sep=0pt]
\tikzstyle{target1}=[circle,draw,minimum size=7mm,inner sep=0pt,fill=red!30]
\tikzstyle{target2}=[circle,draw,minimum size=7mm,inner sep=0pt,fill=blue!30]

% Nodes
\node[state] (v) {$v$};
\node[state,above right of=v] (a1) {$a_1$};
\node[state,below right of=v] (a2) {$a_2$};
\node[target1,right of=a1] (b1) {$b_1$};
\node[target2,right of=a2] (b2) {$b_2$};

% Edges
\draw[->] (v) edge[bend right=15] (a1);
\draw[->] (v) edge[bend right=15] (a2);

\draw[->] (a1) edge[loop above] ()
          (a1) -- (b1)
          (a1) edge[bend right=15] (v);

\draw[->] (a2) edge[loop below] ()
          (a2) -- (b2)
          (a2) edge[bend right=15] (v);

\draw[->] (b1) edge[bend left=15] (v);
\draw[->] (b2) edge[bend right=15] (v);

\end{tikzpicture}
            \caption{This figure is redrawn for convenience. 
            %The figure in \cref{fig:TempBuchi} redrawn for convenience. 
            Red is module 1's B\"uchi state, and Blue is module 2's B\"uchi state}
\label{fig:screenshotTempBuchi}
        \end{figure}

Using our claim, and setting $L(n) = \max\{L_1(n),L_2(n)\}$, we can construct policies having the shape discussed in the beginning of the proof.  Recall the strategy is that when v is visited for the $n^\textsuperscript{th}$ time stays at $a_i$ for $L(n)$ many steps. Then it visits $b_i$ and returns. But the process $j\neq i$ always wants to leave vertex $a_i$ back to $v$. 
The probability of visiting both $b_1$ and $b_2$ infinitely often then is $0$, due to Borel-Cantelli lemma~(\cref{lemma:borellcantelli}).

%         The rough idea is to build a policy for module $i \in \{1,2\}$ that, when v is visited for the $n^\textsuperscript{th}$ time stays at $a_i$ for some large $L_i(n)$ many steps. Then it visits $b_i$ and returns. But the process $j\neq i$ always wants to leave vertex $a_i$ back to $v$. 
%      The trick is to construct growing functions $L_i$ based on $\sigma$.

% \[
%   \mathrm{Lock}_j \;=\; \{\,s\in\{1,2\}^\omega \mid \exists t\ \forall r\ge t:\ s_r=j\,\}
% \]

\begin{claim}\label{claim:LnExists}
Assume that for all $j\in\{1,2\}$, $\Pr(\mathrm{Lock_j} = 0)$. Then for each module $j$, there is an increasing sequence $L(n)\to\infty$ (for a fixed scheduler~$\sigma$) such that, for every finite collection of $(t_n)$ corresponding to the intended attempt-starts, the probabilities of an $L(n)$-long block of scheduling the module $j$ beginning at time $t_n$ are such that their summation is finite.
\end{claim}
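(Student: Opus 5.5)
The plan is a diagonal construction. For each fixed absolute start time $t$, the probabilities of long blocks of $j$ starting at $t$ tend to $0$. We choose $L(n)$ so that, uniformly over all start times the $n$-th attempt could possibly have, the block probability is at most $2^{-n}$. The key observation is that the $n$-th attempt cannot start before time $n$. We cannot bound it from above deterministically, though, so we first truncate the range of start times using a tail bound.

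\textbf{Step 1 (pointwise decay).} Fix $j$ and $t$. The events $A_{t,L}=\{s_t=\cdots=s_{t+L-1}=j\}$ decrease in $L$, and their intersection $\bigcap_L A_{t,L}$ is contained in $\mathrm{Lock}_j$. Since $\Pr^\sched(\mathrm{Lock}_j)=0$, continuity from above gives $\Pr^\sched(A_{t,L})\to 0$ as $L\to\infty$ for each fixed $t$.

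\textbf{Step 2 (truncation of start times).} Let $T_n$ be the (random) start time of the $n$-th attempt. Before defining $T_n$, we construct $L(1),\dots,L(n-1)$ inductively. Given these, $T_n$ is an almost surely finite random variable: each attempt lasts at most $L(k)+O(1)$ steps, or ends earlier when the other module is scheduled. Hence there is a deterministic $K_n$ with $\Pr(T_n>K_n)\le 2^{-n}$.

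\textbf{Step 3 (choice of $L(n)$).} By Step~1 applied to the finitely many times $t\le K_n$, we can pick $L(n)>L(n-1)$ such that $\Pr^\sched(A_{t,L(n)})\le 2^{-n}/K_n$ for every $t\le K_n$. A union bound then gives
\[
\Pr\bigl(\text{a block of $L(n)$ $j$'s begins at } T_n\bigr)\le \Pr(T_n>K_n)+\sum_{t\le K_n}\Pr^\sched(A_{t,L(n)})\le 2\cdot 2^{-n}.
\]
These bounds are summable, which is the statement of the claim. Running the construction for $j=1,2$ and taking the maximum preserves the bounds, since $A_{t,L}$ is monotone in $L$. Then Borel--Cantelli (Lemma~\ref{lemma:borellcantelli}) yields that only finitely many attempts succeed.

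\textbf{Main obstacle.} The circularity in Step~2 is the delicate point: the distribution of $T_n$ depends on the policies, and hence on $L(1),\dots,L(n-1)$. This is why the construction must be inductive. We also need $T_n$ to be almost surely finite; if it is not, attempts stop anyway, which is harmless for the violation argument. A related point is that the union bound avoids any independence assumption between the scheduler and $T_n$. The price is that the bound in Step~3 must be uniform over all $t\le K_n$, rather than holding only for a single fixed start time.
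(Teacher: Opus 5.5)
Your proof is correct, and it takes a more careful route than the paper's. The paper treats the attempt starts $t_n$ as a given deterministic sequence. From the pointwise decay $p_{t,j}(L)\to 0$ it simply picks $L(n)>L(n-1)$ with $p_{t_n,j}(L(n))\le 2^{-n-1}$, so the sum is at most $1/2$. This is shorter, but $L(n)$ is chosen as a function of the single time $t_n$. In the application, the start of the $n$-th attempt is a random time $T_n$ whose law depends on $L(1),\dots,L(n-1)$. So the paper's argument does not by itself bound $\Pr(\text{a block of } L(n) \text{ copies of } j \text{ starts at } T_n)$, which is what Borel--Cantelli needs. Three ingredients of yours close exactly this gap without any independence assumption: the inductive order (fix $L(1),\dots,L(n-1)$, hence the law of $T_n$ and $K_n$, then choose $L(n)$), the bound uniform over all $t\le K_n$, and the union bound $\{\text{block at } T_n\}\subseteq\{T_n>K_n\}\cup\bigcup_{t\le K_n}A_{t,L(n)}$. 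They rely on the fact that the law of the schedule, and hence $\Pr^\sigma(A_{t,L})$, does not depend on the policies. The uniformity over $t$ is also what lets a single sequence $L(n)$ serve for all possible start times, as the quantifier in the statement demands. You also justify the pointwise decay by continuity from above, which the paper only asserts.

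Two small points. First, the truncation is unnecessary. As you note yourself, each attempt lasts at most $L(k)+O(1)$ steps, so $T_n\le\sum_{k<n}(L(k)+c)$ surely for a small constant $c$. You can take this as $K_n$, giving $\Pr(T_n>K_n)=0$. Your opening remark that $T_n$ cannot be bounded deterministically contradicts this. If attempts are instead counted per module, so that $T_n=\infty$ is possible, use $\Pr(K_n<T_n<\infty)$ in place of $\Pr(T_n>K_n)$. Second, the two modules must be handled jointly at each inductive step. $K_n$ depends on the values $L(k)$ actually used by both policies, so choose $L(n)$ large enough for $j=1$ and $j=2$ simultaneously given the common $K_n$. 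Running two separate constructions and taking the maximum afterwards would invalidate the separately computed $K_n$.
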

        By our assumption this limit is $0$ for every fixed $t$. Thus for each fixed $t$ we have $\lim_{L\to \infty}p_{t,j}(L)= 0$.

        We now construct $L(n)$ inductively so that $\lim_{n\to\infty} L(n) = \infty$ and $p_{t_n,j}(L(n))$ is small enough that the infinite sum is finite. Set \(L(0)=0\).
        For \(n\geq 1\) choose \(L(n)>L(n-1)\) large enough that
\[
  p_{t_n,j}\bigl(L(n)\bigr) \le 2^{-n-1}.
\]
This is possible because \(p_{t_n,j}(L)\to 0\) as \(L\to\infty\). The sequence \(L(n)\) so obtained tends to infinity and satisfies
\[
  \sum_{n=1}^\infty p_{t_n,j}\bigl(L(n)\bigr)
  \leq
  \sum_{n=1}^\infty 2^{-n-1}
  = 1/2,
\]
as required. 
% \[
        %   \sum_{n=1}^\infty p_{n,j}\bigl(L(n)\bigr) < \infty
        %   \qquad\text{for } j=1,2.
        % \]
\qed
\end{proof}

\section{Appendix for~\cref{sec:safety}}\label{appn: safety}
Let $W_i$ be the maximal set of vertices in $\G$ from which the agent $i$ has a policy to fulfil $\spec_i$.
	In other words, $\specsafe_i = \safe(W_i)$.
	Define $W\coloneqq \cap_{i\in [N]} W_i$. 
	By assumption, $W$ is nonempty, since $\vinit\in W_i$ for all $i$.	
	Let $\G[W]$ be the subgraph obtained by removing all the vertices $\V\setminus W$ from $\G$, and by removing all the edges that involve these vertices.

\begin{proposition}\label{Prop:subgraph is dead end free}
	The subgraph $\G[W]$ is deadlock-free, meaning every vertex in $W$ has a successor in $W$.	
\end{proposition}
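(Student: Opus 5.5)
The plan is to argue vertex by vertex, using Assumption~\ref{assuump:safetty closure} to produce a single infinite continuation that satisfies all objectives. Fix $v \in W$. Since $v \in W_i$ for every $i$, and since we may assume without loss of generality that every vertex of $W$ is reachable within $\G[W]$ from $\vinit$ along a finite path $\rho$ ending in $v$ that stays in $W$, I would first establish that $\rho \in \pref(\spec_i)$ for every $i$. Reachability in $\G[W]$ is the concern of the composition, since the shielded composition only ever visits such vertices. For the claim itself: $\rho$ lies in $W_i$ throughout and ends in $v \in W_i$. By definition of the winning region, Agent~$i$ has a policy from $v$ fulfilling $\spec_i$, so $\rho$ extends to an infinite path in $\spec_i$. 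This uses that $\specsafe_i = \safe(W_i)$, so the prefix $\rho$ does not itself violate $\spec_i$.

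Next, apply mutual safety-closure: since $\rho \in \bigcap_i \pref(\spec_i)$, we get $\rho \in \pref(\bigcap_i \spec_i)$. So there is an infinite path $\rho v' v'' \cdots \in \bigcap_i \spec_i$. Let $v'$ be the successor of $v$ on this path; then $(v,v') \in \E$.

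It remains to show $v' \in W$, i.e.\ $v' \in W_i$ for each $i$. The suffix $v' v'' \cdots$, together with the prefix $\rho v$, lies in $\spec_i$. In particular $v'$ is visited on a path of $\spec_i \subseteq \specsafe_i = \safe(W_i)$, hence $v' \in W_i$. Alternatively, one can follow the deterministic path from $v'$ and note that Agent~$i$ can fulfil $\spec_i$ from the history $\rho v v'$, so $v'$ is winning. Thus $v' \in W$ and is a successor of $v$ in $\G[W]$.

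The main obstacle I expect is the first step: ensuring $\rho \in \pref(\spec_i)$ for prefix-dependent $\spec_i$, where being in the winning region at $v$ should mean winnable \emph{after the history} $\rho$. I would handle this by reading $W_i$ (as the paper does) as the vertices from which the liveness part is achievable while staying safe. I would then use the decomposition $\spec_i = \safe(W_i) \cap \speclive_i$ together with the fact that $\speclive_i$ is a liveness objective, so every finite path in $W_i$ extends into $\speclive_i$ while remaining in $W_i$. If the vertex is not reachable from $\vinit$, I would restrict the claim to the reachable part of $\G[W]$, which suffices for Theorem~\ref{thm:safetyclosed}.
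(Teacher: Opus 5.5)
Your proposal is correct and is essentially the paper's argument. Both apply mutual safety-closure to a history $\rho$ that ends at $v$ and stays inside $W$, using the same identification of such histories with $\bigcap_i \pref(\spec_i)$. You extract the successor $v'\in W$ directly from a witness of $\rho\in\pref(\bigcap_i\spec_i)$, whereas the paper derives a contradiction from the absence of such a successor.

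Your restriction to vertices reachable from $\vinit$ within $\G[W]$ is actually necessary, not merely without loss of generality. A vertex of $W$ reachable only via $\V\setminus W$ (or not at all), with one successor winning only for Agent~1 and another winning only for Agent~2, is a deadlock in $\G[W]$ even though mutual safety-closure holds. The paper's proof tacitly uses the same path ending at $v$, and this reachable version is all that Theorem~\ref{thm:safetyclosed} needs.
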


\begin{proof}
	This follows from the mutual safety-closure:
	imagine, towards contradiction, that there is a vertex $v\in W$ that has no successors.
	It follows that for every successor $y$ of $v$ in $\G$, there exists a $i\in [N]$ such that $y\notin W_i$.
	But this contradicts the mutual safety closure assumption, because every path $\path$ in $\G$ ending at $y$ is in $\cap_{i\in [N]} W_i=\cap_{i\in [N]} \pref(\spec_i)$, but not in $\pref\left(\cap_{i\in [N]} \spec_i\right)$.
	We conclude that the subgraph $\G[W]$ is deadlock-free.
\end{proof}

\safetyclosed*

\begin{proof}
	First observe that $\cap_{i=1}^N\speclive_i \neq \emptyset$, which follows from the assumption that $\cap_{i=1}^n\spec_i = \emptyset$, and that $\spec_i = \specsafe_i\cap\speclive_i$ for every $i$.
	For each $i\in [N]$, we write $\speclive_i[W]$ to be the mapping of $\speclive_i$ as an objective in $\G[W]$, defined as: $\speclive_i[W] \coloneqq \speclive_i \cap \pathsinf(\G[W])$.
	We argue that $\speclive_1[W]\cap \ldots \cap \speclive_N[W] \neq \emptyset$.
	This follows from the assumption that $\spec_1\cap\ldots\cap \spec_N\neq \emptyset$ in $\G$, which implies that there is an infinite path $\path$ in $\G$ that is in $\specsafe_i$ and $\speclive_i$ for every $i \in [N]$; in other words, $\path$ is in $\cap_{i\in [N]}\specsafe_i$ as well as in $\cap_{i\in [N]}\speclive_i$.	
	We first argue that $\path$ must be in $W^\omega$.
	If this were not the case, then there would be a largest prefix of $\path$ in $W^*$ (it exists, because $\vinit\in W$), before $\path$ exits $W$.
	However, since $W=\cap_{i\in [N]} W_i$, and since, by Prop.~\ref{Prop:subgraph is dead end free}, no path in $\G[W]$ ends in a dead end, this would imply that $\path$ would exit also from one of the $W_i$-s, which would violate $\specsafe_i$.
	Therefore, $\path \in W^\omega$.
	Now since $\path$ is also in $\cap_{i\in [N]}\speclive_i$, it trivially follows that $\path \in W^\omega \cap \left[ \cap_{i\in [N]}\speclive_i \right]\cap \left[ \cap_{i\in [N]}\specsafe_i \right]= \cap_{i\in [N]} \speclive_i[W]$.

	Now we show that $\implementation'=(\G,\pol_1,\dots,\pol_N,\sigma)^\sh \modelsAS \safe(W)$. 
	Suppose, for contradiction's sake, $\Pr^{\implementation'}(\safe(W)) < 1$.
	This implies that there is a path $\path v\in \pathsfin \cap W^*$ in $\G$ such that $\Pr^{\implementation'}(\path v) > 0$, and there exists an agent $i$ who selects $v'$ as a successor of $v$ with positive probability such that $v'\notin W$.
	This means, there exists a $j\in [N]$ such that $v'\notin W_j$.
	However, by construction of shielded implementations, this is not possible, since the $j$-th agent would ``block'' $v'$ by excluding it from the safe successors sent to agent $i$, and agent $i$ must not choose $v'$ in this case: a contradiction.
	
	The above result allows us to ignore the vertices outside of $W$ and focus on $\G[W]$.
	Clearly, if the policy $\pol_i$ for every $i$ fulfills $\speclive_i$ while remaining inside $W$, by definition, it fulfills $\speclive_i[W]$, and vice versa.
	Moreover, fulfilling $\safe(W)$ automatically fulfills $\cap_{i=1}^N\specsafe_i$.
	Therefore, the claim follows.
%	 use the decoupled policies obtained from $\G[W]$ (along with the maximally permissive safety policies) as stated above.
%	Finally, by construction, $\speclive_1[W],\ldots,\speclive_N[W]$ are liveness objectives in $\G[W]$, because from all $v\in W$, all objectives are satisfiable.
%	Therefore, we can use the restrictions from the safety objectives to solve the decoupled planning problem in $\G[W]$.
	\qed
\end{proof}

\section{Appendix for~\cref{sec:buchi}}\label{appn: buchi}

\denseUniformBuchi*

\begin{proof}
Let $\G=(V,E,v_0)$, $\sched$, and $\pi_1,\ldots,\pi_N$ be $N$ path-aware policies.
We prove that for each module $i\in[N]$, under the induced measure
$\Pr^\implementation$ on $\pathsinf(\G)$, the event that the objective
$\spec_i$ is satisfied has probability~$1$.
Since a finite intersection of almost-sure events has probability~$1$,
this establishes the theorem.

\smallskip
\noindent
% \textbf{Step~1. (Scheduler fairness and activation blocks.)}
For every history $u\in [N]^\star$
and each index $j\in[N]$, we have $\sched(u)(j)\ge\epsilon>0$.
Fix any module $i$ and let $L_i$ be the uniform bound on its hitting time.
Consider the sequence of scheduler choices $P_0,P_1,\ldots$
defined by the random variables $P_k:[N]^\omega\to[N]$ selecting the
scheduler's output at time~$k$.
For each $n\geq 0$, define the event
\[
  E_n^i \;=\;
  \{\,P_n = P_{n+1} = \cdots = P_{n+L_i-1} = i\,\},
\]
that is,\ the scheduler chooses module~$i$ for $L_i$ consecutive steps starting at position~$n$.
For every history $u\in[N]^n$,
\[
  \Pr\bigl(E_n^i \,\big|\, P_0\ldots P_{n-1}=u\bigr)
  \;\ge\; \epsilon^{L_i}.
\]
Therefore, each event $E_n^i$ has strictly positive conditional probability,
uniformly bounded below by $\epsilon^{L_i}$.

\smallskip
\noindent
\begin{claim}
With probability~$1$, infinitely many of the events
$E_n^i$ occur.    
\end{claim}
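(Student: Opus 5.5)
The events $E_n^i$ are not independent, so the second Borel--Cantelli lemma (Lemma~\ref{lemma:borellcantelli}) cannot be applied directly. Instead, I will pass to a subsequence of disjoint blocks and use a conditional (L\'evy-type) argument. The key input is the uniform lower bound $\Pr(E_n^i \mid P_0\ldots P_{n-1}=u)\ge \epsilon^{L_i}$ for every history $u$, established just above.

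First, consider the disjoint starting positions $n_k = k L_i$ for $k\ge 0$. The event $E_{n_k}^i$ is determined by $P_{n_k},\ldots,P_{n_k+L_i-1}$, and these windows do not overlap. It suffices to show that infinitely many of the $E_{n_k}^i$ occur almost surely. For this, I will bound the probability that none of them occurs from some index $K$ on. Fix $K$ and $m\ge 1$. Conditioning successively on the history up to time $n_{K+m-1}$ and using the uniform bound gives
\[
\Pr\Bigl(\bigcap_{k=K}^{K+m-1} \overline{E_{n_k}^i}\Bigr) \le (1-\epsilon^{L_i})^m .
\]
The induction step is to write the event as the intersection of the first $m-1$ complements, which is measurable with respect to $P_0\ldots P_{n_{K+m-1}-1}$, with $\overline{E_{n_{K+m-1}}^i}$. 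One then applies the tower property together with $\Pr(\overline{E_{n}^i}\mid P_0\ldots P_{n-1}=u)\le 1-\epsilon^{L_i}$, which holds pointwise for every $u$.

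Letting $m\to\infty$ shows that $\Pr(\bigcap_{k\ge K}\overline{E_{n_k}^i})=0$ for every $K$. Taking a countable union over $K$, the event ``only finitely many $E_{n_k}^i$ occur'' has probability $0$. Hence infinitely many $E_{n_k}^i$, and a fortiori infinitely many $E_n^i$, occur with probability~$1$.

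The main obstacle is the bookkeeping in the conditioning step. Here the probability space is that of schedules $\Pr^\sched$, which is built from cylinder pre-measures by Carath\'eodory's extension theorem. I need to check that the product bound for the fixed finite intersection holds at the level of cylinders, where it is just the product formula for $\Pr^\sched(\Cyl(\theta))$ summed over histories. Once this holds for finite intersections, continuity from above of the measure passes it to the infinite intersection. I expect this to be routine but fiddly, and I would carry it out directly on cylinders to avoid any appeal to martingale machinery.
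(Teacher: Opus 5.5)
Your proof is correct and follows the same idea as the paper: both arguments replace the missing independence with the uniform conditional lower bound $\epsilon^{L_i}$, valid for every schedule history, and conclude that the probability of never again seeing a block of $L_i$ consecutive $i$'s is zero. The paper frames this with stopping times $T_k^i$ and only asserts that step, whereas your $(1-\epsilon^{L_i})^m$ bound over the disjoint windows $n_k = kL_i$ supplies the justification; it also directly yields the disjoint exclusive blocks that the next claim relies on.
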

\begin{claimproof}
  Due to the events not being independent, instead of using Borel-Cantelli's second lemma directly, we instead define the stopping times
$T_0^i = 0$ and inductively
$T_{k+1}^i = \min\{\,n > T_k^i \mid E_n^i \text{ occurs}\,\}$. 
Because $\epsilon^{L_i}>0$, the conditional
probability that no $E_n^i$ ever occurs after time $T_k^i$ is zero.
This implies that the event
``only finitely many $E_n^i$ occur'' has measure~$0$.
Equivalently,
\[
  \Pr^\implementation(E_n^i \text{ infinitely often}) = 1.
\]
Hence, almost surely, the scheduler produces infinitely many disjoint
blocks of length $L_i$ in which module~$i$ is scheduled exclusively.
\end{claimproof}

\begin{claim}
    Conditioned on $E^i_n$ occuring infinitely often, the B\"uchi condition is satisfied almost-surely.
\end{claim}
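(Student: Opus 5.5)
The plan is to condition on the entire infinite schedule and then argue about the policies' randomness alone. The proof of Thm.~\ref{thm:denseUniformBuchi} already shows that infinitely many events $E^i_n$ occur. First, though, I would replace the block length $L_i$ by $2L_i$. Expected hitting time at most $L_i$ does not guarantee a hit within $L_i$ steps. By Markov's inequality, however, for every finite path $\alpha$ we have $\Pr^{\pi_i}(\tau_{\pi_i}(\alpha) > 2L_i) \le \mathbb{E}_{\pi_i}[\tau_{\pi_i}(\alpha)]/(2L_i) \le 1/2$. So I would redefine $E^i_n$ as $\{P_n=\cdots=P_{n+2L_i-1}=i\}$. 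The previous claim goes through verbatim, with $\epsilon^{2L_i}$ in place of $\epsilon^{L_i}$. From the infinitely many occurrences I would extract infinitely many pairwise disjoint blocks, with start times $T_1<T_2<\cdots$ and $T_{k+1}\ge T_k+2L_i$.

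Next, fix an infinite schedule $s\in[N]^\omega$ that has infinitely many such disjoint blocks; by the previous claim these schedules have $\Pr^\sched$-measure one. The scheduler is oblivious to the graph and draws its randomness independently of the policies. Hence, conditioned on $s$, the measure $\Pr^\implementation$ of \eqref{eq:implementation:probability over cylinder sets} becomes the law of a process in which, at time $t$, the next vertex is drawn from $\pi_{s_t}$ applied to the current path. Here it matters that all $\pi_j$ are path-aware, so that $\pi_{s_t}$ depends only on the path. Consequently, during the block $[T_k, T_k+2L_i)$, the path evolves exactly as it would under $\pi_i$ alone from the prefix $\alpha_k$ reached at time $T_k$. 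The choices of the unscheduled agents do not influence the path.

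Let $H_k$ be the event that a vertex of the B\"uchi set $B_i$ is visited inside the $k$-th block. By the Markov bound, $\Pr(H_k^c \mid \mathcal{F}_{T_k}, s) \le 1/2$, where $\mathcal{F}_{T_k}$ is the $\sigma$-algebra of the path up to time $T_k$. The bound holds uniformly in the prefix $\alpha_k$, because the hitting-time bound is a supremum over all finite paths. Chaining these conditional bounds over the disjoint blocks, I would get, for every $k$ and $m$,
\[
\Pr\Bigl(\textstyle\bigcap_{l=k}^{k+m} H_l^c \Bigm| s\Bigr) \le 2^{-(m+1)}.
\]
Letting $m\to\infty$ shows that, for every $k$, some block after the $k$-th one hits $B_i$, almost surely. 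Hence $H_k$ occurs infinitely often almost surely, and $\spec_i$ holds. Finally, I would integrate over $s$ against $\Pr^\sched$ to remove the conditioning.

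The main obstacle is making the conditioning rigorous. One must check that the product formula \eqref{eq:implementation:conditional probability} really does give a regular conditional law given the schedule. One must also check that the block start times $T_k$, which are determined by $s$ alone, are valid stopping times for the path filtration under that conditional law. I would handle both points by working directly with cylinder sets: for fixed $s$, the conditional measure is again generated by a function $f:\pathsfin(\G)\to\mathcal{D}(\V)$, as in the preliminaries, namely $f(v_0\ldots v_t)=\pi_{s_t}(v_0\ldots v_t)$. The block argument is then an ordinary computation for a single time-inhomogeneous process.
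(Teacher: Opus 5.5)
Your proof is correct and is essentially the paper's argument made rigorous: infinitely many blocks in which only agent~$i$ is scheduled, each giving a chance of hitting $B_i$ that the hitting-time bound controls. The paper only asserts that each block succeeds with ``positive probability'', which alone would not suffice; your conditioning on the schedule plus Markov's inequality supplies the needed uniform per-block bound. Doubling the block length is optional, since $\tau\ge 1$ and $\mathbb{E}[\tau]\le L_i$ already force $\Pr(\tau\le L_i)\ge 1/L_i$.
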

\begin{claimproof}
    Fix such a block of length $L_i$ in which only module~$i$ is active.
Assuming uniform bounded hitting time,
starting from any vertex $v$ and any policy $\pi_i$ executed for $L_i$ consecutive steps guarantees that the corresponding path visits the target set $B_i$ with positive probability.
Because there are infinitely many such exclusive activation blocks for
module~$i$, it follows that almost surely the infinite path generated by
the implementation visits $B_i$ infinitely often.
\end{claimproof}

\smallskip
\noindent
For reachability objectives, the same argument applies
(with ``at least once” instead of “infinitely often'').
Finally, since $\spec_1,\ldots,\spec_N$ are finitely many,
\[
  \Pr^\implementation\Bigl(\bigcap_{i=1}^N \spec_i\Bigr)
  = 1 - \Pr^\implementation\Bigl(\bigcup_{i=1}^N \overline{\spec_i}\Bigr)
  = 1,
\]
and hence the implementation $(\G,\pi_1,\ldots,\pi_N,\sched)$
almost surely satisfies the conjunction of all objectives.
\end{proof}

\section{Appendix for~\cref{sec:coBuchi}}\label{appn: coBuchi}
\subsection{Proof of Thm.~\ref{theorem:NoUnivCoBuchi}}
\label{app:NoUnivCoBuchi}

\NoUnivCoBuchi*
\begin{proof}
We give an explicit graph $\G$, specifications $\spec_1$, and $\spec_2$, and policies $\pi_1$ and $\pi_2$, independent of the scheduler $\sigma$, and then show the implementation fails to satisfy the intersection of the specifications.

Let $V = \{v_0,\; v_1,\; v_{11},\; v_{12}\},$
where the initial vertex is $v_0$, and the edges 
$E \:=\:\{\,(v_0,v_0),(v_0,v_1), (v_1,v_{11}),\ (v_1,v_{12}),(v_{11},v_0), (v_{12},v_0)\,\}$.

We define coB\"uchi specifications by their coB\"uchi sets that need to be visited only finitely often where $\spec_j$ requires that the vertex $v_{1j}$ is seen only finitely often along the infinite path.

Note that there are paths that avoid both $v_{12}$ and $v_{21}$ forever (e.g. a path that loops at $v_0$), so $\spec_1\cap\spec_2\neq\emptyset$.

\begin{figure}[h]
\centering
\begin{tikzpicture}[->,>=stealth,node distance=1cm]
  \tikzstyle{normal}=[circle,draw,minimum size=8mm]
  \tikzstyle{bad1}=[circle,draw,fill=red!50,minimum size=8mm]
  \tikzstyle{bad2}=[circle,draw,fill=blue!50,minimum size=8mm]
  \tikzstyle{shared}=[circle,draw,fill=purple!40,minimum size=8mm]

  \node[normal] (v0) {$v_0$};
  \node[normal, right=1cm of v0] (v1) {$v_1$};
  % \node[normal, left=1.5cm of v0] (v2) {$v_2$};

  \node[bad1, right=0.6cm of v1, yshift=+1cm] (v11) {$v_{11}$};
  \node[bad2, right=0.6cm of v1, yshift=-1cm] (v12) {$v_{12}$};

  % \node[shared, left=1cm of v2, yshift=+1cm] (v21) {$v_{21}$};
  % \node[normal, left=1cm of v2, yshift=-1cm] (v22) {$v_{22}$};

  \draw[purple,thick] (v0) -- (v1);
  % \draw (v0) -- (v2);

  \draw[red,thick] (v1) -- (v12);   
  \draw[blue,thick] (v1) -- (v11);  

  % \draw (v2) -- (v21);
  % \draw (v2) -- (v22);

  \draw (v11) to [out=170,in=45] (v0);
  \draw (v12) to [out=190,in=-45] (v0);
\draw (v0) to [out=60,in=120,looseness=8] (v0);

\end{tikzpicture}
\caption{Graph $\G$ used in the proof of \cref{theorem:NoUnivCoBuchi}. Figure is redrawn from \cref{figure:recoBuchiNotUniversal}. Red ($v_{11}$) is the coB\"uchi vertex for $\spec_1$; blue ($v_{12}$) is the coB\"uchi vertex for $\spec_2$; The strategy edge for module $1$ is highlighted in red, and the strategy edge for module 2 is highlighted in blue.}
\label{figure:coBuchiNotUniversal}
\end{figure}

We define deterministic policies $\pi_1,\pi_2:V^\star\to V$ below that satisfy $\spec_1$ and $\spec_2$ respectively.
\[
\begin{array}{ll}
\pi_1(v_0) = v_1, & \pi_1(v_1) = v_{12}, \\
\pi_1(v_{1j}) = v_0\ \text{for }j=1,2
\end{array}
\]
\[
\begin{array}{ll}
\pi_2(v_0) = v_1, & \pi_2(v_1) = v_{11}, \\
\pi_2(v_{1j}) = v_0\ \text{for }j=1,2
\end{array}
\]

%\SS{There is no vertex called \(v_2\) or \(v_{2j}\) for any \(j\) in the figure (or in the description), and the description does not have the edge \((v_2, v_{22})\), so removed them from previous description assuming those are typos/from previous construction}

Consider a scheduler $\sigma$. Since the policies are such that they make the same decision at all vertices other than $v_1$, we focus on the scheduler $\sigma$'s decision at vertex $v_1$.
We deal with two cases. Either one module is eventually always scheduled with non zero probability when the token is at vertex $v_1$, or with probability $1$, both modules are scheduled infinitely often. 

%Either the scheduler schedules scheduler $1$ at $v_1$ infinitely often, that is, the edge $v_{12}$ is taken infinitely often, or $v_2$.

\paragraph{Some module is eventually always scheduled at vertex $v_1$.} for some $j\in\{1,2\}$,
\[
  \Pr^\sigma(\text{eventually only }j)\:=\:\Pr^\sigma\bigl(\exists T\ \forall t\ge T:\ \text{index}_t = j\bigr) > 0.
\]
On the event that the scheduler eventually plays only $j$, the implementation $(\G,\pi_1,\pi_2,\sigma)$ follows the single policy $\pi_j$ from some time on. If \(j=1\) then from that time $T$ onward the execution follows $\pi_1$ and therefore visits $v_{12}$ (the bad state for \(\spec_2\)) infinitely often (indeed on each visit to \(v_1\) it goes to $v_{12}$, so $\spec_2$ is violated on this event. Hence the event ``scheduler eventually plays only 1'' implies violation of $\spec_2$. The probability that both \(\spec_1\) and \(\spec_2\) hold is at most $1-\Pr_\sigma(\text{eventually only }1)<1$. The holds if the scheduler eventually selects only module 2. Thus in either case the intersection is not satisfied almost surely.

\paragraph{No module is eventually always scheduled at vertex $v_1$.}
In this case, $\Pr_\sigma(\exists T\ \forall t\ge T:\ \text{index}_t = j)=0$ for $j=1,2$. Then with probability one the scheduler produces infinitely many occurrences of both indices $1$ and $2$ (otherwise, if some index occurred only finitely often with positive probability that would give a positive-probability lock event). Consequently, along almost all scheduler index-sequences, both modules are scheduled infinitely often.

Under such a scheduler sample path, both indices occur infinitely often, the execution will visit $v_{11}$ infinitely often and $v_{12}$ infinitely often (in fact infinitely many alternating
occurrences typically), so both coB\"uchi conditions are violated.
Therefore in this case the implementation does not satisfy $\spec_1\cap\spec_2$ with probability~$1$.

\end{proof}

\subsection{Proof of Thm.~\ref{thm: coBuchischeduling}}
\label{app:coBuchischeduling}

\coBuchischeduling*
\begin{proof}
	We describe the {\em global} interaction between $N$ agents  as an {\em absorbing} Markov chain \(\L\). Note that all policies have the same set of memory states $M = \pathsfin(\G) \times \mathfrak{C}_B$. A state of the Markov chain is $\zug{\zug{m_1, \ldots, m_N}, v} \in M^N \times V$. 
	%A state is \emph{sink} if \(m_j = m_{j'}\) for all \(j, j' \in \set{1, \ldots, N}\). 
	A state is called \emph{consensus} if $m_j = m_{j'}$, for all $j, j' \in \set{1,\ldots, N}$, dually every other state is reffered to as  \emph{non-consensus}.
	A consensus state \(\zug{\zug{m ,\ldots , m}, v}\) is called a sink if \(m = \zug{\epsilon, C}\) for some cycle \(C\), and the current vertex is already inside \(C\).
	Note that such a sink state corresponds to the case that all policies chooses the same cycle \(C\), thus the path that they generate will be \(C^\omega\), which satisfies all objectives since \(C \in \mathfrak{C}_{B_j}\) for all \(j \in \set{1, 2, \ldots, N}\). 
	Note that such a global state corresponds to the case that all policies choose the same path $\eta$ and cycle $C$, thus the path that they generate will be $\eta \cdot C^\omega$, which satisfies all objectives since $C \in \mathfrak{C}_{B_j}$, for all $j \in \set{1,\ldots, N}$. 
	The probability of transitioning from $\zug{\zug{m_1, \ldots, m_N}, v}$ to $\zug{\zug{m'_1, \ldots, m'_N}, v'}$ is 
	\[\sum_{j=1}^N \sched(j) \cdot \pol_j(v, m_j, \top)( m'_j, v') \cdot \prod_{j' \neq j} \pol_{j'}(v, m_{j'}, \bot)(m'_{j'}, v')\]
	%\SS{Why do we only consider only \(v'\) on the product? The policy which is not scheduled could choose any vertex, isn't it? In particular, shouldn't it be the following: 
		%\[\sum_{j=1}^N \sched(j) \cdot \pol_j(v, m_j, \top)(v', m'_j) \cdot \prod_{j' \neq j} \sum_{v'': E(v, v'')} \pol_{j'}(v, m_{j'}, \bot)(v'', m'_{j'}).\]}
	%We point out that $\pol_j(v, m_j, \top)(v', m'_j)$ is either $0$ or $1$. 

	\begin{claim}
		\(\L\) is indeed a finite Markov chain
	\end{claim}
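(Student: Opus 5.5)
The aim is to show that the process $(X_t)_{t\ge 0}$ with $X_t=\zug{\zug{m^1_t,\ldots,m^N_t},v_t}$, the joint memory states of all agents together with the current vertex, is a time-homogeneous Markov chain on a finite state space. The transition kernel is the displayed sum. The argument has three parts: finiteness, well-definedness (stochasticity) of the kernel, and the Markov property.

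\emph{Finiteness.} In the convention, $\eta$ ranges only over simple paths and $C$ only over simple cycles in $\mathfrak{C}_{B}$. So $M$ is really the finite set of pairs (simple path, simple cycle) of a finite graph, together with the suffixes $\eta[k:]$ that arise during the run. These suffixes are again simple paths, so $M$ is closed under all updates. Hence $M^N\times V$ is finite. I will phrase this carefully, since $\pathsfin(\G)$ as literally written is infinite; the point is to restrict to the reachable memory states.

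\emph{Stochasticity.} Fix a state $s=\zug{\zug{m_1,\ldots,m_N},v}$ and a scheduled index $j$. Agent $j$'s scheduled update is deterministic: it yields a unique $(m'_j,v')$, which moves along $\eta$ or along $C$. For each $j'\neq j$, the unscheduled update $\pol_{j'}(v,m_{j'},\bot)(\cdot,v')$ is a probability distribution over $M$, given the observed next vertex $v'$. In the no-conflict case it is a Dirac distribution on the advanced memory. In the conflict case it is the fixed resampling distribution over lassos starting at $v'$. Summing the product over all $(m'_1,\ldots,m'_N,v')$ therefore gives $\prod_{j'}1=1$ for each $j$, and weighting by $\sched(j)$ gives total mass $1$.

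This step depends on reading $\sched(j)$ as a probability that does not depend on the history. If the scheduler is history-dependent, I would enlarge the state with the relevant scheduler information, or argue only with the uniform lower bound $\epsilon$. The subsequent BSCC argument really needs only the support of the kernel and a uniform lower bound on positive transitions, so a time-inhomogeneous chain with such bounds would also suffice. I will note this as the fallback.

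\emph{Markov property.} This is the main obstacle. I need to show that the conditional law of $X_{t+1}$ given the whole history $X_0,\ldots,X_t$ (and the past schedule) depends only on $X_t$. The argument has three ingredients:
\begin{itemize}
\item By construction, each policy's decision at time $t$ is a function of its current memory state, the current vertex, whether it was scheduled, and fresh randomness. The full local history $(\rho,\theta,\gamma)$ enters only through $m_t$.
\item The agents' resampling coins at time $t$ are independent of each other and of the past. The scheduler's draw at time $t$ is independent of these coins.
\item Conditioning on the past therefore reduces, via equation~\eqref{eq:implementation:conditional probability}, to the product formula evaluated at $X_t$.
\end{itemize}
I would verify this by showing that cylinder probabilities factor:
\[
\Pr\bigl(X_0=s_0,\ldots,X_k=s_k\bigr)=\Pr(X_0=s_0)\prod_{t<k}P(s_t,s_{t+1}),
\]
by induction on $k$, using~\eqref{eq:implementation:probability over cylinder sets}. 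Finally, projecting to the $v$-component recovers $\Pr^{\implementation}$, so almost-sure statements about the chain transfer to the composition.
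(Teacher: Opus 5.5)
Your proposal is correct, and its stochasticity step is exactly the paper's proof. The paper fixes the scheduled index $j$ and sums out each unscheduled agent's next memory, which gives $1$ for every $v'$. It then sums the scheduled agent's deterministic move to $1$ and weights by $\sched(j)$. That row-sum check is the whole of its argument.

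What you add is needed, not decorative:
\begin{itemize}
\item \textbf{Finiteness.} The paper never justifies ``finite''. Read literally, $M=\pathsfin(\G)\times\mathfrak{C}_B$ makes the state space infinite. Your observation that only simple paths and simple cycles ever occur is what makes $\L$ finite, since simple paths are closed under taking suffixes and under resampling. Strictly the state space is $\prod_i M_i\times V$, because $\mathfrak{C}_{B_i}$ depends on the agent, but it is still finite.
\item \textbf{Scheduler dependence.} You are right that the transition formula treats $\sched(j)$ as a constant, although the paper's schedulers are history-dependent. So $\L$ is a time-homogeneous chain only for memoryless schedulers.
\end{itemize}

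Of your two remedies for history-dependent schedulers, enlarging the state by the scheduler's history generally destroys finiteness, so rely on the fallback. State it precisely, though. With a history-dependent $\sched$, the process $X_t$ need not be Markov, even time-inhomogeneously, in its own filtration: after a step where all agents propose the same vertex, you cannot tell who was scheduled. What you actually have is that the one-step conditional law of $X_{t+1}$, given the full past including the schedule, has the same support as the corresponding row of $\L$. Its positive entries are also bounded below by a uniform constant. That suffices for the absorption argument used afterwards.

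On the Markov-property part, the work is in a different place than you suggest. For the joint (memories, vertex) process generated by the memory-based semantics with an i.i.d.\ scheduler, the Markov property holds by construction. Also, \eqref{eq:implementation:probability over cylinder sets} is not the right tool for your induction, since it only describes the $V$-marginal.

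The step that actually needs an argument is your last sentence, identifying that marginal with $\Pr^{\implementation}$. The convention's policies carry hidden random memory, while composition is defined for history-based policies. So you must pass to the behavioural policy $(\rho,\theta,\gamma)\mapsto$ conditional law of the next proposal, a Kuhn-type argument. This works because $\gamma$ records the agent's own proposals, and its private coins are independent of the other agents and of $\sched$.

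The paper addresses none of these points.
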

	
\begin{claimproof}
    We show that the sum over all the outgoing non-zero transition probabilities is indeed \(1\) for any state \(l\) of \(\mathcal{L}\). 
    Fix \(l = \langle \langle m_1, \ldots, m_N \rangle, v \rangle\). 
    Let the next state be \(l' = \langle \langle m'_1, \ldots, m'_N \rangle, v' \rangle\). Then:
    
    \begin{align*}
        \sum_{l'} P(l, l') 
        &= \sum_{\vec{m}' \in M^N, v' \in V} \sum_{j=1}^N \sigma(j) \cdot \pi_j(v, m_j, \top)( m'_j, v') \cdot \prod_{k \neq j} \pi_{k}(v, m_{k}, \bot)(m'_{k}, v') \\
        &= \sum_{j = 1}^{N} \sigma(j) \sum_{v' \in V} \left( \sum_{m'_j \in M} \pi_j(v, m_j, \top)(m'_j, v') \cdot \prod_{k \neq j} \left[ \sum_{m'_k \in M} \pi_{k}(v, m_{k}, \bot)(m'_{k}, v') \right] \right) \\
        \intertext{Note that for \(k \neq j\), the sum over their local next messages \(m'_k\) equals \(1\) regardless of \(v'\). Thus the product term simplifies to \(1\):}
        &= \sum_{j = 1}^{N} \sigma(j) \sum_{v' \in V} \sum_{m'_j \in M} \pi_j(v, m_j, \top)(m'_j, v') \\
        \intertext{Similarly, the inner sum over the active agent's transition leads to 1:}
        &= \sum_{j = 1}^{N} \sigma(j) \cdot 1 \\
        &= 1.
    \end{align*}
\end{claimproof}

We also make the claim below.
	\begin{claim}\label{clm: bscc}
		Every bottom strongly connected components (BSCC) of the Markov chain only consists of sink states, and each sink state is part of some (plausibly singleton) BSCCs.
	\end{claim}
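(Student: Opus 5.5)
We need to prove Claim \ref{clm: bscc}: every BSCC of $\L$ consists only of sink states, and every sink state lies in some BSCC. The plan is to establish three structural properties of $\L$ and then combine them with finiteness and deadlock-freeness of $\L$.

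\textbf{Property (1): consensus is absorbing.} From a consensus state $\zug{\zug{m,\ldots,m},v}$, every agent computes the same intended move $\zug{m',v'}=\pol(m,v,\top)$, because the update rule is identical for all agents and deterministic when scheduled. Whichever agent $j$ is scheduled therefore chooses $v'$. Every unscheduled agent then observes exactly the vertex it expected, so it detects no conflict and also updates to $m'$. Hence each summand in the transition formula vanishes unless the target is the consensus state $\zug{\zug{m',\ldots,m'},v'}$.

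\textbf{Property (2): consensus-advance.} We must show that from every non-consensus state a consensus state is reachable with positive probability. At state $\zug{\zug{m_1,\ldots,m_N},v}$, partition the agents by their proposed next vertex. If at least two classes are nonempty, schedule an agent $j$ from the largest class, which has positive probability by fairness of $\sched$. Every agent in a different class sees a conflict and resamples. With positive probability it resamples exactly $m_j'$, the post-move memory of $j$; this relies on the random choice having full support over lassos starting at the new vertex $v'$. Agents already in $j$'s class advance deterministically along the same prefix, but they may still differ from $m_j'$ in their cycle or later path. We handle this by potential: we count agents whose memory equals $m_j'$, and that count strictly grows once every proposal class has collapsed.

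The main obstacle is the case where all agents propose the same vertex but hold different lassos, so no conflict can occur. Here the plan is to argue that the common proposals cannot coincide forever. Distinct lassos from the same vertex, both deterministic, must diverge within at most $|V|+|C|$ steps, or else they describe the same lasso. This is where simplicity of $\eta$ and $C$ is used, together with a canonical representation of the lasso (for example, $C$ started at its entry vertex). We first traverse, with probability $1$, the steps on which all agents agree. We then reach a divergence point, where a conflict resets the losers to the winner's memory. Iterating this at most $N$ times yields a consensus state. From a consensus state $\zug{\eta,C}$, the deterministic evolution reaches the sink $\zug{\epsilon,C}$ after $|\eta|$ steps.

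\textbf{Property (3): non-sink consensus states are transient.} Consensus transitions strictly shorten $\eta$ until it becomes $\epsilon$. Once at a sink, the state only cycles around $C$, so the sink states along $C$ form a single closed cycle, which is therefore a BSCC.

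Combining the pieces: (1) and (2) rule out non-consensus states from any BSCC. (3) rules out non-sink consensus states, since they cannot return to themselves. Deadlock-freeness together with the closed cycle on $C$ gives the second half of the claim. Finally, a finite Markov chain almost surely reaches a BSCC, and every sink run yields a path in $\bigcap_i \varphi_i$ because $C\in\mathfrak{C}_{B_i}$ for all $i$. This completes Theorem~\ref{thm: coBuchischeduling}.
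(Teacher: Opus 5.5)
Your architecture matches the paper's: (1) consensus is absorbing; (2) a consensus state is reachable from every non-consensus state; (3) non-sink consensus states are transient because $\eta$ shrinks. Your treatment of (1), of (3), of the deterministic agreement phases, and of the closed cycle of sinks on $C$ is fine. Your insistence on a canonical lasso representation also usefully sharpens the paper's bare ``eventually their proposed action will differ''.

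\textbf{The gap.} It lies in (2), at the step where every agent $i'$ outside the scheduled agent $j$'s proposal class resamples exactly $m_j'$ with positive probability. Under the convention, agent $i'$ resamples only lassos whose cycle lies in its \emph{own} set $\mathfrak{C}_{B_{i'}}$. So there is no full support over all lassos from $v'$. The cycle $C_j$ stored in $m_j'$ was chosen only to avoid $B_j$, and whenever $C_j$ meets $B_{i'}$ the adoption probability is $0$. This can happen whichever $j$ you pick, largest class or otherwise. A symptom: your argument for (2) never uses $\bigcap_i\varphi_i\neq\emptyset$. Yet the cycle of a consensus memory must lie in every $\mathfrak{C}_{B_i}$, so without that hypothesis there are no consensus states at all and (2) would be false. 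The paper's written proof of (2) asserts the same positivity $\pi_{j'}(v,m_{j'},\bot)(v',m')>0$. You inherit this gap rather than introduce it, but it is still a gap.

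\textbf{The missing idea.} Funnel consensus through a cycle that every agent is allowed to sample.
\begin{enumerate}
\item Since $\bigcap_i\varphi_i$ is the co-B\"uchi objective with bad set $\bigcup_i B_i$, a path satisfying it yields a simple cycle $C^*\in\bigcap_i\mathfrak{C}_{B_i}$. Strong connectivity gives a simple path to $C^*$ from every vertex.
\item At the first divergence point, schedule anyone. The conflicting agents (there is at least one) all resample the same lasso from $v'$ to $C^*$ with positive probability.
\item Fix one of them, $k$, as a permanent leader. At every later divergence point, schedule $k$. Every agent proposing a different vertex can now adopt $k$'s post-move memory, because its cycle $C^*$ is admissible for all. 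Agents already equal to $k$ stay equal under the common deterministic update.
\item By your canonical-lasso observation, an agent whose memory differs from $k$'s must diverge from $k$ within $O(|V|)$ steps.
\end{enumerate}
Hence the number of agents agreeing with $k$ strictly increases at each divergence, and consensus is reached after at most $N-1$ of them.

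Keeping the leader fixed also repairs your potential. Counting agents equal to $m_j'$ for a leader re-chosen at each divergence is not monotone, since the new leader may agree with fewer agents than the old one.
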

	%we claim that every bottom strongly connected components (BSCC) of \(\G\) only consists of sink states, and each sink state is part of some (plausibly singleton) BSCCs. 
	Since a finite Markov chain eventually reaches a BSCC with probability \(1\), it suffices to conclude that \((\G, \pi_1,\ldots, \pi_N, \sched) \modelsAS \bigwedge_{1 \leq i \leq N} \varphi_i\). \qed

	\begin{claimproof}
		We establish the above claim by showing the following three properties of the constructed Markov chain: (1) the probability of transitioning from any consensus state to a non-consensus one is \(0\) (as a corollary, it holds for the special case of a sink to a non-sink state), (2)
		from any non-consensus state, there exists a path with probability \(> 0\) to some consensus state, 
		% for any non-consensus state, there exists a consensus state such that probability of transitioning from the non-consensus state to the consensus state is \(> 0\), 
		and finally, (3) the probability of transitioning from any non-sink consensus state \(\zug{\zug{m = \zug{\eta, C}, \ldots, m}, v}\) to another non-sink consensus state \(\zug{\zug{m' = \zug{\eta', C}, \ldots, m'}, v'}\) is \(>0\), only if \(\eta' = \eta[1:]\).
		
		(1) implies that a consensus and a non-consensus state cannot be part of same SCC. 
		(2) (together with (1)) establishes that non-consensus states cannot be a part of any BSCC.
		Finally, (3) shows that two non-sink consensus states cannot be in the same SCC.   
		
		%To show (1), 
		%Let us consider an arbitrary consensus state \(\zug{\zug{m, \ldots, m}, v}\), and an arbitrary non-consensus state \(\zug{\zug{m_1', \ldots, m_N'}, v'}\) where \(m_j' \neq m_{j'}'\) for some \(j, j' \in [N]\). 
		%
		%  
		%once the memory states  probability of transitioning  from \(\zug{\zug{m_1, \ldots m_n}, v}\) to \(\zug{\zug{m_1', \ldots, m_n'}, v'}\) is \(0\), when (1) \(m_j = m_{j'}\) for all \(j, j \in [N]\), and (2) there exists \(j, j' \in [N]\) such that \(m_j' \neq m_{j'}'\). 
		
		To show (1), 
		we consider an arbitrary consensus state \(\zug{\zug{m, \ldots, m}, v}\), and an arbitrary non-consensus state \(\zug{\zug{m_1', \ldots, m_N'}, v'}\) where \(m_j' \neq m_{j'}'\) for some \(j, j' \in [N]\). 
		We argue that each summand in the expression of the transition probability is \(0\). 
		Fix an arbitrary \(j \in [N]\). 
		If \(\pi_j(v, m, \top)(m_j', v') = 0\), we are done. 
		Otherwise, we have \(\pi_j(v, m, \top) = ( m_j', v')\) (since \(\pi_j\) is deterministic), and at least one \(j' \in [N]\) such that \(m_j' \neq m_{j'}'\). 
		We consider \(\pi_{j'}(v, m, \bot)(v', m_{j'}')\), and show that this is \(0\).
		Since both \(\pi_j\) and \(\pi_{j'}\) are defined using the same \(\pi\) constructed above, \(\pi_{j'}(v, m, \top) = \pi_j(v, m, \top)\), we have \(\pi_{j'}(v, m, \top) = ( m_j', v')\). 
		Therefore, from \(m_j \neq m_{j'}'\), we have \(\pi_{j'}(v, m, \bot)( m_{j'}', v') = 0\). 
		
		To show (2), we consider an arbitrary non-consensus state \(\zug{\zug{m_1, \ldots, m_N}, v}\), and construct a path with positive probability to a consensus state in the following manner: 
		we first divide the policies \(1, 2, \ldots N\) in at most \(degree(v)\)-many (plausibly empty) sets, namely \(X_1, \ldots, X_{degree(v)}\), where each set \(X_i\) correspond to a neighbouring vertex \(u\) of \(v\) such that all the policies in \(X_i\) proposes \(u\) as its next action.  
		We first assume that at least two of these sets are non-empty, and select a set, namely \(X_l\), with the least number of policies. 
		We then select a policy from \(X_l\), namely policy \(j\). 
		Suppose \(\pi_j(v, m_j, \top) = (v', m')\). 
		We know \(\sigma(j) > 0\). 
		Moreover, for any policy \(j' \in \cup_{i \neq l} X_i\) , \(\pi_{j'}(v, m_{j'}, \bot)(v', m') > 0\).   
		Therefore, there exists an edge with positive probability from \(\zug{\zug{m_1, \ldots, m_N}, v}\) to \(\zug{\zug{m_1', \ldots, m_N'}, v'}\) with \(m_{j'} = m'\) for all \(j' \in \{j\} \cup \bigcup_{l \neq i} X_l\). 
		Since \(X_l\) is of size at most \(\frac{N}{2}\), upon traversing along this edge consensus among at least \(\frac{N}{2}+1\) many policies is achieved. 
		If \(\zug{\zug{m_1', \ldots m_N'}, v'}\) is a consensus state, we are done; otherwise, we apply the same procedure from \(\zug{\zug{m_1', \ldots, m_N'}, v'}\), and so on untill it converges to a consensus state.  
		Note that, in the second step, at least \(\frac{N}{2}+1\) many policies will be in the same set, therefore the size of smallest set could be at most \(\frac{N}{2} -1\). 
		Therefore, the procedure reaches to a consensus state within at most \(N\)-steps, if in every step there are at least two non-empty sets. 
		Finally, we argue that this assumption does not prevent reaching the consensus, albeit delay it. 
		Suppose at some step, there is only a single non-empty set, it means that all the policies agree in their action in that step.  
		If this happens at a non-consensus state, we know eventually their proposed action will differ, so till then we simply traverse along non-consensus states in the Markov Chain without improving the consensusness.  
		%To show (2), we construct a consensus state \(\zug{\zug{m', \ldots, m'}, v'}\) from an arbitrary non-consensus state \(\zug{\zug{m_1, m_2, \ldots, m_n}, v}\) such that the probability of transitioning from the latter to the former is positive. 
		%We fix a \(j \in [N]\), and define \((v', m') \coloneqq \pi_j(v, m_j, \top)\). 
		%For any \(j' \neq j\), denote \(\pi_{j'}(v, m_{j'}, \top) = (v_{j'}'', m_{j'}'')\). 
		%If \(v_{j'}'' = v'\), then \(\pi_{j'}(v, m_{j'}, \bot)(v', m') = 1\), otherwise 

		Finally, showing (3) is straightforward: since, every policy has the same memory state at a given vertex \(v\), no matter which policy is scheduled, each of them updates along the path to the cycle. 
		Since by assumption, we consider only simple path to the cycle, there can be edge with probability \(> 0\) in only one of the direction. 
	\end{claimproof}
\end{proof}

\section{Appendix for~\cref{sec:parity}}\label{appn: parity}
\subsection{Proof of Lem.~\ref{lemma: kappagood}}
\label{app:kappagood}
	\begin{proof}
		Consider an infinite path \(\gamma\) that is good for all pairty objective \(\kappa_i\). 
		Let \(a_i \in \inf(\gamma)\) for each \(i \in [N]\) be the states that maximises \(\kappa_i\). 
		Since \(\gamma\) satisfies each \(\kappa_i\)'s, \(\kappa_i(a_i)\) is even. 
		Since each of \(a_i\) is visited infinitely often in \(\gamma\), without loss of generality, we assume that there is a path \(\tau_i\) from \(a_i\) to \(a_{i+1}\) for \(i = [N-1]\), and a path \(\tau_N\) from \(a_N\) to \(a_1\) such that each vertex inside \(\tau_i\) is visted infinitely often in \(\gamma\) for every \(i \in [N]\). 
		Obtain \(\tau_i'\) from \(\tau_i\) by deleting all 
		the cycles.
		Observe the cycle \(\tau_1' \cdot \tau_2' \cdots \tau_N'\). 
		The maximal parity index in \(\kappa_i\) that the cycle traverses is \(\kappa_i(a_i)\) for each \(i\). 
		In other words, \((\tau_1' \cdot \tau_2' \cdots \tau_N')^\omega\) satisfies each parity objective \(\kappa_i\).  
		
		Fix an arbitray \(i \in [N]\), and we describe \(\theta_i\): for any vertex \(a \in \tau_i'\), and for its successor vertex \(a' \in \tau_i'\), \(\theta_i(a) = a'\); for any state \(a \in \gamma_{(i+1) \mod N}\) for which \(\theta_i\) is yet to be defined, and its successor \(a' \in \gamma_{i+1\mod N}\),  we define \(\theta_i(a) = a'\), and so on till we exhaust all the vertices of \(\gamma_{i+(N-1) \mod N}\). 
		For any other vertex \(a \in V\), we define all \(\theta_i\) to choose the same successor, one that leads closer to the cycle \(\tau_1' \cdot \tau_2' \cdots \tau_N'\).
		Note that, each \(\theta_i\) is memoryless.  
		
		By asumption \(\sigma\) schedules every agent infinitely often at every vertex, thus we note that in any path \(\delta\) with non-zero measure, \(\inf(\delta)\) is the set of vertices of the cycle \(\tau_1' \cdots \tau_N'\). 
		Since, the cycle satisfies all the parity objectives \(\kappa_i\), the implementation almost-surely satisfies \(\kappa_i\) for each \(i \in [N]\). \qed
	\end{proof}

\subsection{Proof of Thm.~\ref{thm: parityscheduling}}
\label{app:parityscheduling}

\parityscheduling*

\begin{proof}
	We describe the \emph{global} interaction between \(N\) policies as an absorbing Markov chain. 
	Note that all policies have the same set of memory states \(M = (V^V)^N\). 
	
	A state of the Markov chain is \(\zug{\zug{m_1, m_2, \ldots m_N}, v} \in (V^V)^N \times V\), where each \(m_i = \zug{\thetatwo{i}{1}, \thetatwo{i}{2}, \ldots \thetatwo{i}{N}}\). Recall that, \(\thetatwo{i}{j}\) denotes the memoryless policy that Agent \(i\) thinks Agent \(j\) would play (for \(i = j\), it is the policy Agent \(i\) intends to play if scheduled).
	
	For each state \(l\) of \(\L\), we define a set \emph{used-edges} \(U_l\) as follows: an edge \(e = (u, w)\) is in \(U_l\) if there is an agent \(i\) such that \(\thetatwo{i}{i}(u) = v\). 
	A vertex \(u\) is called \emph{conflicting} at a state \(l\) if there exists Agent \(i, j\) such that \(\thetatwo{i}{i}(u) \neq \thetatwo{i}{j}(u)\). 
	A state \(l\) is called consensus if in the graph \(\G\) restricted to the set used-edges \(U_l\) of \(l\), a conflicting vertex \(u\) of \(l\) is not reachable from \(v\). 
	Intuitively speaking, a state is called consensus if every agent correctly guesses about every other agents's memoryless policies, possibly except for the vertices which are anyway not reachable from \(v\).  
	Every other states are referred to as non-consensus. 
	Once it reaches a consensus state, none of the memory states ever change. 
	Recall that implementation of a fair scheduler on each memory state \(m_i\), by design,  almost surely satisfies \(\kappa_i\).
	Therefore, when \(m_i = m\) for some memory state \(m\) for all \(i \in [N]\), the implementation of \(\sigma\) on \(\zug{m, \ldots, m}\) almost surely satisfies all the parity objectives. 
	It remains to show that no matter from which state of the Markov chain the procedure starts, it always reaches a consensus state. 
	We first define the transition probabilities from \(\zug{\zug{m_1, \ldots , m_N}, v}\) to \(\zug{\zug{m_1', \ldots m_N'}, v'}\) as follows:
	\[
	\sum_{j = 1}^{N} \sigma(j) \cdot \pi_j( v, m_j, (j, v'))(m_j', v') \prod_{j' \neq j} \pi_{j'}( v, m_{j'}, (j, v'))(m_{j'}', v') 
	\]
	We first show that it is indeed a Markov Chain. 
	
	\begin{claim}
		\(\L\) is a Markov chain. 
	\end{claim}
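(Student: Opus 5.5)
We need to show that $\L$, with the transition weights displayed just above, is a well-defined finite Markov chain. There are three things to check: the state space is finite, every weight is nonnegative, and for each state the outgoing weights sum to $1$. The same weights also make the Markov property hold by construction. The plan mirrors the analogous claim in the proof of Thm.~\ref{thm: coBuchischeduling}.

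\emph{Finiteness and nonnegativity.} The state space is $(V^V)^N\times V$. Each memory state $m_i$ lies in $(V^V)^N$, which is finite because $V$ is finite, so there are only finitely many memoryless internal policies. Nonnegativity holds because every factor in the formula is either a probability $\sigma(j)$ or a probability assigned by some $\pi_{j'}$.

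\emph{Markov property.} The next global state depends only on three things: the current vertex $v$, the current memory tuple, and the fresh random choices. These choices are the scheduler's pick $j$, the scheduled agent's move, and the resampling of any agent that observes a conflict. Each agent's update rule in the parity convention reads only $(v, m_i, j, v')$. Here we treat $\sigma(j)$ as a fixed fair distribution (e.g.\ memoryless, as the formula implicitly assumes).

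\emph{Summing to one.} Fix a state $l=\zug{\zug{m_1,\ldots,m_N},v}$ and sum the formula over all $l'=\zug{\zug{m'_1,\ldots,m'_N},v'}$. First swap the sums so that $\sum_j\sigma(j)$ is outermost, followed by $\sum_{v'}$. Then factor the remaining sum over $\vec m'$ into a product of independent sums over each $m'_k$.
\begin{itemize}
\item For $k\neq j$: given the observed move $(j,v')$, the map $\pi_k(v,m_k,(j,v'))(\cdot,v')$ is a probability distribution over $M$. If $v'=\thetatwo{k}{j}(v)$ it is the Dirac distribution on $m_k$; otherwise it is the resampling distribution. Either way it sums to $1$ independently of $v'$.
\item For the scheduled agent $j$: the sum over $m'_j$ and $v'$ of $\pi_j(v,m_j,(j,v'))(m'_j,v')$ equals $1$. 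This is because $v'=\thetatwo{j}{j}(v)$ is forced and the memory is kept.
\end{itemize}
Hence the total is $\sum_j\sigma(j)=1$.

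The one delicate point is the product term for the unscheduled agents. We must make sure the formula is read as the conditional distribution given the scheduled agent's move. In particular, $v'$ is not summed separately for the unscheduled agents. If the formula were read as a product of independent joint distributions over $(m'_k,v')$, it would not sum to $1$. So we will state this reading explicitly before doing the computation.
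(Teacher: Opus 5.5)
Your proposal is correct and follows essentially the paper's argument: fix a state, pull $\sum_j \sigma(j)$ to the outside, and show the remaining sum is $1$. This works because each unscheduled agent's memory update is a probability distribution over $M$ for every fixed $v'$, and the scheduled agent's deterministic move sums to $1$, leaving $\sum_j\sigma(j)=1$. Your order of summation, collapsing the unscheduled agents' sums for each fixed $v'$ before summing out $(m'_j,v')$, is actually the careful version of the paper's chain, which drops the scheduled agent's factor while still summing over $V$; your remarks on finiteness, nonnegativity, and the implicitly memoryless scheduler are sound additions rather than a different route.
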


	\begin{claimproof}
		We show that the sum over all the outgoing non-zero transition probabilities is indeed \(1\) for any state \(l\) of \(\L\). 
		Fix an \(l = \zug{\zug{m_1, \ldots m_N}, v}\). 
		Then, 
		
		\begin{align*}
			\sum_{\zug{\zug{m_1', \ldots, m_N'}, v'} \in M^N \times V}  &\sum_{j=1}^N \sched(j) \cdot \pol_j(v, m_j, (j, v'))( m'_j, v') \cdot \prod_{j' \neq j} \pol_{j'}(v, m_{j'}, (j, v'))(m'_{j'}, v')\\
			&= \sum_{j = 1}^{N} \sigma(j) \sum_{ M^N \times V} \pi_j(v, m_j, (j, v'))(m_j, v') \cdot \prod_{j' \neq j} \pol_{j'}(v, m_{j'}, (j, v'))(m'_{j'}, v')\\
			&= \sum_{j = 1}^{N} \sigma(j) \sum_{ M^{N-1} \times V}  \cdot \prod_{j' \neq j} \pol_{j'}(v, m_{j'}, (j, v'))(m'_{j'}, v')\\
			&= \sum_{j = 1}^{N} \sigma(j) \sum_{ M^{N-2} \times V} \prod_{j'' \neq \{j, j'\}} \pi_{j''}(v, m_{j''}, (j, v'))(m_{j''}', v')\sum_{M} \pi_{j'}(v, m_{j'}, (j, v'))(m_{j'}', v')\\
			&=\sum_{j = 1}^{N} \sigma(j) \sum_{ M^{N-2} \times V} \prod_{j'' \neq \{j, j'\}} \pi_{j''}(v, m_{j''}, (j, v'))(m_{j''}', v')\\
			&\qquad \vdots\\
			&= \sum_{j = 1}^{N} \sigma(j) = 1
		\end{align*}
	\end{claimproof} 
	
	Since a finite Markov chain eventually reaches a BSCC with probability \(1\), it suffices to show that consesus states are BSCCs of this Markov Chain. 
	Formally, 
	\begin{claim}
		Every BSCC of \(\L\) only consists of consensus states.
	\end{claim}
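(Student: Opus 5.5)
The plan is to establish two structural properties of $\L$ and then combine them in the standard way. A state $l=\zug{\zug{m_1,\ldots,m_N},v}$ is consensus, as defined above, when no conflicting vertex is reachable from $v$ in $\G$ restricted to the used edges $U_l$. I read ``conflicting at $u$'' as: some Agent~$i$ guesses wrongly what Agent~$j$ does at $u$, i.e.\ $\thetatwo{j}{j}(u)\neq\thetatwo{i}{j}(u)$. Similarly, I read $U_l$ as containing an edge $(u,w)$ iff $\thetatwo{i}{i}(u)=w$ for some $i$.

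The two properties are:
\begin{itemize}
\item[(P1)] Consensus is closed: from a consensus state, every positive-probability transition leads to a consensus state.
\item[(P2)] Consensus is reachable: from every non-consensus state, some consensus state is reached with positive probability.
\end{itemize}
Given these, suppose a BSCC $\mathcal{B}$ contains a non-consensus state. By (P2) it reaches a consensus state, which must also lie in $\mathcal{B}$ since $\mathcal{B}$ is bottom. By (P1) that state can never return to a non-consensus state, contradicting strong connectivity of $\mathcal{B}$. Hence every BSCC consists only of consensus states.

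For (P1), fix a consensus state and let Agent~$j$ be scheduled, moving to $v'=\thetatwo{j}{j}(v)$. The vertex $v$ is trivially reachable from itself, so it is not conflicting. Hence every Agent~$i$ has $\thetatwo{i}{j}(v)=v'$ and observes no conflict. No memory changes, so $U_{l'}=U_l$. Moreover $(v,v')\in U_l$, so everything reachable from $v'$ in $U_l$ is already reachable from $v$, and no conflicting vertex appears. This step is routine.

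For (P2), I will use Lemma~\ref{lemma: kappagood}. It gives a single tuple $\theta^*=\zug{\theta_1,\ldots,\theta_N}$ with $(\G,\theta_1,\ldots,\theta_N,\sched)\modelsAS\kappa_i$ for every $i$, so $\theta^*$ is an admissible resampling choice for every agent. I assume the convention's random choice gives positive probability to every admissible tuple, in particular to $\theta^*$. I then build the positive-probability path in two moves:
\begin{enumerate}
\item Starting from a non-consensus state, follow used edges to a reachable conflicting vertex $u$. Each used edge $(x,\thetatwo{k}{k}(x))$ is realized by scheduling Agent~$k$, which has probability at least $\epsilon>0$ by fairness. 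At non-conflicting vertices along the way, memories do not change.
\item At $u$, schedule the Agent~$j$ whose move is mispredicted, so that every agent observing the conflict resamples. I take the branch in which each resampler picks $\theta^*$.
\end{enumerate}
Since every resampling goes to $\theta^*$, an agent already holding $\theta^*$ never leaves it. So the number of agents holding $\theta^*$ is nondecreasing along the constructed path. If all $N$ agents hold $\theta^*$, the state is consensus and we are done.

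The main obstacle is strict progress: showing that while the state is non-consensus, some reachable conflict is observed by an agent not yet holding $\theta^*$. I would argue as follows.
\begin{itemize}
\item If the reachable conflict involves a non-holder $i$ mispredicting a holder $j$ (or a non-holder $j$), scheduling $j$ there makes $i$ resample, and the count increases.
\item If every reachable conflict is a holder mispredicting a non-holder $j$, scheduling $j$ at that vertex changes only holders, keeping them at $\theta^*$. The path then continues from $\thetatwo{j}{j}(u)$, and I iterate.
\end{itemize}
The delicate part is bounding such ``idle'' rounds. For this I will borrow the potential argument from the co-B\"uchi case (property (2) in the proof of Theorem~\ref{thm: coBuchischeduling}): track, for each agent $i$, the sets $X_i$ of agents whose guess of $\theta_{i,i}$ differs from it, and show that along some positive-probability path one $X_i$ strictly shrinks within boundedly many steps. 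Strong connectivity of $\G$ together with $U_l$ is what I expect to make a suitable conflict reachable. If this bookkeeping fails, the fallback is to prove directly that ``all agents hold $\theta^*$'' is reachable. To do so, I would find a vertex where the scheduled holder's $\theta^*$-move differs from each remaining non-holder's own prediction, which is possible whenever that non-holder's tuple differs from $\theta^*$ on its reachable part.
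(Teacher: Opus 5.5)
\paragraph{Verdict.} The reduction to (P1) and (P2) is sound, and so are your proof of (P1) and the first move of (P2). Unlike the paper's item~(1), your (P1) argument actually uses the reachability-based notion of consensus. Resampling to the common tuple $\theta^*$ of Lemma~\ref{lemma: kappagood} also avoids a flaw in the paper's argument. The paper lets mispredicting agents adopt the scheduled agent's tuple $m_i$, although $m_i$ need not be admissible for their own objective.

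\paragraph{The gap.} The gap is the strict-progress step, and it cannot be closed. In your second case, every conflict reachable in $U_l$ is a $\theta^*$-holder mispredicting a non-holder $j$. Then $j$ predicts every agent correctly at every reachable vertex, so $j$ never resamples. Holders that resample return to $\theta^*$ and still mispredict $j$. So the memory vector and $U_l$ never change, and the position stays inside the $U_l$-reachable set of $v$. Hence the set of reachable conflicts can only shrink, and iterating makes no progress. Progress requires some holder to leave $\theta^*$: either to a tuple whose guess agrees with $\theta_{j,j}$ on the reachable part, or to one whose own component provokes $j$. Nothing guarantees that such a tuple is admissible for that holder's objective. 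Your fallback fails for the same reason. The non-holder's tuple may differ from $\theta^*$ only in its own component $\theta_{j,j}$, and then there is no vertex where it mispredicts a holder.

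\paragraph{Counterexample.} This configuration can be absorbing, so the claim itself is false under the guess-based conflict rule that both you and the paper's proof use. Take the following instance.
\begin{itemize}
\item Vertices $V=\{c,x,y\}$ with edges $c\to x$, $c\to y$, $x\to c$, $y\to c$; the graph is strongly connected.
\item $N=2$, with $\kappa_1(c)=0$, $\kappa_1(x)=2$, $\kappa_1(y)=3$, and $\kappa_2\equiv 0$. The path $(cx)^\omega$ satisfies both objectives.
\end{itemize}
A memoryless policy is determined by its choice at $c$. The only tuple admissible for $\kappa_1$ is $(x,x)=\theta^*$. The tuple $(x,y)$ (guess $x$ for Agent~1, own move $y$) is admissible for $\kappa_2$.

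Now consider the state $m_1=(x,x)$, $m_2=(x,y)$. Vertex $c$ is conflicting and reachable, so the state is non-consensus.
\begin{itemize}
\item If Agent~1 is scheduled at $c$, it moves to $x$, exactly as Agent~2 predicted.
\item If Agent~2 is scheduled at $c$, it moves to $y$. Agent~1 observes a conflict and resamples, but its only option is $(x,x)$ again.
\end{itemize}
Hence the three states carrying this memory form a BSCC of $\L$ consisting of non-consensus states. Along it, $y$ is visited infinitely often, so $\kappa_1$ fails. Under your full-support assumption, this memory is drawn at time~$0$ with positive probability, so Theorem~\ref{thm: parityscheduling} fails as well. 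No bookkeeping can rescue (P2); the convention itself (the resampling rule or the notion of admissible tuple) has to change.
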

	
	\begin{claimproof}
		We establish the above statement by showing the following three properties: (1) The transition probability from any consensus state to any non-consensus state is \(0\),  (2) For any non-consensus state, there exists a path with probability \(> 0\) to some consensus state, and finally, (3) there exists at least one consensus state in \(\L\). 
		Since, \(\L\) has no dead-end by design, this proves the claim. 
		
		To show (1), 
		we consider an arbitrary consensus state \(\zug{\zug{m, \ldots, m}, v}\), and an arbitrary non-consensus state \(\zug{\zug{m_1', \ldots, m_N'}, v'}\) where \(m_j' \neq m_{j'}'\) for some \(j, j' \in [N]\). 
		We argue that each summand in the expression of the transition probability is \(0\). 
		Fix an arbitrary \(j \in [N]\). 
		If \(\pi_j(v, m, (j, v'))(m_j', v') = 0\), we are done. 
		Otherwise, we have \(\pi_j(v, m, (j, v')) = (m_j', v')\) (since \(\pi_j\) is deterministic in this case), and at least one \(j' \in [N]\) such that \(m_j' \neq m_{j'}'\) (since we consider a non-consensus state). 
		We consider \(\pi_{j'}(v, m, (j, v'))(m_{j'}', v')\), and show that this is \(0\).
		Since both \(\pi_j\) and \(\pi_{j'}\) are defined using the same \(\pi\) constructed above, \(\pi_{j'}(v, m, (j, v')) = \pi_j(v, m, (j, v'))\), we have \(\pi_{j'}(v, m, (j, v')) = (m_j', v')\). 
		Therefore, from \(m_j \neq m_{j'}'\), we have \(\pi_{j'}(v, m, (j, v'))(v', m_{j'}') = 0\).
		
		To show (2), we consider an arbitrary non-consensus state \(l = \zug{\zug{m_1, \ldots, m_N}, v}\), with \(m_i = \zug{\thetatwo{i}{1}, \ldots \thetatwo{i}{N}}\) and construct a path with positive probability to a consensus state in the following. 
		At any state \(s\), for each agent \(i\), we denote by \(X_i^{(s)}\) the set of agents \(i'\) for which \(\thetatwo{i}{i} \neq \thetatwo{i'}{i}\), and by \(Y_i^{(s)}\) the set of agents \(i\) for which \(\thetatwo{i}{i}(u) \neq \thetatwo{i'}{i}(u)\), where \(u\) is the location at \(s\).
		Clearly, \(Y_i^{(s)} \subseteq X_i^{(s)}\) for any \(i\) and \(s\).  
		Since \(l\) is a non-consensus state, there exists at least one agent \(i\) for which \(X_i^{(l)}\) is non-empty.  
		Otherwise, we are already at a consensus state!
		We know \(\sigma(i) > 0\). 
		Without loss of generality, we assume that \(Y_i^{(l)}\) is also non-empty.
		We consider the scenario when Agent \(i\) is scheduled, which makes the next vertex \(\thetatwo{i}{i}(v) = v'\). 
		Since every agent \(i'\) of \(Y_i^{(l)}\) makes a wrong guess about Agent \(i\), with positive probability, all of them changes their memory state to \(m_i\). 
		In other words, there exists an edge from \(l\) to \(l' = \zug{\zug{m_1', \ldots, m_N'}, v'}\), where \(m_{i'}' = m_i\) for all \(i' \in Y_i^{(l)}\), making \(X_i^{(l')} = X_i^{(l)} \setminus Y_i^{(l)}\). 
		If \(l'\) is a consensus state, we are done. 
		Otherwise, we select another agent \(j\) for which \(X_j^{(l')}\) is non-empty, and continue the procedure.
		 
		Note that,  
		once two agents reaches a consensus in their memory state, with positive probability, they never lose that consensus again. 
		This is because from that point onwords, either both of them requires change of memory (when both of their guessed policy about some third agent turns out to be wrong) or none of them changes policy.  
		So, when they require to change, they can change their memory states to yet another same memory state with positive probability.
		 
		Therefore, in this path, for any state \(s\), the size of at least one \(X_i^{(s)}\) is decreasing if there is at least one \(i\) with non-empty \(Y_i^{(s)}\).	  
		If \(Y_i^{(s)}\) is empty for all \(i\) at some non-consensus state\(s\) at some step, with positive probability, we add an edge to the path where the consensus remains the same. 
		Since by definition of consensus state, a conflicting vertex \(u\) is always reachable, we follow that path from \(v\) by scheduling the agents appropriately. 
		Yet again, we can do this with probability \(> 0\). 
		In this manner, we have a path with positive probability from a non-consensus state \(l'\) with no conflicting vertex to a non-consensus vertex \(l''\) with conflicting vertices.  
		From \(l''\), we follow the previously mentioned procedure to go to a state with improved consensus with positive probability.
		Iterating these two procedures, we get a path from non-consensus state to a consensus state with probability \(> 0\). 
		
		Finally, property (3) is established in Lemma~\ref{lemma: kappagood}. 
	\end{claimproof}
	
\end{proof}

\end{document}